\newtheorem{theorem}{Theorem}
\newtheorem{assumption}{Assumption}
\newenvironment{proof}[1][Proof]{\noindent\textbf{#1.} }{\
\rule{0.5em}{0.5em}}
\begin{document}

\title{\LARGE 
Predictability Hidden by Anomalous Observations\thanks{\small We acknowledge the
financial support of the Swiss National Science Foundation (PDFM1-114533) and the Swiss Finance Institute. We thank participants at the International Conference
on Computational Management Science 2009 in Geneva, the
International Conference on Robust Statistics 2009 in Parma, the
International Conference on Computational and Financial Econometrics
2009 in Limassol, the Prospectus Workshop in Econometrics at
Yale University 2010, the seminar at the University of Ghent 2010, the
Conference on Resampling Methods and High Dimensional Data at
Texas A\&M University 2010, ECARES seminar at ULB 2010 in
Bruxelles, the seminar at the Universities of Orleans 2012, the seminar at the University of Frankfurt 2012, the seminar at the University Bocconi 2012,
the seminar at the University of St.Gallen 2012, the European Summer Symposium in Financial Markets 2012 in Gerzensee, the Econometric Society European Meeting 2012 in Malaga,
the seminar at Northwestern University 2012, the annual SoFiE Conference 2012 in Oxford, and the seminar at the HEC Montreal 2013 for helpful comments. Part of this research was performed when the first author was visiting Yale University as a post doctoral research fellow, and the second author University of Cergy-Pontoise. Correspondence address: Fabio
Trojani, Geneva Finance Research Institute and School of Economics and Management, University of Geneva, Bd du Pont d'Arve 40, CH-1211 Geneva, e-mail: Fabio.Trojani@unige.ch.}}

\date{\small{November 2016}}

\author{Lorenzo Camponovo \\  { \it \small  University of St.Gallen and University of Surrey} \\
Olivier Scaillet \\{\it \small University of Geneva and
Swiss Finance Institute}
\\
Fabio Trojani \\ {\it \small University of Geneva and Swiss Finance
Institute}}

\maketitle

\linespread{1.5}

\newpage

\begin{abstract} \noindent
Testing procedures for predictive regressions
with
lagged autoregressive variables imply
a suboptimal
inference in presence of small violations of ideal assumptions.
We propose a novel testing framework resistant to such violations, which is consistent with
nearly integrated regressors
and applicable to multi-predictor settings, when the data may only approximately follow a predictive
regression model.
The Monte Carlo evidence demonstrates large improvements
of our approach, while the empirical analysis produces
a strong robust evidence of market return predictability hidden by anomalous observations, both in- and out-of-sample,
using predictive variables such as the dividend yield or the volatility risk premium.

\linespread{5}

\phantom{P}

\noindent {\bf Keywords:} Predictive
Regression, Stock Return Predictability, Bootstrap, Subsampling, Robustness.

\noindent{\bf JEL:} C12, C13, G1.

\end{abstract}

\newpage

\section{Introduction}
A large literature has investigated whether economic variables such as, e.g., the price-dividend
ratio, proxies of labour income, or the interest rate can predict stock returns.\footnote{See Rozeff (1984), Fama
and French (1988), Campbell and Shiller (1988), Nelson and Kim
(1993), Goetzmann and Jorion (1995), Kothari and Shanken (1997), Campbell and Yogo (2006),
Jansson and Moreira (2006), Polk, Thompson and Vuolteenaho (2006), Santos and Veronesi (2006),
Bollerslev, Tauchen and Zhou (2009), among others.} The econometric approach to test for predictability is mostly based
on a predictive regression of stock returns onto a set of lagged
financial variables; see, e.g., Stambaugh (1999). Important differences between testing approaches in the literature arise
because of the different test statistics, asymptotic theories or resampling approaches
used to test the null hypothesis of no predictability. These
differences lead in a number of cases to diverging results and
conclusions.

Mankiw and Shapiro (1986) and Stambaugh (1986) note that in a
setting with endogenous predictor and correlated innovations
standard asymptotic theory causes small sample biases that may imply
an overrejection of the hypothesis of no predictability. To
mitigate the problem, recent studies propose tests based on
bias-corrected estimators of predictive regressions. For instance,
Stambaugh (1999), and Amihud, Hurvich and Wang (2008) introduce
bias-corrected OLS estimators for the univariate and the
multi-predictor setting, respectively.

Recent work has also considered the issue of endogenous integrated or
nearly integrated predictors, following the evidence in Torous,
Valkanov and Yan (2004) that various variables assumed to predict
stock returns follow a local-to-unit root autoregressive process.
Lewellen (2004), Torous,
Valkanov and Yan (2004), and Campbell and Yogo (2006)
introduce testing procedures and more accurate unit-root and
local-to-unit root asymptotics for predictive regression models
with a single persistent predictor and correlated innovations.
More recently, Kostakis, Magdalinos and Stamatogiannis (2015) propose a new class of test statistics, by extending the instrumental variables approach in Magdalinos and Phillips (2009) to predictive regressions.


A general approach to obtain tests that are less susceptible
to finite sample biases or
assumptions on the form of their asymptotic distribution relies on nonparametric Monte Carlo simulation methods, such as the bootstrap or the subsampling.
Ang and Bekaert (2007) use the bootstrap to quantify the bias of parameter estimation in a regression of
stock returns on the lagged dividend yield and the interest rate.
In a multi-predictor setting with nearly integrated regressors,
Amihud, Hurvich and Wang (2008) compare the results of bootstrap tests to bias-corrected procedures and find the latter to have
accurate size and good power properties. Wolf (2000) introduces subsampling tests of stock return predictability in single-predictor models.

As shown in Hall and Horowitz (1996) and Andrews (2002), among others, a desirable property of bootstrap tests is that they may
provide asymptotic refinements of the sampling distribution of standard $t$-test statistics for testing the hypothesis of no predictability.\footnote{In the sense that the errors made in approximating the true finite-sample distribution of the $t$-test statistic are of lower order with respect to
the sample size than those implied by the conventional asymptotics.}
Moreover, as shown in Romano and Wolf (2001), Choi and Chue (2007), and Andrews and Guggenberger (2009, 2010), subsampling methods produce reliable inference also in predictive regression models with multiple nearly integrated predictors.

A common feature of all above approaches to test
predictability hypotheses is their reliance on
procedures that can be heavily influenced by a small
fraction of anomalous observations in the data.
For standard OLS
estimators and $t$-test statistics, this problem is
well-known since a long time; see, e.g., Huber (1981) for a review.
More recent research has also shown that
inference provided by bootstrap and subsampling tests
may be easily inflated by a small fraction of anomalous observations.\footnote{Following Huber  seminal work, several authors have emphasized the
potentially weak robustness features
of many standard asymptotic testing procedures;
see Heritier and Ronchetti (1994), Ronchetti and Trojani (2001),
Mancini, Ronchetti and Trojani (2005), and Gagliardini, Trojani and
Urga (2005), among others. Singh (1998), Salibian-Barrera and Zamar (2002), and Camponovo,
Scaillet and Trojani (2012), among others, highlight the failing robustness of the inference implied by
bootstrap and subsampling tests in i.i.d. settings.}
Intuitively, we explain this feature by the
too high
fraction of anomalous observations that is often simulated by conventional bootstrap and subsampling procedures, when compared to the actual
fraction of outliers in the original data. It is not possible to mitigate this problem simply by applying conventional bootstrap or subsampling methods to more robust estimators or test statistics. Resampling trimmed or winsorized estimators does not yield a robust resampling method (see Singh, 1988, Camponovo, Trojani and Scaillet, 2012, for detailed examples).  Hence, we consider in this paper a new robust resample methodology for time series, which allows us to develop more robust tests of predictability hypotheses in predictive regression settings.

Our robust predictive regression approach relies on robust weighted least-squares procedures that are data-driven and easily manageable.  This can be motivated economically with the presence of a time-varying ambiguity about predictive relations,
which is consistently addressed by ambiguity averse investors using
robust estimators that bound the effects of anomalous data features. Wrampelmeyer, Wiehenkamp and Trojani (2015)
show that different specifications of aversion to ambiguity in the literature imply robust optimal estimator choices related to
robust weighted least-squares. In this sense, our robust predictive regression testing approach is consistent with the preferences of investors that dislike a time-varying ambiguity in the data-generating processes for returns.
The data-driven weights in the procedure dampen, where necessary, the few data points that
are estimated as anomalous with respect to
the postulated predictive link. 
This feature automatically avoids, e.g., 
arguing ex ante that a large value of the predicted or the predictive variables is per se an anomalous observation, which is not in general
the case.  Indeed, observations linked to large values of both the predictive and the predicted variables
might be very informative about a potential predictability structure and discarding them in an ad hoc way might bias the inference. In a multivariate predictive regression setting, it is
even more difficult to determine with an informal approach
which subset of observations is
potentially anomalous, for example by eyeballing the data. A useful property of our methodology it that it embeds a formal data-driven identification of 
observations that can be excessively influential for the resulting inference on predictive relations.
The more detailed contributions to the literature are as follows.

%
%
%
%

First, using Monte Carlo simulations we find that the size and power of conventional hypothesis testing methods for predictive regressions, including
bias-corrected tests, tests implied by local-to-unity asymptotics, and conventional bootstrap and subsampling tests, are dramatically nonresistant
to even small fractions of anomalous observations in the data. Even though the test
probability of rejecting a null by chance alone
features
some degree of resistance in our Monte Carlo experiments, the test
ability to reject the null of no predictability when it is violated is in most
cases drastically reduced.

Second, we quantify
theoretically the robustness properties of subsampling and bootstrap tests in a time series context, borrowing from the concept of breakdown point,
which is a measure of the degree of resistance of a testing procedure to outliers; see, e.g., Hampel (1971), Donoho and Huber (1983), and Hampel, Ronchetti, Rousseeuw and Stahel (1986).
In Section 3.3 below, Theorem 1 (and its proof) for subsampling differs from Theorem 2 (and its proof) in Camponovo, Scaillet and Trojani (2012) valid for the i.i.d. case
since subsamples in a time series context are not generated in the same way in order to avoid breaking serial dependences. Theorem 1 (and its proof) for block bootstrap is also new, and not a straightforward extension of the results in Singh (1998) for the i.i.d. bootstrap.


Third, we develop a novel class of resampling tests of predictability,
which are resistant to anomalous observations and consistent with nearly integrated regressors
at sustainable computational costs.\footnote{Our robust resampling approach relies on the fast resampling idea put forward, among others, in Shao and Tu (1995), Davidson and McKinnon (1999), Hu and Kalbfleisch (2000), Andrews (2002), Salibian-Barrera and Zamar (2002), Goncalves and White (2004), Hong and Scaillet (2006), Salibian-Barrera, Van Aelst and Willems (2006, 2007), and Camponovo, Scaillet and Trojani (2012). The methodology is applicable to a wide set of bootstrap and subsampling simulation schemes in the literature.} 
We confirm by Monte Carlo simulations that these tests successfully limit the damaging effect of outliers, by preserving desirable finite sample
properties
in presence of anomalous observations.

Finally, we
provide a robust analysis of the recent empirical evidence on stock return predictability for
US stock market data. Following Cochrane (2008), our main purpose
is not to determine the best return-forecasting specification, but rather to study 
the amount of predictability resulting from very simple specifications, motivated by economic theory for the vast majority for the data.
We study single-predictor and multi-predictor
models, using several well-known
predictive variables suggested in the literature, such as
the lagged dividend yield, the difference between option-implied volatility and realized volatility (Bollerslev, Tauchen and Zhou, 2009), the interest rate, and the share of labor income to consumption (Santos and Veronesi, 2006). Our robust tests of predictability produce the following novel empirical evidence.

First, we
find that the dividend yield is a robust predictive variable of market returns, which is significant at the 5\% significance in all our regressions, for each subperiod, sampling frequency and forecasting horizon considered. In univariate regressions with monthly data, the lagged dividend yield is significant at the 5\% level according to the robust tests, in each window of 180 monthly observations from January 1980 to December 2010. In contrast, bias-corrected methods, local-to-unity asymptotics and conventional sabsampling tests produce a weaker and more ambiguous evidence overall, e.g., by not rejecting the null of no predictability
at the 10\% significance level
in the subperiod from January 1995 to December 2010. 
Multi-predictor regressions including
variance risk premium and labor income proxies confirm the
significant predictive power of the dividend yield. While the dividend yield is again significant at the 5\% level in all cases using the robust tests, it is significant only at the 10\% level
using the conventional
tests
in the sample period 1994-2009, within monthly predictive regressions including the difference of implied and realized volatility
as a predictive variable. It is not significant using conventional tests in the sample period 1955-2010 within quarterly predictive regression based on the share of labor income to consumption.
A weak evidence of predictability for future returns and cash flows is at odd with
the fundamental present-value relations linking expected returns, expected cash-flows and the price-dividend ratio; see also
Cochrane (2011). Therefore, our robust findings of market return predictability are consistent with the main logic of basic present-value models.

Second, we find that the difference between option-implied volatility and realized volatility is a robust predictive variable of future market returns at quarterly forecasting horizons. It is always significant at the 5\% significance level in each window of 180 observations, using both robust and nonrobust testing approaches.
This finding supports the remarkable market return forecasting ability of the variance risk premium, noted in Bollerslev, Tauchen and Zhou (2009) and confirmed in Bollerslev, Marrone and Zhou (2014) in an international context.

Third, using conventional testing approaches, we find that
the evidence of return predictability associated with the ratio of labor income to consumption is either absent or weak in the sample
periods 1955-2000 and 1965-2010, respectively. In contrast, the null of no predictability is always rejected at the 5\% significance level by our robust testing method, indicating that the weak and ambiguous evidence produced by the conventional tests is likely a consequence of their low power in presence of anomalous observations.

Fourth, we exploit the properties of our robust testing method to identify observations that might excessively influence the diverging conclusions of
conventional testing approaches. We find a fraction of less than about 5\%
of influential observations in the data, which tend to be more frequent
during the NASDAQ bubble and the more recent financial crisis. Such influential data points, including the Lehman Brothers default on September 2008,
the terrorist attack of September 2001, the Black Monday on October 1987, and the Dot-Com bubble collapse in August 2002,
are largely responsible for
the failure of conventional testing methods in uncovering the hidden predictability structures.

Finally, we find that our results cannot be substantially improved by specifying predictive relations with time-varying parameters, as we do not find evidence of structural breaks in our sample period after controlling for the impact of anomalous observations.
Motivated by the findings in Goyal and Welch (2003) and Campbell and Thompson (2008), we also show that the predictive relations detected by our robust approach generate incremental out-of-sample predictive power over a monthly forecasting horizon, improving on both the predictive relations estimated by conventional methods or those of a simple forecast based on the sample mean of market returns.

The rest of the paper is organized as follows. In Section 2, we introduce the usual
predictive regression model, and we illustrate by simulation the robustness problem of some of the recent tests of predictability proposed in the literature. In Section 3, we study theoretically the robustness properties of bootstrap and subsampling approximations. In Section 4, we introduce our robust approach, and develop robust bootstrap and subsampling tests of predictability. In Section 5, we apply our robust testing procedure to US equity data and reconsider some of the recent empirical evidence on market return
predictability. Section 6 concludes.

\section{Predictability and Anomalous Observations}\label{par}

In this section, we introduce the benchmark predictive regression model and a number of recent methods proposed for testing the predictability of stock returns. Through Monte Carlo simulations, we study the finite-sample properties of these testing procedures both in presence and absence of anomalous observations.
In Section \ref{pm}, we first introduce the model. In Section \ref{ri}, we focus on bias-corrected methods and testing procedures based on local-to-unity asymptotics.
Finally, in Section \ref{rm}, we consider testing approaches based on resampling methods.

\subsection{The Predictive Regression Model}\label{pm}
We consider the
predictive regression model,
\begin{eqnarray}
y_{t} & = & \alpha+\beta x_{t-1}+u_{t}, \label{pregmodel1}\\  x_{t}
& = & \mu + \rho x_{t-1}+v_{t},\label{pregmodel2}
\end{eqnarray}
where, $y_t$ denotes the stock return at time $t=1,\dots,n,$ and $x_{t-1}$ is an economic variable observed at time $t-1$, predicting
$y_t$. The parameters $\alpha\in\mathbb{R}$ and $\mu\in\mathbb{R}$ are the unknown intercepts of the linear regression model and the autoregressive model, respectively, $\beta\in\mathbb{R}$ is the unknown parameter of interest, $\rho\in\mathbb{R}$ is the unknown autoregressive coefficient, $u_t\in\mathbb{R}$, $v_t\in\mathbb{R}$ are error terms with $u_t=\phi v_t+e_t$, $\phi\in\mathbb{R}$, and $e_t$ is a scalar random variable.

In this setting, it is well-known that inference based on standard
asymptotic theory suffers from small sample biases, which may imply
an overrejection of the hypothesis of no predictability,
$\mathcal{H}_{0}: \beta_{0}=0$, where $\beta_0$ denotes the true value of the unknown parameter $\beta$; see Mankiw and Shapiro
(1986), and Stambaugh (1986), among others. Moreover, as emphasized in Torous, Valkanov, and Yan (2004), various state variables considered as predictors in model (\ref{pregmodel1})-(\ref{pregmodel2}) might be
well approximated by a nearly integrated process, which might
motivate a local-to-unity framework $\rho=1+c/n$, $c<0$, for the autoregressive coefficient of model (\ref{pregmodel2}), implying a nonstandard asymptotic distribution for the OLS estimator $\hat{\beta}_n$ of parameter $\beta$.

Several recent testing procedures have been proposed in order to
overcome these problems. Stambaugh (1999), Lewellen (2004), Amihud and Hurvich (2004), Polk, Thompson and Vuolteenaho (2006), and Amihud, Hurvich and Wang (2008, 2010), among others, propose bias-corrected procedures that correct the bias implied by the OLS estimator $\hat{\beta}_n$ of parameter $\beta$. Cavanagh, Elliott and Stock (1995),
Torous, Valkanov and Yan (2004), and Campbell and Yogo (2006), among others, introduce testing procedures based on local-to-unity asymptotics that provide more accurate approximations of the sampling distribution of the $t$-statistic $T_n=(\hat{\beta}_n-\beta_0)/\hat{\sigma}_n$ in nearly integrated settings, where $\hat{\sigma}_n$ is an estimate of the standard deviation of the OLS estimator $\hat{\beta}_n$. Kostakis, Magdalinos and Stamatogiannis (2015) also propose a new class of test statistics by extending the instrumental variables approach develop in Magdalinos and Phillips (2009) to predictive regressions.

\subsection{Bias Correction Methods and Local-to-Unity Asymptotic Tests}\label{ri}

A common feature of bias-corrected methods and inference based on local-to-unity asymptotics is a nonresistance to anomalous observations, which may lead to conclusions
determined by the particular features of a small subfraction of the
data.
Intuitively, this feature emerges because these approaches exploit statistical
tools that can be sensitive to
small deviations from the predictive regression model (\ref{pregmodel1})-(\ref{pregmodel2}). Consequently, despite the good accuracy under the strict model assumptions, these testing procedures may become less efficient
or biased even with a small fraction of anomalous observations in the data.

To illustrate the lack of robustness of this class of tests, we analyze through Monte Carlo simulation
the bias-corrected method proposed in Amihud, Hurvich and Wang (2008) and the Bonferroni approach for the local-to-unity asymptotic theory introduced in Campbell and Yogo (2006). We first generate $N=1,000$ samples
$z_{(n)}=\big(z_1,\dots,z_n\big)$, where $z_t=(y_t,x_{t-1})'$, of size $n=180$ according to model (\ref{pregmodel1})-(\ref{pregmodel2}), with $v_t\sim N(0,1)$, $e_t\sim N(0,1)$, $\phi=-1$, $\alpha=\mu=0$, $\rho\in\{0.9,0.95,0.99\}$, and $\beta_{0}\in\{0,0.05,0.1\}$.\footnote{These parameter choices are in line with the Monte Carlo setting studied, e.g., in Choi and Chue (2007). Unreported Monte Carlo results with $\phi=-2,-5$ are qualitatively very similar.} 
In a second step, to study the
robustness of the methods under investigation, we consider
replacement outliers random samples
$\tilde{z}_{(n)}=\big(\tilde{z}_1,\dots,\tilde{z}_n\big)$, where $\tilde{z}_t=(\tilde{y}_t,x_{t-1})'$ is
generated according to,
\begin{equation}\label{conta}
\tilde{y}_{t} = (1-p_{t})y_{t}+p_{t}\cdot y_{3max},
\end{equation}
with $y_{3max}=3\cdot\max(y_{1},\dots,y_{n})$ and $p_{t}$ is an
i.i.d. $0-1$ random sequence, independent of process
(\ref{pregmodel1})-(\ref{pregmodel2}) such that
$P[p_{t}=1]=\eta$. The probability of contamination by outliers is set to
$\eta=4\%$, which is a small contamination of the original sample, compatible with the features of the real data set analyzed in the empirical study in Section \ref{spm1}.\footnote{For the monthly data set in Section \ref{spm1}, the estimated fraction of anomalous observations
in the sample period 1980-2010
is less than about $3.87\%$.}

We study the finite sample
properties
of tests of the null hypothesis ${\cal
H}_0:\beta_{0}=0$ in the predictive regression model. In the first two rows of Figure \ref{poweracy}, we plot
the empirical frequency of rejection of null hypothesis ${\cal
H}_0$ for the bias-corrected method proposed in Amihud, Hurvich and Wang (2008) and the Bonferroni approach for the local-to-unity asymptotic theory introduced in Campbell and Yogo (2006), respectively, with respect to different values of the alternative
hypothesis $\beta_{0}\in\{0,0.05,0.1\}$, and different degree of persistence of the predictors
$\rho\in\{0.9,0.95,0.99\}$. The nominal significance level of the test is $10\%$.

The results for different degree of persistence are qualitatively very similar.
In the Monte Carlo simulation with noncontaminated samples (straight line), we find that the fraction of null hypothesis rejections
of all procedures
is quite close to the nominal level
$10\%$ when $\beta_{0}=0$. As expected, the power of the tests increases for increasing values of
$\beta_{0}$. 
In the simulation with contaminated samples (dashed line), the size of all tests remains quite close to the nominal significance level. In contrast, the presence of anomalous observations dramatically deteriorates the power of both procedures. Indeed, for $\beta_{0}>0$, the frequency of rejection of the null hypothesis for both tests is much lower than in the noncontaminated case. 
Unreported Monte Carlo results for the instrumental variable approach proposed in Kostakis, Magdalinos and Stamatogiannis (2015) produce similar findings.

The results in Figure \ref{poweracy} highlight the
lack of resistance to anomalous data of bias-corrected methods and inference based on local-to-unity asymptotics. Because of a small fraction
of anomalous observations, the testing procedures become unreliable, and are unable to reject the null hypothesis of no predictability,
even for large values of $\beta_0$.
This is a relevant aspect for applications, in which typically the statistical evidence of predictability is weak.

To overcome this robustness problem, a natural approach is to develop more resistant versions of the nonrobust tests considered in our Monte Carlo exercise. However, this task may be hard to achieve in general.\footnote{To robustify the bias-corrected procedure in Amihud, Hurvich and Wang (2008),
we would need to derive an expression for the bias of robust
estimators of regressions, and then derive the asymptotic distribution of such bias-corrected robust estimators.
For nearly integrated settings, a robustification of the procedure proposed in Campbell and Yogo (2006) would require a
not obvious extension of the
robust local-to-unity asymptotics developed in Lucas (1995, 1997) for the predictive regression model.}
A more general approach to obtain tests that are less susceptible
to finite sample biases or assumptions on their asymptotic distribution can rely on nonparametric Monte Carlo simulation methods.
We address these methods in the sequel.

\subsection{Bootstrap and Subsampling Tests}\label{rm}

Nonparametric Monte Carlo simulation methods, such as the bootstrap and the subsampling, may provide improved inferences in predictive regression model (\ref{pregmodel1})-(\ref{pregmodel2}) both in stationary or nearly integrated settings. As shown in Hall and Horowitz (1996) and Andrews (2002), for
stationary data the block bootstrap may yield improved
approximations to
the sampling distribution of the
standard $t$-statistics for testing predictability,
having asymptotic errors of lower order in sample size. Moreover, as shown in Choi and Chue (2007) and Andrews and Guggenberger (2010), we can use the subsampling to produce correct inferences in nearly integrated settings.
We first introduce block bootstrap and subsampling procedures. We then focus on
predictive regression model (\ref{pregmodel1})-(\ref{pregmodel2}) and study by Monte Carlo simulation the degree of resistance to anomalous observations of bootstrap and subsampling tests of predictability, both in stationary and nearly integrated settings.

Consider a random sample $z_{(n)}=(z_{1},\dots,z_{n})$ from a time series of random vectors $z_i\in\mathbb{R}^{d_z}$, $d_z\ge 1$, and a general
statistic
$T_{n}:=T(z_{(n)})$. Block bootstrap procedures split the
original sample $z_{(n)}$ into overlapping blocks of size
$m<n$. From these blocks, bootstrap samples $z_{(n)}^*$ of size $n$ are randomly generated.\footnote{See, e.g., Hall
(1985), Carlstein (1986), K\"unsch (1989) and Andrews (2004), among others. Alternatively, it is possible to construct the bootstrap samples
using nonoverlapping blocks.}
Finally,
the empirical distribution of statistic $T (z^*_{(n)})$ is used to estimate the sampling distribution of $T(z_{(n)})$.
Similarly, the
more recent subsampling method applies statistic $T$ directly to overlapping random blocks $z_{(m)}^*$ of size $m$ strictly less than $n$.\footnote{See Politis,
Romano and Wolf (1999), among others.}
Then, the empirical distribution of statistic $T (z^*_{(m)})$ is used to estimate the sampling distribution of $T(z_{(n)})$, under the assumption that
the impact of the block size is asymptotically negligible ($m/n\to 0$).

In the predictive regression model (\ref{pregmodel1})-(\ref{pregmodel2}), the usual $t$-test statistic for testing the null
of no predictability is $T_n=(\hat{\beta}_n-\beta_0)/\hat{\sigma}_n$.
Therefore, we can define
a block bootstrap test of the null hypothesis with the block bootstrap statistic
$T_{n,m}^{B\ast}=(\hat{\beta}_{n,m}^{B\ast}-\hat{\beta}_n)/\hat{\sigma}_{n,m}^{B\ast}$,
where $\hat{\sigma}_{n,m}^{B\ast}$ is an estimate of the standard deviation of the OLS estimator $\hat{\beta}_{n,m}^{B\ast}$ in
a random bootstrap sample of size $n$, constructed using blocks of size $m$.
Similarly, we can define a subsampling test of the same null hypothesis with the subsampling statistic $T_{n,m}^{S\ast}=(\hat{\beta}_{n,m}^{S\ast}-\hat{\beta}_n)/\hat{\sigma}_{n,m}^{S\ast}$,
where $\hat{\sigma}_{n,m}^{S\ast}$ is now an estimator of the standard deviation of the OLS estimator $\hat{\beta}_{n,m}^{S\ast}$ in
a random overlapping block of size $m<n$.

It is well-known that OLS estimators and empirical averages are very sensitive to even small fractions
of anomalous observations in the data; see, e.g., Huber (1981). Since bootstrap and subsampling tests rely on such statistics, inference based on these methods may inherit the lack of robustness. To verify this intuition, we study the finite-sample properties
of bootstrap and subsampling tests of predictability in presence of anomalous observations through Monte Carlo simulations. First we consider stationary settings.
To this end, we generate $N=1,000$ samples
$z_{(n)}=\big(z_1,\dots,z_n\big)$, where $z_t=(y_t,x_{t-1})'$, of size $n=180$ according to model (\ref{pregmodel1})-(\ref{pregmodel2}), with $v_t\sim N(0,1)$, $e_t\sim N(0,1)$, $\phi=-1$, $\alpha=\mu=0$, $\rho\in\{0.3,0.5,0.7\}$, and $\beta_{0}\in\{0,0.1,0.2\}$. We then consider also contaminated samples $\tilde{z}_{(n)}=\big(\tilde{z}_1,\dots,\tilde{z}_n\big)$ according to (\ref{conta}). We test the null hypothesis ${\cal
H}_0:\beta_{0}=0$, using symmetric bootstrap and subsampling confidence intervals for parameter $\beta$ under different values of the alternative
hypothesis $\beta_{0}\in\{0,0.1,0.2\}$.\footnote{
Section \ref{near} shows how to construct symmetric confidence intervals for the parameter of interest, based on resampling distributions.
For the selection of the block size $m$, we use the standard data-driven method proposed in Romano and Wolf (2001).}

In the first and second rows of 
Figure \ref{powerstat}, we plot the empirical frequencies of rejection of null hypothesis ${\cal
H}_0$, using the subsampling and the bootstrap, respectively. 
The nominal significance level of the test is $10\%$.
For $\beta_0=0$, the size of the tests is close to the nominal level $10\%$, while for $\beta_0=0.2$ the power increases. On the other hand, in presence of contamination the power of the tests dramatically decreases.  Indeed, for $\beta_{0}>0$, also in this case the frequency of rejection of the null hypothesis for both tests is much lower than in the noncontaminated case. 


We also study the robustness properties of the subsampling in nearly integrated settings, under
the same simulation setting of the previous section.
In the third row of Figure \ref{poweracy}, we plot
the empirical frequencies of rejection of null hypothesis ${\cal
H}_0$
for different values of the alternative
hypothesis $\beta_{0}\in\{0,0.05,0.1\}$. The nominal significance level of the test is $10\%$, as before.
With noncontaminated samples (straight line), we find
for all values of $\beta_{0}\in\{0,0.05,0.1\}$ that the frequency of rejection of subsampling tests is close to the one of the bias-corrected method and the Bonferroni approach in the previous section. For $\beta_0=0$, the size of the tests is close to the nominal level $10\%$, while for $\beta_0=0.1$ the power increases.
On the contrary, also in this case, 
contaminations with anomalous observations
strongly deteriorate the power of the
tests. 

In summary, the results in Figures \ref{poweracy} and \ref{powerstat} show that bootstrap and subsampling tests inherit, and to some extent exacerbate, the lack of robustness of OLS estimators for predictive
regressions.
To robustify the inference produced by
resampling methods,
a natural idea is to
apply conventional bootstrap and subsampling simulation schemes to a more robust statistic,
such as, e.g., a robust estimator of linear regression.
Unfortunately, as
shown in Singh (1998), Salibian-Barrera and Zamar (2002), and Camponovo, Scaillet and Trojani (2012)
for i.i.d. settings,
resampling a robust statistic does not yield a robust inference, because conventional bootstrap and subsampling procedures have an intrinsic nonresistance to outliers.
Intuitively, this problem arises because the fraction of anomalous observations generated
in bootstrap and subsampling blocks is often much higher than the
fraction of outliers in the data. To solve this problem, it is necessary
to address more systematically the robustness of resampling methods for time series.

\section{Resampling Methods and Quantile Breakdown Point}\label{sec}

We characterize theoretically the robustness of bootstrap and subsampling tests in predictive regression settings. Section \ref{bpq}
introduces the notion of a quantile breakdown point, which
is a measure of the global resistance
of a resampling method to anomalous observations.
Section \ref{bpointmc} quantifies and illustrates the quantile breakdown point of conventional bootstrap and subsampling tests in predictive regression models.
Finally, Section \ref{tlor} derives explicit bounds for quantile breakdown points, which quantify 
the degree of resistance to outliers of bootstrap and subsampling tests for predictability, before applying them to the data.

\subsection{Quantile Breakdown Point}\label{bpq}

Given a random sample $z_{(n)}$
from a sequence of random vectors $z_i\in\mathbb{R}^{d_z}$, $d_z\ge 1$,
let $z^{\ast}_{(n)}=(z_{1}^{\ast},\dots,z_{n}^{\ast})$ denote a block bootstrap sample, constructed using overlapping blocks of size $m$. Similarly, let $z^{\ast}_{(m)}=(z_{1}^{\ast},\dots,z_{m}^{\ast})$ denote an overlapping subsampling block.
The construction of blocks is a key difference with respect to the i.i.d. setting, and implies a different extension of available results on breakdown properties for i.i.d. data.
We denote by $T_{n,m}^{K\ast}$, $K=B,S$, the corresponding block bootstrap and subsampling statistics, respectively.\footnote{We focus for brevity on one-dimensional real-valued
statistics. However, as discussed for instance in Singh (1998) in the i.i.d.
context, we can extend our results for time series to multivariate and scale statistics.}
For $t\in(0,1)$, the
quantile $Q_{t,n,m}^{K\ast}$ of $T_{n,m}^{K\ast}$ is defined by
\begin{equation}\label{quantile}
Q_{t,n,m}^{K\ast} =\inf \{x \arrowvert P^{\ast}(T_{n,m}^{K\ast}\le x) \ge
t\},
\end{equation}
where $P^{\ast}$ is the probability measure induced by the block bootstrap or the subsampling method and, by definition, $\inf (\emptyset) = \infty$. Quantile $Q_{t,n,m}^{K\ast}$ is effectively a useful nonparametric estimator of the corresponding finite-sample quantile of statistic $T(z_{(n)})$.
We characterize the robustness properties of block bootstrap and subsampling by the breakdown point $b_{t,n,m}^{K\ast}$ of the quantile (\ref{quantile}), which is defined as the smallest fraction of outliers in the
original sample such that $Q_{t,n,m}^{K\ast}$ diverges to infinity.


Borrowing the notation in Genton and Lucas (2003), we
formally define the breakdown point of the $t$-quantile $Q^{K\ast}_{t,n,m}:=Q_{t,n,m}^{K\ast}(z_{(n)})$ as,
\begin{equation}\label{break2}
b^{K\ast}_{t,n,m}
:=\frac{1}{n}\cdot\Bigg[\inf_{\{1\le p\le \lceil
n/2\rceil\}}\big\{p\big\vert \textrm{there exists } z_{(n,p)}^{\zeta}\in
\mathcal{Z}_{(n,p)}^{\zeta}\textrm{ such that }
Q^{K\ast}_{t,n,m}(z_{(n)}+z_{(n,p)}^{\zeta})=+\infty\big\}\Bigg],
\end{equation}
where $\lceil x\rceil =\inf\{n\in\mathbb{N}\vert x\le n\}$, and $\mathcal{Z}_{(n,p)}^{\zeta}$ is the set of all
$n$-samples $z_{(n,p)}^\zeta$ with exactly $p$ nonzero components that are $d_z$-dimensional outliers of size $\zeta\in{\bar{\mathbb{R}}^{d_z}}$.\footnote{When $p>1$, we
do not necessarily assume outliers $\zeta_{1},\dots,\zeta_{p}$ to be
all equal to $\zeta$, but we rather assume existence of constants
$c_{1},\dots,c_{p}$, such that $\zeta_{i}=c_{i}\zeta$.
To better capture the presence of outliers in predictive regression models, our definitions for the breakdown point and the set $\mathcal{Z}_{(n,p)}^{\zeta}$ of all $n$-components outlier samples are slightly different from those proposed in Genton and Lucas (2003) for general settings. However, we can modify our results to cover alternative definitions of breakdown point and outlier sets $\mathcal{Z}_{(n,p)}^{\zeta}$.}
Literally, $b^{K\ast}_{t,n,m}
$ is the smallest fraction of anomalous observations
of arbitrary size,
in a generic outlier-contaminated sample $z_{(n)}+z_{(n,p)}^\zeta$,
such that quantile $Q^{K\ast}_{t,n,m}$, estimated by bootstrap or subsampling Monte Carlo simulation schemes, can become meaningless.

Intuitively, when a breakdown
occurs, inference about
the distribution of $T(z_{(n)})$ based on
bootstrap or subsampling tests
becomes pointless. Estimated test
critical values may be arbitrarily large and confidence intervals be arbitrarily wide. In these cases,
the size and power of bootstrap and subsampling tests can collapse to zero or one in presence of anomalous observations, making these inference procedures useless.
Therefore, providing theory (see Theorem \ref{bsubboot} below) for quantifying $b^{K\ast}_{t,n,m}$ in general for bootstrap and subsampling tests of predictability, in dependence of the statistics and
testing approaches used, is key in order to understand which approaches ensure some resistance to anomalous observations and which do not, even before looking at the data.

\subsection{Quantile Breakdown Point and Predictive Regression}\label{bpointmc}

The quantile breakdown point of conventional
block bootstrap and subsampling tests for predictability in Section \ref{rm} depends directly on the breakdown properties of
OLS estimator $\hat{\beta}_n$. The breakdown point $b$ of a statistics $T_n=T(z_{(n)})$ is simply
the smallest fraction of outliers in the original sample such that
the statistic $T_n$ diverges to infinity; see, e.g., Donoho and Huber (1983) for the formal definition.
We know $b$ explicitly in some cases and we can gauge its value most of the time, for
instance by means of simulations and sensitivity analysis. Most nonrobust
statistics, like OLS estimators for linear regression, have a breakdown point $b = 1/n$.
Therefore, the breakdown point of conventional block bootstrap and subsampling quantiles in
predictive regression settings also equals $1/n$. In other words, a single anomalous observation in the original data is sufficient
to produce a meaningless inference implied by bootstrap or subsampling quantiles in standard tests of predictability.

It is straightforward to illustrate these features in a Monte Carlo simulation that quantifies
the sensitivity of block bootstrap and subsampling quantiles to data contaminations
by a single outlier, where the size of the outlier is increasing.
We first simulate $N=1,000$ random samples
$z_{(n)}=\big(z_1,\dots,z_n\big)$ of size $n=120$, where $z_t=(y_t,x_{t-1})'$ follows model (\ref{pregmodel1})-(\ref{pregmodel2}), $v_t\sim N(0,1)$, $e_t\sim N(0,1)$, $\phi=-1$, $\alpha=\mu=0$, $\rho=0.9$, and $\beta_{0}=0$. For each Monte Carlo sample, we define in a second step
\begin{eqnarray}  \label{maxy}
y_{\max}=\arg\max_{y_{1},\dots,y_{n}}\{w(y_{i})\vert w(y_{i})=
y_{i}-\beta_0x_{i-1}, \text{under}\,\mathcal{H}_0: \beta_0 = 0\}\ ,
\end{eqnarray}
and we modify $y_{\max}$ over the interval
$[y_{\max},y_{\max}+5] $. This means that we contaminate the predictability relationship by an anomalous observation for only one single data point in the full sample. We study
the sensitivity of the Monte Carlo average length of confidence intervals for parameter $\beta$, estimated by the standard block bootstrap and the subsampling.
This is a natural exercise, as the length of the confidence interval for parameter $\beta$ is in a one-to-one relation with the critical value of the test of the null of no predictability (${\cal H}_0$: $\beta_0=0$).
For the sake of comparison, we also consider confidence intervals implied by
the bias-corrected testing method in Amihud, Hurvich and Wang (2008) and the Bonferroni approach proposed in Campbell and Yogo (2006).

For all tests under investigation,
in the first 2 rows of Figure \ref{sensacy}, we plot the relative increase of the average confidence interval length in our Monte Carlo simulations, under
contamination by a single outlier of increasing size.
We find that all sensitivities are basically linear in the size of the outlier, confirming that a single anomalous observation can have an arbitrarily large
impact on the critical values of those tests and make the test results potentially useless, as implied by their quantile breakdown point of $1/n$.


\subsection{Quantile Breakdown Point Bounds}\label{tlor}

To obtain bootstrap and subsampling tests with more favorable breakdown properties, it is necessary to apply resampling procedures to a robust
statistic with nontrivial breakdown point ($b>1/n$), such as, e.g., a robust estimator of linear regression.
Without loss of generality, let $T_n=T(z_{(n)})$ be a statistic with breakdown point $1/n < b\le 0.5$.

In the next theorem, we compute
explicit
quantile breakdown point bounds, which
characterize the
resistance of bootstrap and subsampling tests to anomalous observations, in dependence of relevant
parameters, such as
$n$, $m$, $t$, and $b$.\footnote{Similar results can be obtained
for the subsampling and the block bootstrap based on nonoverlapping blocks. The results for the
block bootstrap can also be modified to cover
asymptotically equivalent variations, such as the stationary
bootstrap of Politis and Romano (1994).} 
\begin{theorem}\label{bsubboot}
Let $b$ be the breakdown point of $T_{n}$, $t\in(0,1)$, and $r= \lceil n/m\rceil$. The
quantile breakdown points $b^{S\ast}_{t,n,m}$ and $b^{B\ast}_{t,n,m}$ satisfy the following bounds,
\begin{eqnarray*}
&&\frac{\lceil mb\rceil}{n}\le  b^{S}_{t,n,m} \le \frac{1}{n}\cdot\bigg[\inf_{\{p\in\mathbb{N},
p\le r-1\}}\bigg\{p\cdot \lceil mb\rceil \bigg\vert
p>\frac{(1-t)(n- m+1)+\lceil mb\rceil-1}{m}\bigg\}\bigg],\label{fibps}\\
&&\frac{\lceil mb\rceil}{n}\le b^{B}_{t,n,m} \le
\frac{1}{n}\cdot\bigg[\inf_{\{p_{1},p_{2}\}}\bigg\{p=p_{1}\cdot
p_{2}\bigg\vert P\bigg(
BIN\bigg(r,\frac{mp_{2}-p_{1}+1}{n-m+1}\bigg)\ge\bigg\lceil\frac{nb}{p_{1}}\bigg\rceil\bigg)>1-t\bigg\}\bigg],\phantom{P}\label{fibpb}
\end{eqnarray*}
where $p_{1},p_{2}\in\mathbb{N}$, with $p_{1}\le m, p_{2}\le r-1$, and $BIN(N,q)$
denotes a binomially distributed variable with parameters $N\in\mathbb N$ and $q\in (0,1)$.
\end{theorem}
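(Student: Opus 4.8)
The plan is to analyze the two cases ($K=S$ for subsampling, $K=B$ for block bootstrap) separately, but in each case using the same guiding principle: the bootstrap/subsampling quantile $Q^{K\ast}_{t,n,m}$ diverges to $+\infty$ precisely when a fraction of at least $t$ of the resampled statistics $T^{K\ast}_{n,m}$ diverge, and a resampled statistic diverges exactly when the block (or the reconstituted bootstrap sample of size $n$) it is computed on contains enough outliers to break $T$ down, i.e. at least $\lceil mb\rceil$ outliers fall into the relevant block of length $m$ (or, for the bootstrap, at least $\lceil nb\rceil$ outliers land in the resampled size-$n$ series). So the whole argument is a careful counting of how outliers placed in the original sample propagate into overlapping blocks, combined with the tail behavior of the induced resampling distribution $P^\ast$.

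For the lower bounds $\lceil mb\rceil/n \le b^{K}_{t,n,m}$ I would argue contrapositively: if the original sample contains $p < \lceil mb\rceil$ outliers, then every overlapping block of length $m$ contains at most $p < \lceil mb\rceil \le mb$ outliers (one block can contain all $p$ of them at worst), hence $T$ evaluated on any such block — and on any bootstrap sample, since a bootstrap sample is a concatenation of such blocks, so it inherits at most... — wait, for the bootstrap one must be slightly more careful, since concatenating $r$ blocks could accumulate outliers; here I would instead observe that with $p<\lceil mb\rceil \le \lceil nb\rceil$ total outliers, no resampled size-$n$ series can have $\ge \lceil nb\rceil$ outlier-positions unless it repeats outlier-containing blocks, and the key point is simply that with fewer than $\lceil mb \rceil$ outliers in the data the statistic cannot be driven to infinity because $\lceil mb\rceil \le \lceil nb \rceil$... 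Actually the cleanest route for the lower bound is: $p < \lceil mb \rceil$ outliers means each block has $<mb$ outliers so $T$ stays finite on every block (subsampling) and on every realizable concatenation the outlier count is still $<\lceil nb\rceil$ when $p<\lceil mb\rceil$ fails for large $r$ — so I'd actually prove the cleaner statement that the lower bound $\lceil mb \rceil / n$ holds because a breakdown of the quantile requires a breakdown on a positive-probability set of resamples, which requires at least one block to break down, which requires $\ge \lceil mb\rceil$ outliers in the data. I would state this as a lemma and dispatch it quickly.

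For the upper bounds, the task is to exhibit a configuration of $p$ outliers that forces $Q^{K\ast}_{t,n,m}=+\infty$, i.e. that makes $P^\ast(T^{K\ast}_{n,m}=+\infty) > 1-t$ (so that the $t$-quantile is infinite). In the subsampling case I would place the $p$ outliers in $p$ disjoint clusters of $\lceil mb\rceil$ each (so $p = p\cdot\lceil mb\rceil$ outliers total in the stated ``$p\cdot\lceil mb\rceil$'' term), spread out so that the set of length-$m$ subsampling windows that miss all clusters is as small as possible; counting overlapping windows, the number of ``clean'' windows is $(n-m+1) - p\cdot m + (\text{correction})$, and requiring the fraction of contaminated windows to exceed $t$ yields exactly the inequality $p > \frac{(1-t)(n-m+1)+\lceil mb\rceil - 1}{m}$. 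The $\lceil mb\rceil - 1$ and the $-1$ bookkeeping terms come from the overlap structure at the boundary of each cluster. For the block bootstrap I would similarly lay down $p_1$ clusters each of size $\lceil nb/p_1\rceil$ spaced so that a given block of length $m$ catches a cluster with probability $\frac{mp_2 - p_1 + 1}{n-m+1}$ (the $p_2$ encoding how the required within-sample outlier count interacts with the $r=\lceil n/m\rceil$ blocks drawn), and then the event ``the reconstituted size-$n$ sample breaks down $T$'' becomes ``at least $\lceil nb/p_1\rceil$ of the $r$ independently drawn blocks are contaminated,'' which has probability $P(BIN(r, \cdot) \ge \lceil nb/p_1\rceil)$; requiring this to exceed $1-t$ gives the bound, and optimizing over the factorization $p = p_1 p_2$ gives the infimum.

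The main obstacle I anticipate is the combinatorial bookkeeping in the upper bounds — specifically, pinning down the exact count of overlapping blocks that avoid a given cluster of $\lceil mb\rceil$ consecutive outliers (subsampling) and the exact hitting probability $\frac{mp_2-p_1+1}{n-m+1}$ of a cluster by a random length-$m$ block (bootstrap), including the ceiling/floor corrections and the edge effects near positions $1$ and $n-m+1$. Getting the right ``$+1$'' and ``$-1$'' terms so that the stated inequalities are tight requires choosing the adversarial outlier placement optimally and verifying no better placement exists; I would handle this by treating the cluster centers as evenly spaced and arguing by an exchange/shifting argument that clustering is optimal (spreading outliers out only helps when you want to contaminate more windows, clustering when you want breakdown per window — and for breakdown you need $\lceil mb\rceil$ together, so clustering into exactly that size is forced). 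The binomial step for the bootstrap is then routine given the per-block contamination probability and the independence of the $r$ block draws. $\rule{0.5em}{0.5em}$
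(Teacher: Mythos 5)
Your overall strategy coincides with the paper's: lower bounds from the observation that breaking $T$ on a length-$m$ block requires $\lceil mb\rceil$ outliers, and upper bounds by exhibiting an adversarial clustered contamination, counting the overlapping windows it contaminates, and (for the bootstrap) passing to a binomial over the $r$ independently drawn blocks. The subsampling half is essentially the paper's proof: the paper places $\lceil mb\rceil$ outliers at the end of each of the first $p$ nonoverlapping blocks, so that the first $mp-\lceil mb\rceil+1$ overlapping windows each contain at least $\lceil mb\rceil$ outliers (windows straddling two adjacent clusters collect a full complement from the two partial pieces), and the condition $\frac{mp-\lceil mb\rceil+1}{n-m+1}>1-t$ gives the stated bound. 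Your slip "requiring the fraction of contaminated windows to exceed $t$" should read $1-t$, as you state correctly earlier; and since an upper bound only needs one breaking configuration, the optimality/exchange argument you anticipate is not actually required.

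Two points are genuine gaps. First, your lower-bound lemma --- that a quantile breakdown requires at least one block to break down --- is not the right justification for the bootstrap: the bootstrap statistic is computed on the reassembled size-$n$ sample, so no individual block need break down; breakdown requires the resample to accumulate $nb$ outliers across the $r$ drawn blocks. The correct counting, which you gesture at but then abandon, is that each drawn block contains at most $p$ outliers (with $p$ the number of outliers in the data), so the resample contains at most $rp$, and $rp\ge nb$ forces $p\gtrsim nb/r\approx mb$, i.e. $p\ge\lceil mb\rceil$. Second, your bootstrap upper-bound configuration is inconsistent with the bound being proved: you lay down "$p_1$ clusters each of size $\lceil nb/p_1\rceil$", which uses roughly $nb$ outliers rather than $p=p_1p_2$. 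The paper's construction is $p_2$ contaminated nonoverlapping blocks, each carrying $p_1$ outliers at its end; the same window count as in the subsampling case (with $\lceil mb\rceil$ replaced by $p_1$) shows that a randomly drawn overlapping block contains at least $p_1$ outliers with probability $\frac{mp_2-p_1+1}{n-m+1}$, and one then needs at least $\lceil nb/p_1\rceil$ such blocks among the $r$ draws to push the resample's outlier count to $nb$, whence the binomial condition. You identified the threshold $\lceil nb/p_1\rceil$ and the binomial step correctly, but the contamination geometry that produces the stated hitting probability is the part your proposal gets wrong.
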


In Theorem \ref{bsubboot}, the term $\frac{(1-t)(n- m+1)}{m}$ represents the number of
degenerated subsampling statistics necessary in order to cause the breakdown of
$Q_{t,n,m}^{S\ast}$, while $\frac{\lceil mb\rceil}{n}$ is the fraction of
outliers which is sufficient to cause the breakdown of statistic $T$
in a block of size $m$. The breakdown point formula
for the i.i.d. bootstrap derived in Singh (1998) emerges as a special case of
the second inequality in Theorem \ref{bsubboot}.

We quantify the implications of Theorem \ref{bsubboot} by computing in Table \ref{table1} lower and upper bounds for the
breakdown point of subsampling and bootstrap quantiles, using a sample
size $n=120$, and a maximal statistic breakdown point ($b=0.5$).
We find that even for a highly robust statistic with maximal breakdown point, the
subsampling implies a very low quantile breakdown point,
which increases with the block size but is also very far from the maximal
value $b=0.5$.
For
instance, for a block size $m=10$,
the $0.95$-quantile breakdown point
is between $0.0417$ and $0.0833$.
In other words, even though a statistic is resistant to large fractions of anomalous observations, the implied
subsampling quantile can collapse with just 5 outliers out of 100 observations.\footnote{
This breakdown point is also clearly lower than
in the i.i.d. case; see  Camponovo, Scaillet and Trojani (2012). For
instance, for $m=10$, the $0.95$-quantile breakdown point of the
overlapping subsampling is 0.23 in i.i.d. settings.
Since in a time series setting the number of possible
subsampling blocks of size $m$ is typically lower than the number of
i.i.d. subsamples of size $m$, the breakdown of a statistic
in one random block tends to have a larger impact on the subsampling
quantile than in the i.i.d. case.}
Similar results arise for the bootstrap quantiles. Even though the bounds are less sharp than for the subsampling, quantile breakdown points are again clearly smaller than the breakdown point of the statistic used.\footnote{
These quantile breakdown point bounds are again clearly lower than in the i.i.d. setting.
For instance,
for $m=30$, the $0.95$-quantile breakdown point for time series is less than
$0.25$, but it is $0.425$ for i.i.d. settings, from the results in Camponovo, Scaillet and Trojani (2012).}

Overall, the results in Theorem \ref{bsubboot} imply that subsampling and bootstrap tests for time series feature an intrinsic non-resistance to anomalous observations, which cannot be avoided, simply by applying conventional resampling approaches to more robust statistics.

\section{Robust Resampling Methods}
When using a robust statistic
with large breakdown point, the bootstrap and subsampling
still
imply an important nonresistance to anomalous observations, which is consistent with our Monte Carlo results in the predictive regression model.
To overcome the problem, it
is necessary to introduce a novel class of more robust resampling tests in the time series context.\footnote{We develop such robust
methods borrowing from the fast resampling approach considered,
among others, in Shao and Tu (1995), Davidson and McKinnon (1999),
Hu and Kalbfleisch (2000), Andrews (2002), Salibian-Barrera and
Zamar (2002), Goncalves and White (2004), Hong and Scaillet (2006),
 Salibian-Barrera,
Van Aelst and Willems (2006, 2007), and Camponovo, Scaillet and
Trojani (2012).} Section \ref{rpr} introduces our robust bootstrap and subsampling approaches, and Section \ref{robb} demonstrates theoretically their favorable breakdown properties. Section \ref{near} characterizes the asymptotic validity of the robust subsampling in both stationary and nonstationary settings.
Finally, in Section \ref{mcrs} we study the accuracy of our approach through Monte Carlo simulations.

\subsection{Robust Predictive Regression and Hypothesis Testing}\label{rpr}

We develop a new class of easily applicable robust resampling tests for the null hypothesis of no predictability in predictive regression models.
To this end, first we focus on robust estimators with nontrivial breakdown point $b>1/n$.
Several such estimators are available in the literature.
Among those estimators, a convenient choice is the Huber estimator of regression, which ensures together good robustness properties and moderate computational costs.

Let $\theta=(\alpha,\beta)'$ and $w_{t-1}=(1,x_{t-1})'$, 
given a
positive constant $c$, the robust Huber estimator $\hat{\theta}_{n}^{R}$ is the $M$-estimator that solves the equation
\begin{equation}\label{huber1}
\psi_{n}(z_{(n)},\hat{\theta}_{n}^{R}):=\frac{1}{n}\sum_{t=1}^{n}
g(z_t,\hat{\theta}_n^R)\cdot h_c(z_t,\hat{\theta}_n^R)=0,
\end{equation}
where the functions $g$ and $h_c$ are defined as
\begin{eqnarray}
g(z_t,\theta) & := & 
(y_{t}-w_{t-1}'\theta)w_{t-1}, \\
h_c(z_t,\theta)  & := & \min\left(1,\frac{c}{\Vert (y_{t}-w'_{t-1}\theta)w_{t-1}\Vert}\right).\label{huberw}
\end{eqnarray}
In Equation (\ref{huber1}), we can write the Huber estimator $\hat{\theta}_n^R$ as a weighted least squares estimator with data-driven weights $h_c$ defined by (\ref{huberw}).
By design, the Huber weight $0\le h(z_t,\theta)\le 1$ reduces the influence of potential anomalous observations on the estimation results. 
Equation (\ref{huber1}) is an estimating function and not the way we define the predictive relationship.
Weights below one indicate a potentially anomalous data-point, while weights equal to one indicate unproblematic observations for the postulated model.
Therefore, the value of weight (\ref{huberw}) provides a useful way for highlighting potential anomalous observations that might be excessively influential for the
fit of the
predictive regression model; see, e.g., Hampel, Ronchetti, Rousseeuw and Stahel (1986).

Constant $c>0$ is useful in order to tune the degree of resistance to anomalous data of estimator $\hat{\theta}_{n}^{R}$ in relevant applications, and it can be determined in a fully data-driven way.\footnote{By extending the calibration method proposed in Romano and Wolf (2001) for both the selection of the block size $m$ and the degree of robustness $c$.}
Note that 
the norm of function $\psi_{n}$ in Equation (\ref{huber1}) is bounded (by constant $c$),
and the breakdown point of estimator $\hat{\theta}_{n}^{R}$ is maximal ($b=0.5$, see, e.g., Huber, 1981).

%
%

Conventional bootstrap and subsampling
solve equations $\psi_{k}(z^{\ast}_{(k)},\hat{\theta}_{n,m}^{R\ast})=0$, 
with $k=n$ (bootstrap) and $k=m$ (subsampling)
for each random bootstrap sample $z_{(n)}^{\ast}$ and subsampling random sample
$z_{(m)}^{\ast}$, respectively, which can be a computationally demanding task. Instead, we
consider a standard Taylor expansion of (\ref{huber1}) around
the true parameter $\theta_{0}$,
\begin{equation}\label{taylor}
\hat{\theta}_{n}-\theta_{0}=-[\nabla_{\theta}\psi_{n}(z_{(n)},\theta_{0})]^{-1}\psi_{n}(z_{(n)},\theta_{0})+o_{p}(1),
\end{equation}
where $\nabla_{\theta}\psi_{n}(z_{(n)},\theta_{0})$ is the
derivative of function $\psi_{n}$ with respect to parameter $\theta$. Based on
this expansion, we can use
$-[\nabla_{\theta}\psi_{k}(z_{(k)}^{\ast},\hat{\theta}_{n}^R)]^{-1}\psi_{k}(z_{(k)}^{\ast},\hat{\theta}_{n}^R)$
as an approximation of $\hat{\theta}_{n,m}^{R\ast}-\hat{\theta}_{n}^R$ in
the definition of the resampling scheme estimating the sampling
distribution of $\hat{\theta}_{n}-\theta_{0}$. This approach avoids computing
$\hat{\theta}^{R\ast}_{n,m}$ in random samples, which is a markable computational advantage that produces a fast numerical procedure. This is an important improvement over
conventional resampling schemes,
which can easily become unfeasible when applied to robust
statistics.
Let
\begin{eqnarray*}
\hat{\Sigma}_{n}^R & = & [\nabla_{\theta}\psi_{n}(z_{(n)},\hat{\theta}_{n}^R)]^{-1} 
\left(\frac{1}{n}\sum_{t=1}^{n}
g(z_t,\hat{\theta}_n^R)g(z_t,\hat{\theta}_n^R)'\cdot h_c(z_t,\hat{\theta}_n^R)^2\right)
[\nabla_{\theta}\psi_{n}(z_{(n)},\hat{\theta}_{n}^R)]^{-1}, \\
\hat{\Sigma}_{k}^{R\ast} & = & [\nabla_{\theta}\psi_{k}(z_{(k)}^{\ast},\hat{\theta}_{n}^R)]^{-1} 
\left(\frac{1}{k}\sum_{t=1}^{k}
g(z_t^{\ast},\hat{\theta}_n^R)g(z_t^{\ast},\hat{\theta}_n^R)'\cdot h_c(z_t^{\ast},\hat{\theta}_n^R)^2\right)
[\nabla_{\theta}\psi_{k}(z_{(k)}^{\ast},\hat{\theta}_{n}^R)]^{-1},
\end{eqnarray*}
with $k=n,m$.
Following this fast resampling approach, we can finally estimate the sampling distribution of $\sqrt{n}[\hat{\Sigma}_n^R]^{-1/2}(\hat{\theta}_n^R-\theta_0)$
with the distribution 
\begin{equation}\label{fastpr1}
L_{n,m}^{R\ast}(x)=\frac{1}{N}\sum_{i=1}^{N}\mathbb{I}
\bigg(\sqrt{k}[\hat{\Sigma}_{k,i}^{R\ast}]^{-1/2}\bigg(-[\nabla_{\theta}\psi_{k}(z_{(k),i}^{\ast},\hat{\theta}_{n}^R)]^{-1}\psi_{k}(z_{(k),i}^{\ast},\hat{\theta}_{n}^R)\bigg)\le
x\bigg),
\end{equation}
where $N$ denote the number of possible random samples.
In the next section, we analyze the breakdown properties of the robust fast resampling procedure.

\subsection{Robust Resampling and Quantile Breakdown Point}\label{robb}

A closer inspection of quantity
$[\hat{\Sigma}_{k,i}^{R\ast}]^{-1/2}[\nabla_{\theta}\psi_{k}(z_{(k),i}^{\ast},\hat{\theta}_{n}^R)]^{-1}$
$\psi_{k}(z_{(k),i}^{\ast},\hat{\theta}_{n}^R)$ in Equation (\ref{fastpr1})
reveals important implications for the breakdown properties of the robust fast resampling distribution ($\ref{fastpr1}$).
Indeed, this
quantity can degenerate only when either (i) matrix $\hat{\Sigma}_{k,i}^{R\ast}$ is singular,
(ii) matrix
$\nabla_{\theta}\psi_{k}(z_{(k),i}^{\ast},\hat{\theta}_{n}^R)$ is singular or
(iii) estimating function $\psi_{k}(z_{(k),i}^{\ast},\hat{\theta}_{n}^R)$ is not bounded. However, since we are making use
of a robust (bounded) estimating function, situation (iii) cannot
arise.
Therefore, we intuitively expect the breakdown of the quantiles of robust subsampling distribution ($\ref{fastpr1}$) to
arise only when conditions (i) or (ii) are realized.\footnote{Unreported Monte Carlo simulations show that the
application of our robust resampling approach to an $M$-estimator with nonrobust (unbounded) estimating function
does not solve the robustness problem, consistent with
our theoretical results in Section \ref{tlor}.}
We borrow from 
this intuition and in the next theorem, we compute the quantile breakdown point of resampling distribution (\ref{fastpr1}).
\begin{theorem}\label{ftheo}
For simplicity, let $r=n/m\in\mathbb{N}$. The $t$-quantile breakdown points $b_{t,n,m}^{RB\ast}$ and
$b_{t,n,m}^{RS\ast}$ of the robust block bootstrap and robust subsampling distributions, respectively, are given by
\begin{eqnarray}
b_{t,n,m}^{RS\ast} & = & \frac{1}{n}\bigg[\inf_{\{p\in\mathbb{N},p\le n-m+1\}} \bigg\{m+p\bigg\vert\ p>(1-t)(n-m+1)-1\bigg\}\bigg],\label{fbps}\\
b_{t,n,m}^{RB\ast} & = & \frac{1}{n}\bigg[\inf_{\{p\in\mathbb{N},p\le n-m+1\}} \bigg\{m+p\bigg\vert P\bigg(BIN\bigg(r,\frac{p+1}{n-m+1}\bigg)=r\bigg)>1-t\bigg\}\bigg],\label{fbpb}
\end{eqnarray}
where $BIN(N, q)$ is a Binomial random variable with parameters $N$ and $q\in (0,1)$.
\end{theorem}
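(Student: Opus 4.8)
The plan is to analyze directly when the fast resampling statistic
$\sqrt{k}[\hat{\Sigma}_{k,i}^{R\ast}]^{-1/2}[\nabla_{\theta}\psi_{k}(z_{(k),i}^{\ast},\hat{\theta}_{n}^R)]^{-1}\psi_{k}(z_{(k),i}^{\ast},\hat{\theta}_{n}^R)$ can degenerate, exploiting the trichotomy (i)--(iii) highlighted before the theorem. Since $\psi_k$ is built from the bounded Huber score $g\cdot h_c$, case (iii) never occurs, so a single resampled evaluation degenerates precisely when the matrix $\nabla_\theta\psi_k(z_{(k),i}^\ast,\hat\theta_n^R)$ (equivalently $\hat\Sigma_{k,i}^{R\ast}$) becomes singular. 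The key combinatorial observation is that, because $\hat\theta_n^R$ is held fixed (fast resampling), singularity of $\nabla_\theta\psi_k$ in a given resampled block is governed by how many of the $m$ design points $w_{t-1}^\ast$ in that block carry a nonzero Huber weight derivative: if enough points in the block are outliers of arbitrarily large size, their residuals blow up, their weights $h_c$ collapse to zero, and the averaged Hessian loses rank. I would first make this precise: show that placing $m$ contaminated observations so that an entire length-$m$ block consists of outliers forces that block's statistic to $\infty$; hence a breakdown of the overall $t$-quantile requires driving a sufficiently large \emph{number} of resampled blocks to this degenerate state, and the cheapest way to make one more block degenerate, once a "seed" of $m$ outliers is in place, is to add one more outlier that extends the contaminated run (since overlapping blocks of size $m$ over a contaminated run of length $m+p$ yield $p+1$ fully-contaminated blocks).

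For the subsampling bound (\ref{fbps}): with a contaminated run of length $m+p$ in $z_{(n)}$, exactly $p+1$ of the $n-m+1$ overlapping subsampling blocks are entirely outliers and thus give degenerate (infinite) statistics; the reported $t$-quantile $Q^{RS\ast}_{t,n,m}$ diverges as soon as the fraction of degenerate blocks exceeds $1-t$, i.e. as soon as $p+1 > (1-t)(n-m+1)$, equivalently $p > (1-t)(n-m+1)-1$. Taking the smallest such $p$ and the total contamination count $m+p$, divided by $n$, yields (\ref{fbps}); I would also check the edge constraint $p\le n-m+1$ is exactly the range in which "fraction of degenerate blocks $>1-t$" is attainable and note that $\inf(\emptyset)=\infty$ handles the degenerate cases consistently with the quantile convention in (\ref{quantile}).

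For the block bootstrap bound (\ref{fbpb}): here one resamples $r=n/m$ blocks (with replacement) from the $n-m+1$ overlapping blocks, and a \emph{bootstrap} sample's statistic degenerates only when the averaged Hessian over its $n$ points is singular — which, given the structure, happens when \emph{every} one of the $r$ drawn blocks is a fully-contaminated block. With a contaminated run of length $m+p$ producing $p+1$ fully-contaminated source blocks out of $n-m+1$, the probability that all $r$ draws land in that set is $P(BIN(r,(p+1)/(n-m+1)) = r)$; the bootstrap $t$-quantile breaks down once this probability exceeds $1-t$. Minimizing $m+p$ over $p\le n-m+1$ subject to that inequality and dividing by $n$ gives (\ref{fbpb}). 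I expect the main obstacle to be the first step — rigorously justifying that the \emph{only} route to degeneracy is a completely contaminated block (ruling out that a partially contaminated block, or cancellation across blocks in the bootstrap average, could also cause rank loss or could prevent it), and verifying that the contiguous-run placement of outliers is genuinely the adversary's optimal (cheapest) strategy; this is where the boundedness of $h_c$, the explicit form of $\nabla_\theta\psi_k = -\frac1k\sum \nabla_\theta[g\,h_c]$, and a rank argument for $\sum w_{t-1}^\ast w_{t-1}^{\ast\prime}$-type matrices must be combined carefully, paralleling but simplifying the harder argument behind Theorem \ref{bsubboot}.
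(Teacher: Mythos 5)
Your proposal follows essentially the same route as the paper's proof: rule out degeneracy of the bounded score, observe from the explicit form of $\nabla_{\theta}f$ (zero matrix once the residual norm exceeds $c$) that the resampled Hessian can only become singular when every observation in the evaluated sample is an outlier, place the contaminated points in a contiguous run of length $m+p$ so that exactly $p+1$ of the $n-m+1$ overlapping blocks are fully contaminated, and then apply the quantile-breakdown condition directly for the subsampling and via $P(BIN(r,(p+1)/(n-m+1))=r)>1-t$ for the block bootstrap. The obstacles you flag (that a fully contaminated block is the only route to singularity, and that the contiguous run is the cheapest placement) are exactly the points the paper's proof handles via the explicit computation of $\nabla_{\theta}f$ and otherwise leaves implicit.
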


The quantile breakdown points of the robust bootstrap and subsampling approach are
often much higher than the one of conventional bootstrap and subsampling. Table \ref{table111}
quantifies these differences, 
confirming that the robust bootstrap and subsampling quantile breakdown points in Table \ref{table111} are
considerably larger than those in Table \ref{table1} for conventional bootstrap and subsampling methods.

\subsection{Robust Subsampling and Asymptotic Size}\label{near}

In this section, we show that the robust subsampling can provide an asymptotically valid method for introducing inference in both stationary and nonstationary predictive regression models. 
However, to achieve this objective, the definition of robust subsampling tests requires some cares. Indeed, only tests based on symmetric robust subsampling confidence intervals ensure a correct inference.
To this end, we consider the predictive regression model (\ref{pregmodel1})-(\ref{pregmodel2}) under the following assumptions.
\begin{assumption}\label{asso}
\item[(i)] $\rho\in(-1,1]$, and $u_t=\phi v_t+e_t$, $\phi\in\mathbb{R}$, where $v_t$ and $e_t$ are independent.
\item[(ii)] $v_t$ and $e_t$ are strictly stationary with $E[v_t]=0$, and $E[e_t]=0$. Furthermore, for $\epsilon>0$, $E[\vert v_t\vert^{2+\epsilon}]<\infty$, and $E[\vert e_t\vert^{2+\epsilon}]<\infty$.
\item[(iii)] $v_t$ and $e_t$ are strong mixing with mixing coefficients $\alpha_{v,m}$ and $\alpha_{e,m}$, respectively, that satisfy $\sum_{m=1}^{\infty}\alpha_{k,m}^{\epsilon/(2+\epsilon)}<\infty$, $\epsilon>0$, $k=v,e$.
\item[(iv)] $\theta_0=(\alpha_0,\beta_0)'\in\Theta_0$ is the unique solution of $E[\psi_n(z_{(n)},\theta_0)]=0$, and the set $\Theta_0\subset\mathbb{R}^2$ is compact.
\end{assumption}

Assumption \ref{asso} provides a set of conditions also adopted in Choi and Chue (2007) to prove the validity of the subsampling in nearly integrated settings, and in Lucas (1995) to derive the limit distribution of robust M-estimators in integrated settings.

Consider the statistic $T_n^R=\sqrt{n}(\hat{\beta}_n^R-\beta_0)/\sigma_n^R$, where $\sigma_n^R$ is the square-root of the second diagonal component of $\Sigma_n^R$. When $\vert\rho\vert<1$, then $T_n^R$ converges in distribution to a standard normal, see Choi and Chue (2007). On the other hand, when $\rho=1$, the limit distribution of $T_n^R$ is nonstandard and depends on nuisance parameters that have to be simulated, see Lucas (1995). Because of this discontinuity in the limit distribution, conventional bootstrap methods are inconsistent. 
Also subsampling approximations may suffer from a lack of uniform convergence, see Andrews and Guggenberger (2009). To verify the uniform validity of inference based on the robust subsampling, we follow the same approach adopted in Andrews and Guggenberger (2009), and focus on the quantiles of statistics $T_n^R$, $-T_n^R$, and $\vert T_n^R\vert$. More precisely, in Figure \ref{asize}, we simulate the $0.95$-quantiles of the limit distribution of these statistics 
for different values of the degree of persistence $\rho=1-c/n$, with $c\in[0,10]$, and covariance parameter of the error terms $\phi\in\{0,-1,-2,-5\}$.

In Figure \ref{asize}, we can observe that the graphs of the 0.95--quantile for different values of $\phi$ have
similar shapes, and are monotone in $c$. In particular, the $0.95$--quantiles
of the limit distribution of $T_n^R$ and $\vert T_n^R\vert$ are decreasing, while those of $-T_n^R$ are increasing. 
Therefore, using the same arguments adopted in Section 7 in Andrews and Guggenberger (2009), we can conclude that upper and symmetric conventional subsampling confidence intervals have correct asymptotic size, while lower and equal-tailed conventional subsampling confidence intervals have incorrect size asymptotically. More precisely, given $t\in(0,1)$ let $CI_{t,\vert\cdot\vert}$ denote a $t$-confidence interval obtained by inverting the conventional subsampling approximation of the sampling distribution of statistic $\vert T_n^R\vert$. Then, 
$\lim_{n\to\infty} \inf_{\rho} P(\beta_0\in CI_{t,\vert\cdot\vert})=t$,
i.e., conventional symmetric confidence intervals ensure a correct asymptotic size uniformly in the degree of persistence $\rho$. Importantly, because of the negligible remainder term in the Taylor approximation (\ref{taylor}), these results hold also for our symmetric robust subsampling confidence intervals.

\subsection{Monte Carlo Evidence}\label{mcrs}

To quantify the implications of Theorem \ref{ftheo}, we can study the sensitivity of confidence intervals estimated by the robust bootstrap and subsampling,
with respect to contaminations by anomalous observations of increasing size.
To this end, we consider the same Monte Carlo setting of Section \ref{bpointmc}. In the last row of Figure \ref{sensacy}, we plot the percentage increase of the length in the average estimated confidence interval, with respect to contaminations of the available data by a single anomalous observation of increasing size for the robust bootstrap and subsampling, respectively. In evident contrast to the findings for conventional testing procedures, Figure \ref{sensacy} shows that the inference implied by our robust
 approach
 is largely insensitive to outliers, with a percentage increase in the average confidence interval length that is less than $1\%$, even for an outliers of size $y_{max}+5$.

The robustness of our approach has favorable implications
for the power of bootstrap and subsampling tests in presence of
anomalous observations.
For the same Monte Carlo setting of  Sections \ref{ri} and \ref{rm},
Figures \ref{poweracy} and \ref{powerstat} show
that in presence of noncontaminated samples (straight line) the frequencies of null hypothesis rejections of robust bootstrap and subsampling tests are again very close to those
observed for nonrobust methods. 
This means that the asymptotic efficiency loss of robust estimators in the absence of anomalous observations do not seem to reduce the performance of the robust bootstrap and subsampling with respect to nonrobust procedures.
However,
in presence of anomalous observations
(dashed line),
robust bootstrap and subsampling tests still provide
an accurate empirical size close to the actual nominal level,
as well as a power curve that is close to the one obtained in the noncontaminated Monte Carlo simulation. 

\section{Empirical Evidence of Return Predictability}\label{eres}

Using our robust resampling tests, we revisit the recent empirical evidence on return
predictability for US stock market data from a robustness perspective.
We study single-predictor and multi-predictor settings, using several well-known predictive variables suggested in the literature,
such as the lagged dividend yield, the difference between option-implied volatility and realized volatility
(Bollerslev, Tauchen and Zhou, 2009), and the share of labor income to consumption (Santos
and Veronesi, 2006). Because of the high persistence of dividend yields, in this empirical analysis we do not consider bootstrap procedures. We compare the evidence produced by
our robust subsampling tests of predictability
with the results of recent testing methods proposed in the literature, including
the bias-corrected method in Amihud, Hurvich and Wang (2008), the Bonferroni approach for local-to-unity asymptotics
in Campbell and Yogo (2006), and conventional subsampling tests.

The dividend yield is the most common predictor of future stock returns, as suggested by a simple present-value logic.\footnote{See, e.g., Rozeff (1984), Campbell and Shiller (1988), Fama and French (1988), Stambaugh (1999), Lewellen (2004), Torous, Valkanov and Yan (2004), Lettau and Ludvigson (2005), and Campbell and Yogo (2006).} However, its forecasting ability
has been called into question, e.g., by the ambiguous empirical evidence of
studies not rejecting the null of no predictability for a number of forecasting horizons and sample periods; see, e.g., Goyal and Welch (2003), and Ang and Bekaert (2007), among others. Whether these ambiguous results are
related to the weakness of
conventional tests in
detecting predictability structures masked by
anomalous observations, is an empirical question that we can analyze using our robust testing method.

The empirical study is articulated in three parts.
Section \ref{spm1}
studies the forecast ability of the lagged dividend yield for explaining
monthly S$\&$P 500 index returns, in
a predictive regression model with a single predictor.
This study allows us to compare the results of our methodology with those of the
Bonferroni approach for local-to-unity asymptotics, which is applicable to univariate regression settings.
Instead, Section \ref{tpm2} considers models with several predictive variables. In Section \ref{dyvrp}, we test
the predictive power of the dividend yield and the variance risk premium,
for quarterly S$\&$P 500 index returns sampled at a monthly frequency in periods marked by a financial bubble and a financial crisis. Section \ref{dyli} tests the predictive power of the dividend yield and the ratio of labor income to consumption for predicting
quarterly value-weighted CRSP index returns.\footnote{We also consider regressions with
three predictive variables that additionally incorporate interest rate proxies. We discuss below the results, but we do not report the details for brevity.}

\subsection{Single-Predictor Model}\label{spm1}

We consider monthly S$\&$P 500 index returns from Shiller (2000),
$R_{t}=(P_{t}+d_{t})/P_{t-1}$,
where $P_{t}$ is the end of month real stock price and $d_{t}$
the real dividend paid during month $t$. Consistent with the literature, the
annualized dividend series $D_{t}$ is defined as,
\begin{equation}\label{dividend}
D_{t}=d_{t}+(1+r_{t})d_{t-1}+(1+r_{t})(1+r_{t-1})d_{t-2}+\dots+(1+r_{t})\dots(1+r_{t-10})d_{t-11},
\end{equation}
where $r_{t}$ is the one-month maturity Treasury-bill rate.
We estimate
the predictive regression model
\begin{equation}\label{pregmodel31}
\ln(R_{t}) = \alpha+\beta
\ln\bigg(\frac{D_{t-1}}{P_{t-1}}\bigg)+\epsilon_{t}\ ;\ t=1,\dots,n ,
\end{equation}
and test the null of no predictability, ${\cal
H}_0:\beta_{0}=0$.

We collect monthly observations in the sample
period 1980-2010 and estimate the predictive regression model using rolling windows of 180 observations.
Table \ref{tablef1} reports the detailed point estimates and test results for the different testing procedures in the four subperiods 1980-1995, 1985-2000, 1990-2005, 1995-2010.

We find that while the robust subsampling tests
always clearly reject the hypothesis of no predictability at the 5\%-significance level,
the conventional testing approaches produce a weaker and more ambiguous predictability evidence.
For instance, the subsampling tests
cannot reject ${\cal H}_0$ at the $10\%$ significance level in
subperiod 1985-2000, while
the bias-corrected method and the Bonferroni approach fail to reject ${\cal H}_0$ at the $10\%$ significance level in the subperiod
1995-2010.

It is interesting to study to which extent anomalous observations in sample periods 1985-2000 and 1995-2010 might have caused the diverging conclusions of robust and nonrobust testing methods. We exploit the properties of our robust testing method to identify
such data points. Figure \ref{huberweights} plots the time series of Huber weights estimated by the robust estimator (\ref{huber1}) of the predictive regression model (\ref{pregmodel31}).

We find that subperiod 1998-2002 is characterized by a cluster
of infrequent anomalous
observations, which are likely related to the abnormal stock market performance during the NASDAQ bubble in the second half of the 1990s.
Similarly, we find a second cluster of anomalous observations in subperiod 2008-2010, which is linked to the extraordinary events of the recent financial crisis.
Overall, anomalous observations are less than 4.2\% of the whole data sample, and they
explain the failure of conventional
testing methods in uncovering hidden predictability structures in these sample periods.

We find that the most influential observation
before 1995 is November 1987, following the Black Monday on October 19 1987. During the subperiod 1998-2002, the most influential observation is October 2001, reflecting the impact on financial markets of the terrorist attack
on September 11 2001. Finally, the most anomalous observation in the whole sample period 1980-2010 is October 2008, following the Lehman Brothers default on September 15 2008. 

To investigate the potential presence of time-varying parameters in the predictive regression model (\ref{pregmodel31}), we test formally for the presence of structural breaks.
We apply both the standard Wald test statistic proposed by Andrews (1993), and its robust version introduced in Gagliardini, Trojani and Urga (2005). 
Asymptotic critical values of these test statistics are provided
in Andrews (1993). To improve on the inference of asymptotic tests, we also follow Diebold and Chen (1996) and Gagliardini, Trojani and Urga (2005), and implement nonrobust and robust resampling tests of structural breaks for our
predictive regression model.
Using all methods, we never reject the null hypothesis of no structural break at the $10\%$ significance level in our sample period.
Therefore, the lack of predictability produced in some cases by the standard approach cannot be explained by a structural break in a significant subset of the the data. This evidence
supports the presence of a small subset of influential anomalous observations
as a plausible explanation for the diverging conclusions 
of classical and robust predictive regression methods.

Finally, we study the out-of-sample accuracy of predictive regressions estimated by nonrobust and robust methods. Borrowing from Goyal and Welsh (2003) and Campbell and Thompson (2008), we introduce the out-of-sample $R^2_{OS}$ statistics, defined as
\begin{equation}\label{r2os}
R^2_{OS}=1-\frac{\sum_{t=1}^{T} (y_t-\hat{y}_{t,ROB})^2}{\sum_{t=1}^{T} (y_t-\hat{y}_{t,OLS})^2},
\end{equation}
where $\hat{y}_{t,ROB}$ and $\hat{y}_{t,OLS}$ are the fitted values from a predictive regression estimated up to period $t-1$, using the robust Huber estimator and the OLS estimator, respectively. 
Whenever statistic $R^2_{OS}$ is positive,
the robust approach
yields a lower average mean squared prediction error than the nonrobust method, providing more accurate out-of-sample forecasts. As reported in Table \ref{tableos}, we obtain $R^2_{OS}=0.51\%$. Therefore, 
besides the more robust in-sample results, our robust approach also yields better out-of-sample predictions. To compare the out-of-sample accuracy of the nonrobust and robust approaches with respect to the simple forecast based on the sample mean of market returns, we consider also the 
out-of-sample $R^2_{OS,K}$ statistic, defined as
\begin{equation}\label{r2os2}
R^2_{OS,K}=1-\frac{\sum_{t=1}^{T} (y_t-\hat{y}_{t,K})^2}{\sum_{t=1}^{T} (y_t-\bar{y}_t)^2},
\end{equation}
where $\bar{y}_{t}$ is the historical average return estimated through period $t-1$, and $K=ROB,OLS$. As reported in Table \ref{tableos}, we obtain $R^2_{OS,ROB}=4.04\%$, and
$R^2_{OS,OLS}=3.51\%$. Therefore, both nonrobust and robust methods provide more accurate out-of-sample predictions than simple forecast based on the sample mean of market returns.

\subsection{Two-Predictor Model}\label{tpm2}

We extend our empirical study to two-predictor regression models.
This approach has several purposes. First, we can
assess the incremental predictive ability of the dividend yield, in relation to other well-known competing predictive variables. Second, we can verify the power properties of robust subsampling tests in settings with several predictive variables.

Section \ref{dyvrp} borrows from Bollerslev, Tauchen and Zhou (2009) and studies the joint predictive
ability of the dividend yield and the variance risk premium. Section \ref{dyli} follows the two-predictor model in Santos and Veronesi (2006), which considers the ratio of labor income to consumption as an additional predictive variable to the dividend yield.

\subsubsection{Bollerslev, Tauchen and Zhou}\label{dyvrp}

We consider again monthly S$\&$P 500 index and dividend data between January 1990 and December 2010, and test the predictive regression model:
\begin{equation}\label{pregmodel3f}
\frac{1}{k}\ln(R_{t+k,t}) = \alpha+\beta_1
\ln\bigg(\frac{D_{t}}{P_{t}}\bigg)+\beta_2
VRP_t
+\epsilon_{t+k,t},
\end{equation}
where $\ln(R_{t+k,t}):=\ln(R_{t+1})+\dots+\ln(R_{t+k})$ and the variance risk premium $VRP_t:=IV_t-RV_t$ is defined by
the difference of
the S\&P 500 index option-implied volatility at time $t$, for one month maturity options, and
the ex-post realized return variation over the period $[t -1, t ]$.
Bollerslev, Tauchen and Zhou (2009) show that the variance risk premium is the most
significant
predictive variable of market returns over a quarterly horizon. Therefore, we test
the predictive regression model (\ref{pregmodel3f}) for $k=3$.

Let $\beta_{01}$ and $\beta_{02}$ denote the true values of parameters $\beta_1$ and $\beta_2$, respectively. Using
the subsampling tests, as well as our robust subsampling tests, we first test
the null hypothesis of no return predictability by the dividend yield, ${\cal
H}_{01}:\beta_{01}=0$.


Table \ref{tablef23} collects the detailed point estimates and testing results.
We find again that the robust tests always clearly reject the null of no predictability at the 5\%-significance level.
In contrast, the conventional 
subsampling tests
produce weaker and more ambiguous results, with uniformly lower $p$-values (larger confidence intervals) and a nonrejection of the null of no predictability at the $5\%-$level in period 1994-2009. Since the Bonferroni approach
in Campbell and Yogo (2006) is defined for single-predictor models, we cannot
apply this method in
model (\ref{pregmodel3f}). Unreported
results for the multi-predictor testing method
in Amihud, Hurvich and Wang
(2008) show that for data windows following window 1993-2008 the bias-corrected method cannot
reject null hypothesis
${\cal
H}_{01}$ at the 10\% significance level.

By inspecting the Huber weights (\ref{huberw}), implied by the robust estimation of the predictive regression model (\ref{pregmodel3f}), we find again a cluster of infrequent anomalous observations, both during the NASDAQ bubble and the recent financial crisis. In this setting, the most influential observation is still October 2008, reflecting the Lehman Brothers default on September 15 2008. 


Table \ref{tablef23} reports the estimates and testing results for parameter $\beta_{02}$. In contrast to the previous evidence,
we find that all tests under investigation clearly reject ${\cal H}_{02}$ at the 5\%-significance level, thus confirming
the remarkable return forecasting ability of the variance risk premium noticed in Bollerslev, Tauchen and Zhou (2009), as well as the international evidence reported in Bollerslev, Marrone, Xu and Zhou (2014).\footnote{Besides the two-predictor model (\ref{pregmodel3f}), we also consider the three-predictor model
\begin{equation}\label{pregmodel3fint}
\frac{1}{k}\ln(R_{t+k,t}) = \alpha+\beta_1
\ln\bigg(\frac{D_{t}}{P_{t}}\bigg)+\beta_2VRP_t+\beta_3 LTY_t
+\epsilon_{t+k,t},
\end{equation}
where $LTY_t$ is the detrended long-term yield, defined as the ten-year Treasury yield minus its trailing twelve-month moving averages. Again, using the standard subsampling and the robust subsampling, we find evidence in favor of predictability at $5\%$ significance level for the variance risk premium for the sample period 1990-2010. In contrast, all tests do not reject the null hypothesis of no predictability at $10\%$ significance level for the detrended long-term yield. Finally, both conventional and robust tests reject the null hypothesis of no predictability at the $5\%$ significance level for the dividend yield. The comparison of these empirical results with those obtained in the two-predictor model (\ref{pregmodel3f}) again confirms the reliability of our robust tests and the (possible) failure of nonrobust procedures in uncovering predictability structures in presence of anomalous observations.}
Finally, also for this predictive regression model, we do not find evidence of structural breaks at the $10\%$ significance level. Moreover, we obtain out-of-sample statistics $R^2_{OS}=1.40\%$ and $R^2_{OS,ROB}=5.70\%$,
indicating again an improved out-of-sample predictive power for our robust approach.


\subsubsection{Santos and Veronesi}\label{dyli}

We finally focus on the two-predictor regression model proposed in Santos and Veronesi (2006):
\begin{equation}\label{pregmodel3}
\ln(R_{t}) = \alpha+\beta_1
\ln\bigg(\frac{D_{t-1}}{P_{t-1}}\bigg)+\beta_2 s_{t-1}+\epsilon_{t},
\end{equation}
where $s_{t-1}=w_{t-1}/C_{t-1}$ is the share of labor income to consumption.
We make use of quarterly returns on the value weighted CRSP index, which includes NYSE, AMEX, and NASDAQ stocks, in the sample period Q1,1955-Q4,2010.
The dividend time-series is
also obtained from CRSP, while the risk free rate is the three-months Treasury bill rate.
Labor income and consumption are obtained from the Bureau of Economic
Analysis.\footnote{As in Lettau and Ludvigson (2001), labor income is defined as wages and salaries, plus transfer payments, plus other labor income, minus personal contributions for social insurance, minus taxes. Consumption is defined as nondurables plus services.}

Let $\beta_{01}$ and $\beta_{02}$ denote the true values of parameters $\beta_1$ and $\beta_2$, respectively. Using
subsampling tests, as well as our robust testing method, we first test
the null hypothesis of no predictability by the dividend yield, ${\cal
H}_{01}:\beta_{01}=0$.
Table \ref{tablef45} collects detailed point estimates and test results for the four subperiods 1950-1995, 1955-2000, 1960-2005, 1965-2010. We find again that our robust tests always clearly reject ${\cal H}_{01}$ at the $5\%$-significance level.
In contrast, conventional
tests produce more ambiguous results, and cannot reject at the 10\%-significance level the null hypothesis
${\cal H}_{01}$ for subperiod 1955-2000.

Table \ref{tablef45} summarizes estimation and testing results
for parameter $\beta_{02}$.
While the conventional tests produce a weak and mixed evidence of return predictability using labor income proxies, e.g., by not rejecting ${\cal
H}_{02}$ at the 10\%-level in subperiod 1950-1995, the robust tests
produce once more a clear and consistent predictability evidence for all sample periods.

The clusters of anomalous observations (less than $4.6\%$ of the data in the full sample), highlighted by the estimated weights in Figure \ref{huberweights2}, further indicate that conventional
tests might
fail to uncover hidden
predictability structures using samples of data that include observations from the NASDAQ bubble or the recent financial crisis, a feature that
was noted
in Santos and Veronesi (2006) and Lettau and Van Nieuwerburgh (2007) from a completely different angle.
In such contexts, the robust subsampling tests are again found to control well the potential damaging effects of anomalous observations, by
providing a way to consistently uncover hidden predictability features also when the data may only approximately follow the given predictive regression model.
We do not find evidence of structural breaks in the predictive relation at the $10\%$ significance level, while we obtain an out-of-sample statistic $R^2_{OS}=1.13\%$, indicating that our robust approach improves the
out-of-sample predictions of classical predictive regression methods.
However, in this case the out-of-sample statistic $R^2_{OS,ROB}=-2.73\%$ shows no improvement over quarterly forecasts provided by standard sample mean of market returns.

\section{Conclusion}
A large literature studies the predictive
ability of a variety of
economic variables for future market returns.
Several useful
testing approaches for testing the null of no predictability
in
predictive regressions
with correlated errors
and nearly integrated regressors have been proposed, including tests
that rely
on nonparametric Monte Carlo simulation methods, such as the bootstrap and subsampling.
All these methods improve
on the
conventional asymptotic tests
under
the ideal assumption of an exact predictive regression model. However,
we find
by Monte Carlo evidence that even small violations of such assumptions, generated by a small fraction of anomalous observations, can result in
large deteriorations in the reliability of all these tests.

To systematically
understand the problem, we characterize theoretically the robustness properties of resampling tests of
predictability in a time series context, using the concept of quantile breakdown point, which is a measure of the global resistance of a testing procedure to outliers. We obtain general quantile breakdown point formulas, which highlight an important nonresistance of
these tests
to anomalous observations that might infrequently contaminate the predictive regression model, thus
confirming the fragility
detected in our Monte Carlo study.

We propose a more robust
testing method for
predictive regressions
with correlated errors
and nearly integrated regressors, by introducing a novel general
class of fast and robust resampling procedures for predictive regression models at sustainable
computational costs.
The new resampling tests
are resistant to anomalous observations in the data and imply more robust confidence intervals and inference results.
We demonstrate by
Monte Carlo simulations their good resistance to outliers and their
improved
finite-sample properties in presence of anomalous observations.

In our
empirical study for US stock market data,
we study single-predictor and multi-predictor
models, using well-known predictive variables in the literature, such as
the market dividend yield, the difference between index option-implied volatility and realized volatility
(Bollerslev, Tauchen and Zhou, 2009), and the share of labor income to consumption (Santos
and Veronesi, 2006).

First, using the robust tests we find clear-cut
evidence
that the dividend yield is a robust predictive variable for market returns,
in each subperiod and for each sampling
frequency and forecasting horizon considered.
In contrast, tests
based on
bias-corrections, local-to-unity asymptotics, or standard subsampling procedures provide more ambiguous findings, by not rejecting the null of no predictability in a number of cases.

Second, we find that the difference between option-implied volatility and realized volatility is a robust
predictive variable of future market returns at quarterly forecasting horizons,
both using robust
and nonrobust testing methods. This finding confirms the remarkable return forecasting ability of
the variance risk premium, first noticed
in Bollerslev, Tauchen and Zhou (2009).

Third, we find that conventional testing approaches deliver
an ambiguous evidence of return predictability
by proxies of labor income, which is either absent or weak in the
sample periods 1955-2000 and 1965-2010, respectively.
In contrast, the null of no predictability
is always clearly rejected using the robust testing approach,
indicating that
the weak findings of the conventional tests are
likely deriving from their low ability to detect predictability structures
in
presence of small sets of anomalous observations.

Fourth, we exploit the properties of our robust tests
to identify potential anomalous observations
that might explain the diverging conclusions of robust and nonrobust
methods.
We find a fraction of less than about 5\% of anomalous observations in the data,
which tend to cluster during the NASDAQ bubble and the more recent financal crisis.
Anomalous data points, including the Lehman Brothers default on September 2008, the
terrorist attack of September 2001, the Black Monday on October 1987, and the Dot-Com bubble
collapse in August 2002, are responsible for the failure of conventional testing methods in uncovering
the hidden predictability structures for these sample periods.

Fifth, we find that the different conclusions of our robust approach with respect to conventional methods cannot be explained by the presence of time-varying predictive regression parameters, as we do not find any evidence of structural breaks in predictive relations over our sample period. Moreover, we find that the out-of-sample predictions for monthly market
returns of our robust approach are more accurate than the ones given by conventional predictive regressions
and sample mean market return forecasts.

Finally, while our subsampling tests have been developed in the context of standard predictive systems with autocorrelated regressors, our approach is extendable also to more general settings, including potential nonlinear predictive relations or unobserved state variables.
For instance, van Binsbergen and Koijen (2010) propose a latent-variable approach and a Kalman filter to estimate a present value model with hidden and persistent expected return and dividend growth, in order to formulate powerful tests for the joint predictability of stock returns and dividend growth. The application of our robust subsampling tests in the context of such present value models is an interesting avenue for future research.

\newpage

{\noindent \textbf{\Large Appendix A: Mathematical Proofs}} \newline
\newline

\begin{proof}[Proof of Theorem \protect\ref{bsubboot}]
We first consider the subsampling. 
The value $\frac{\lceil mb\rceil}{n}$ is the smallest fraction of
outliers, that causes the breakdown of statistic $T$
in a block of size $m$. Therefore, the first inequality is
satisfied.

For the second inequality, 
we denote by
$z^{N}_{(m),i}=(z_{(i-1)m+1},\dots,z_{im})$, $i=1,\dots,r$ and
$z^{O}_{(m),i}=(z_{i},\dots,z_{i+m-1})$, $i=1,\dots,n-m+1$, the
nonoverlapping and overlapping blocks of size $m$, respectively.
Given the original sample $z_{(n)}$, for the first nonoverlapping
block $z^{N}_{(m),1}$, consider the following type of contamination:
\begin{equation}\label{co}
z^{N}_{(m),1}=(z_{1},\dots,z_{m-\lceil mb\rceil},C_{m-\lceil
mb\rceil+1},\dots,C_{m}),
\end{equation}
where $z_{i}$, $i=1,\dots,m-\lceil mb\rceil$ and $C_{j}$,
$j=m-\lceil mb\rceil+1,\dots,m$, denote the noncontaminated and
contaminated points, respectively. By construction, the first
$m-\lceil mb\rceil+1$ overlapping blocks $z^{O}_{(m),i}$,
$i=1,\dots,m-\lceil mb\rceil+1$, contain $\lceil mb\rceil$ outliers.
Consequently, $T(z^{O}_{(m),i})=+\infty$, $i=1,\dots,m-\lceil
mb\rceil+1$. Assume that the first $p<r-1$ nonoverlapping blocks
$z^{N}_{(m),i}$, $i=1,\dots,p$, have the same contamination as in
(\ref{co}). Because of this contamination, the number of
statistics $T^{S\ast}_{n,m}$ which diverge to infinity is $mp-\lceil
mb\rceil+1$.

$Q_{t,n,m}^{S\ast}=+\infty$, when the proportion of statistics
$T^{S\ast}_{n,m}$ with $T^{S\ast}_{n,m}=+\infty$ is larger than
$(1-t)$. Therefore,

\begin{equation*}
b^{S\ast}_{t,n,m}\le \inf_{\{p\in\mathbb{N}, p\le
r-1\}}\bigg\{p\cdot \frac{\lceil mb\rceil}{n} \bigg\vert
\frac{mp-\lceil mb\rceil+1}{n-m+1}>1-t\bigg\}.
\end{equation*}

Finally, we consider the bootstrap case.
The proof of the first inequality 
follows the same lines as the
proof for the subsampling case. We focus on the
second inequality.

Consider
$z^{N}_{(m),i}$, $i=1,\dots,r$. Assume that $p_{2}$ of these
nonoverlapping blocks are contaminated with exactly $p_{1}$ outliers
for each block, while the remaining $(r-p_{2})$ are noncontaminated
($0$ outlier), where $p_{1},p_{2}\in\mathbb{N}$ and $p_{1}\le m$,
$p_{2}\le r-1$. Moreover, also assume that the contamination of the $p_{2}$ contaminated
blocks has the structure defined in (\ref{co}). The block bootstrap constructs a $n$-sample randomly selecting
with replacement $r$ overlapping blocks of size $m$. Let $X$ be the
random variable which denotes the number of contaminated blocks in
the random bootstrap sample. It follows that $X\sim
BIN(r,\frac{mp_{2}-p_{1}+1}{n-m+1})$.

By Equation (\ref{break2}), $Q_{t,n,m}^{B\ast}=+\infty$, when the
proportion of statistics $T^{B\ast}_{n,m}$ with
$T^{B\ast}_{n,m}=+\infty$ is larger than $(1-t)$. The smallest number
of outliers such that $T^{B\ast}_{n,m}=+\infty$ is by definition
$nb$. Let $p_{1},p_{2}\in\mathbb{N}, p_{1}\le m,
p_{2}\le r-1$. Consequently,
\begin{equation*}
b^{B\ast}_{t,n,m}\le
\frac{1}{n}\cdot\bigg[\inf_{\{p_{1},p_{2}\}}\bigg\{p=p_{1}\cdot
p_{2}\bigg\vert P\bigg(
BIN\bigg(r,\frac{mp_{2}-p_{1}+1}{n-m+1}\bigg)\ge\bigg\lceil\frac{nb}{p_{1}}\bigg\rceil\bigg)>1-t\bigg\}\bigg].
\end{equation*}
This concludes the proof of Theorem \ref{bsubboot}.
\end{proof}
\vspace{0.5cm}

\begin{proof}[Proof of Theorem \protect\ref{ftheo}]
Since the estimating function $\psi_{n}$ is bounded, it turns out that
\begin{equation}\label{term}
[\hat{\Sigma}_{k,i}^{R\ast}]^{-1/2}[\nabla_{\theta}\psi_{k}(z_{(k),i}^{\ast},\hat{\theta}_{n}^R)]^{-1}
\psi_{k}(z_{(k),i}^{\ast},\hat{\theta}_{n}^R),
\end{equation}
may degenerate only when (i) $det \left(\hat{\Sigma}_{k,i}^{R\ast}\right)=0$ or (ii) $det\left(\nabla_{\theta}\psi_{k}(z_{(k),i}^{\ast},\hat{\theta}_{n}^R)\right)=0$.
Consider the function
\begin{equation}\label{huber2p}
f(z_t,\theta)=(y_{t}-\theta'
w_{t-1})w_{t-1}\cdot\min\Bigg(1,\frac{c}{\vert\vert (y_{t}-\theta'
w_{t-1})w_{t-1} \vert\vert}\Bigg).
\end{equation}
Using some algebra, we can show that
\begin{equation}\label{der}
\nabla_{\theta}f(z_t,\theta)=\left\{ \begin{array}{ll}
-(1,x_{t-1})'(1,x_{t-1}), & \textrm{if } \vert\vert (y_{t}-\theta'
w_{t-1})w_{t-1} \vert\vert\le c,\\ \mathbb{O}_{2\times2}, & \textrm{if }  \vert\vert (y_{t}-\theta'
w_{t-1})w_{t-1} \vert\vert> c,
\end{array} \right.
\end{equation}
where $\mathbb{O}_{2\times2}$ denotes the $2\times2$ null matrix. It turns out that by construction the matrix
$\nabla_{\theta}(\psi_{k}(z_{(k),i}^{\ast},\hat{\theta}_n^R))$ is semi-positive definite, and in particular $det(\nabla_{\theta}(\psi_{k}(z^{\ast}_{(k),i},\hat{\theta}_n^R))=0$, only when $\vert\vert (y_{t}-\hat{\theta}'^R_n
w_{t-1})w_{t-1} \vert\vert> c$, for all the observations $(y_t,w_{t-1})'$ in the random sample $z^{\ast}_{(k),i}$.

%

For the original sample, consider following type of contamination
\begin{equation}\label{cof}
z_{(n)}=(z_{1},\dots,z_{j},C_{j+1},\dots,C_{j+p},z_{j+p+1},\dots,z_n),
\end{equation}
where $z_{i}$, $i=1,\dots,j$ and $i=j+p+1,\dots,n$ and $C_{i}$,
$i=j+1,\dots, j+p$, denote the noncontaminated and
contaminated points, respectively, where $p\ge m$. It turns out that all the $p-m+1$ overlapping blocks of size $m$
\begin{equation}
(C_{j+i},\dots, C_{j+i+m-1}),
\end{equation}
$i=1,\dots,p-m+1$ contain only outliers. Therefore, for these $p-m+1$ blocks we have that $det\left(\nabla_{\theta}\psi_{m}(C_{j+i},\dots, C_{j+i+m-1},\hat{\theta}_{n}^R)\right)=0$, i.e., some components of vector (\ref{term}) may degenerate to infinity. Moreover,
$Q_{t,n,m}^{RS\ast}=+\infty$ when the proportion of statistics
$T^{RS\ast}_{n,m}$ with $T^{RS\ast}_{n,m}=+\infty$ is larger than
$(1-t)$. Therefore, $b^{RS\ast}_{t,n,m}= \inf_{\{p\in\mathbb{N}, m\le p\le
n-m+1\}}\bigg\{\frac{p}{n} \bigg\vert
\frac{p-m+1}{n-m+1}>1-t\bigg\}$, which proves the result in Equation (\ref{fbps}).

For the result in Equation (\ref{fbpb}), note that because of the contamination defined in (\ref{cof}), by construction we have $p-m+1$ overlapping blocks of size $m$ with exactly $m$ outliers, and $n-(p-m+1)$ blocks with less than $m$ outliers. Let $X$ be the random variable which denotes the number of full contaminated blocks in the random bootstrap sample. It follows that $X\sim BIN\left(r,\frac{p-m+1}{n-m+1}\right)$. To imply (i) or (ii), all the random observations $(z_1^{\ast},\dots,z_k^{\ast})$ have to be outliers, i.e., $X=r$.
By Equation (\ref{break2}), $Q_{t,n,m}^{RB\ast}=+\infty$, when the
proportion of statistics $T^{RB\ast}_{n,m}$ with
$T^{RB\ast}_{n,m}=+\infty$ is larger than $(1-t)$. Consequently,
\begin{displaymath}
b^{RB\ast}_{t,n,m}=
\frac{1}{n}\cdot\bigg[\inf_{\{p\in\mathbb{N}, p\le
n-m+1\}}\bigg\{p\bigg\vert P\bigg(
BIN\bigg(r,\frac{p-m+1}{n-m+1}\bigg)=r\bigg)>1-t\bigg\}\bigg].
\end{displaymath}
This concludes the proof.
\end{proof}

\vspace{0.5cm}

\newpage

\newpage

\begin{figure}[!h]
\center $ \begin{array}{ccc}
\includegraphics[height=1.5in,width=2in]{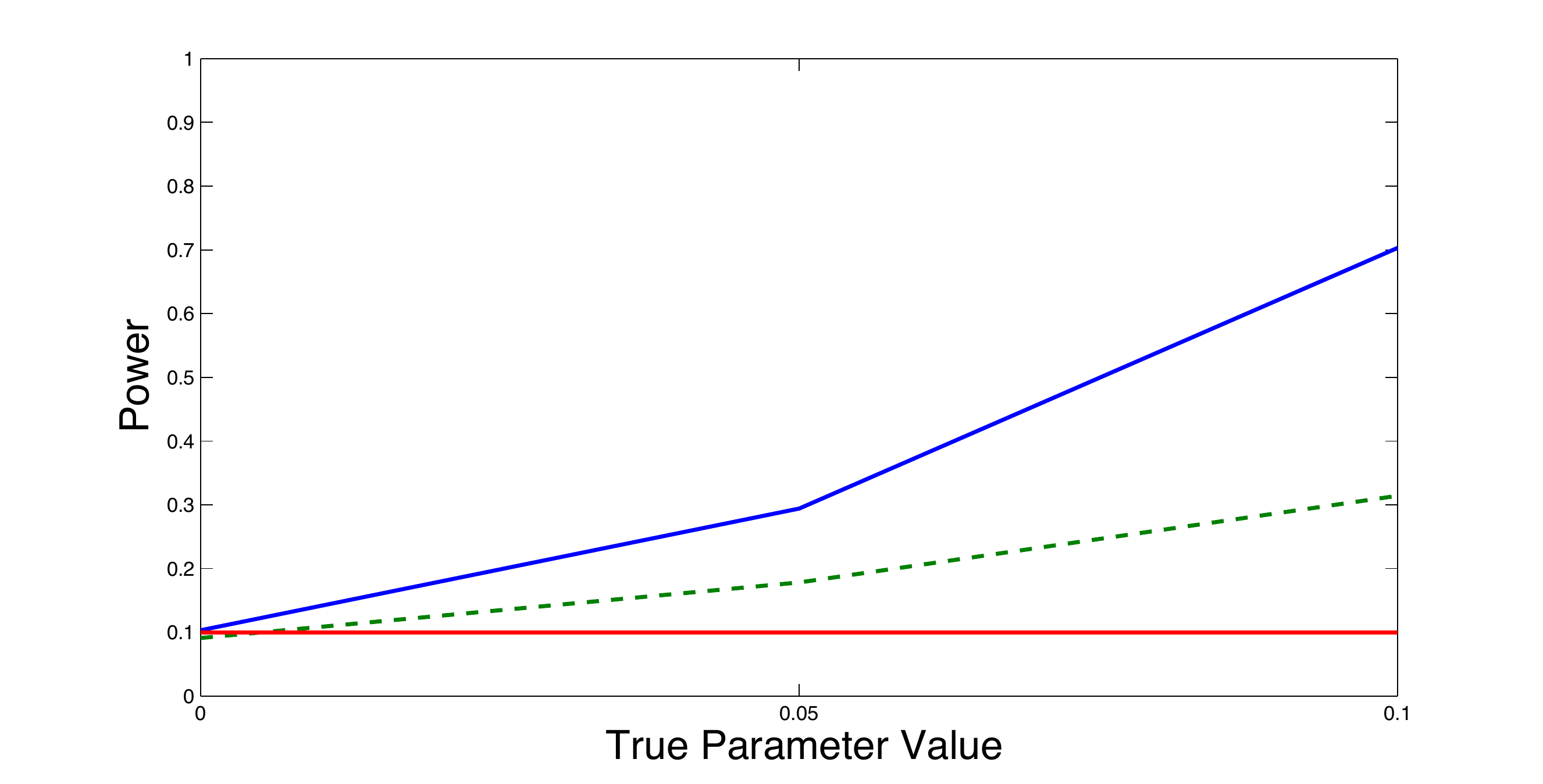} & 
\includegraphics[height=1.5in,width=2in]{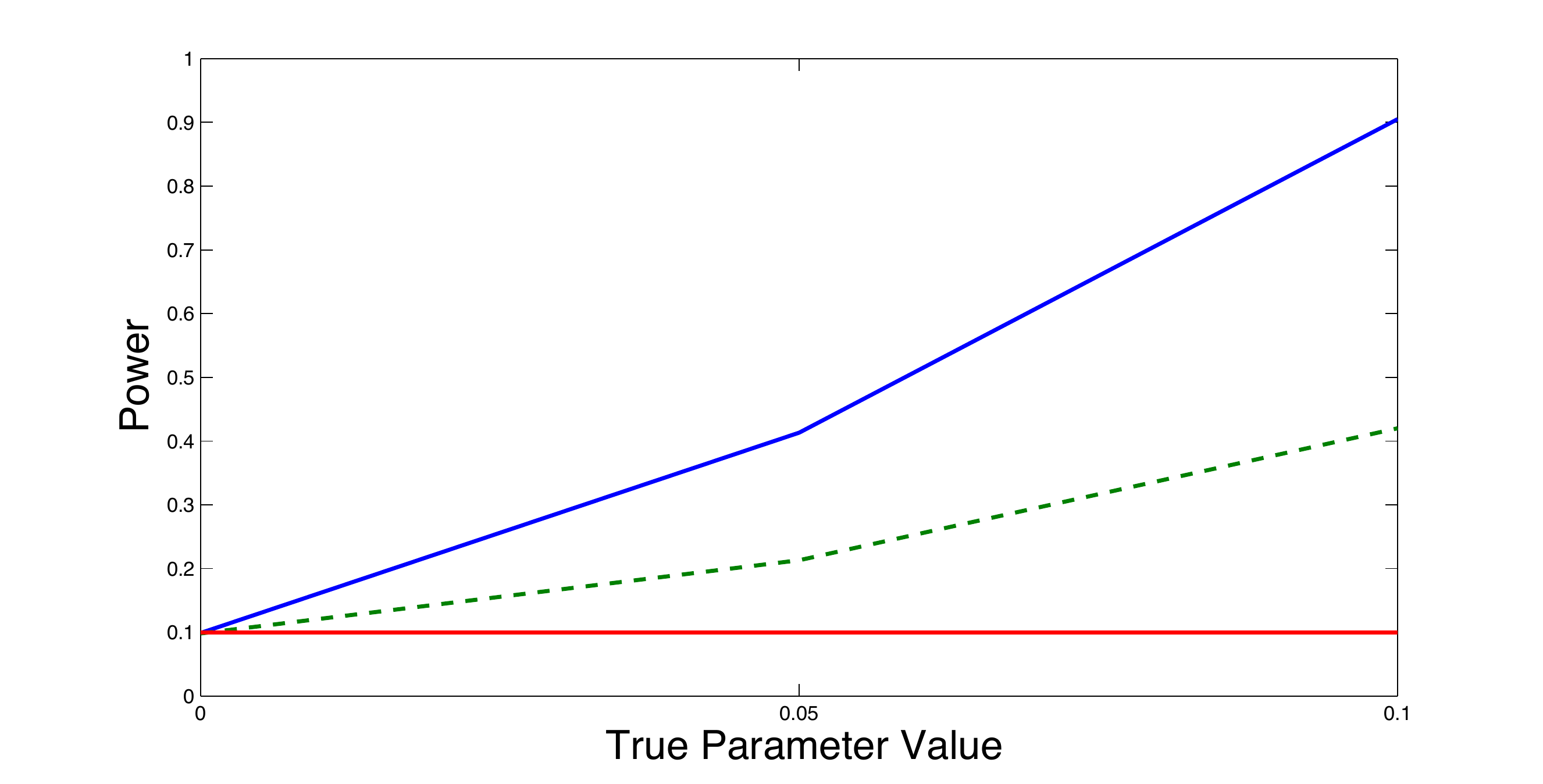} &
\includegraphics[height=1.5in,width=2in]{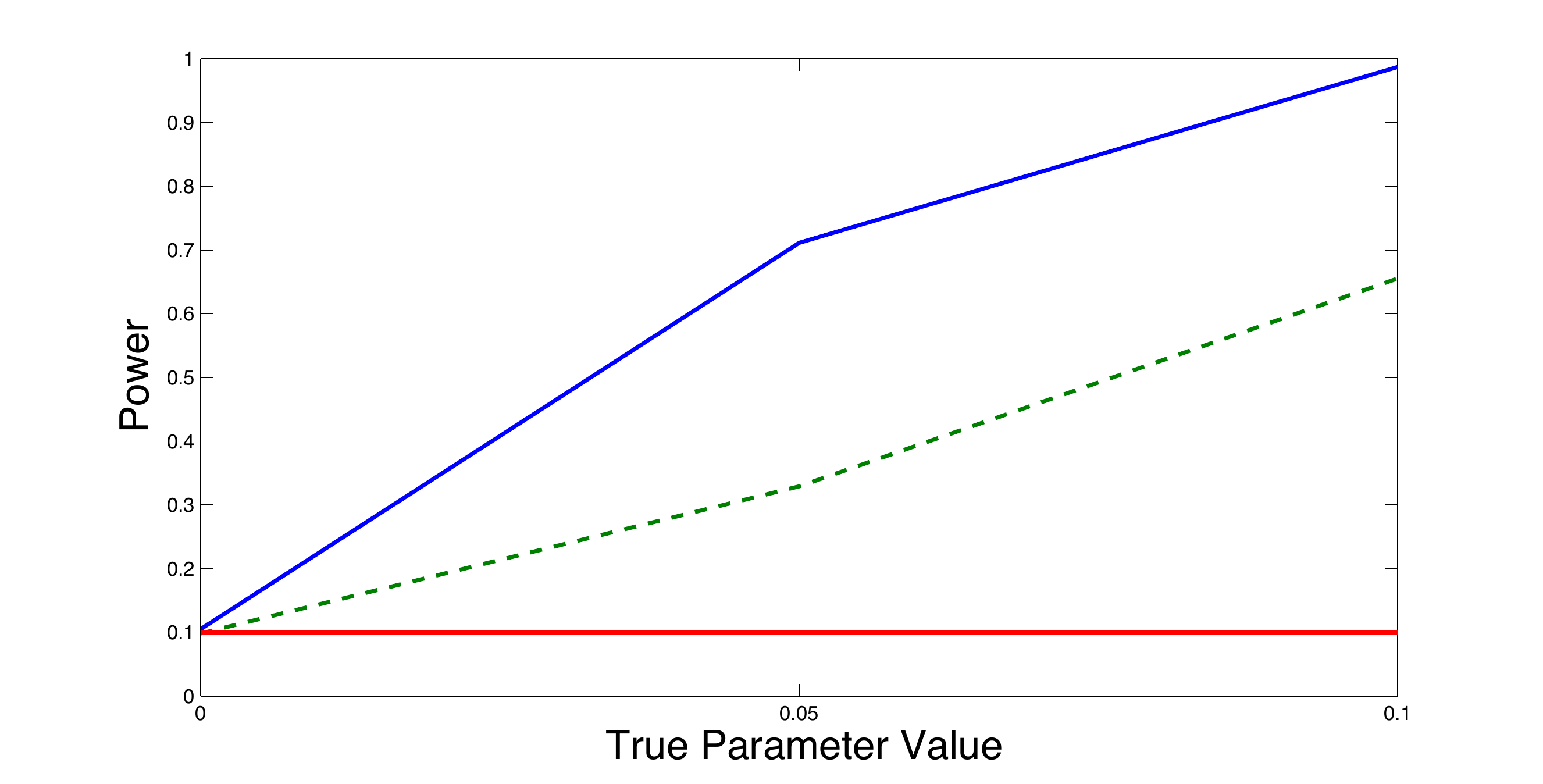} \\
\includegraphics[height=1.5in,width=2in]{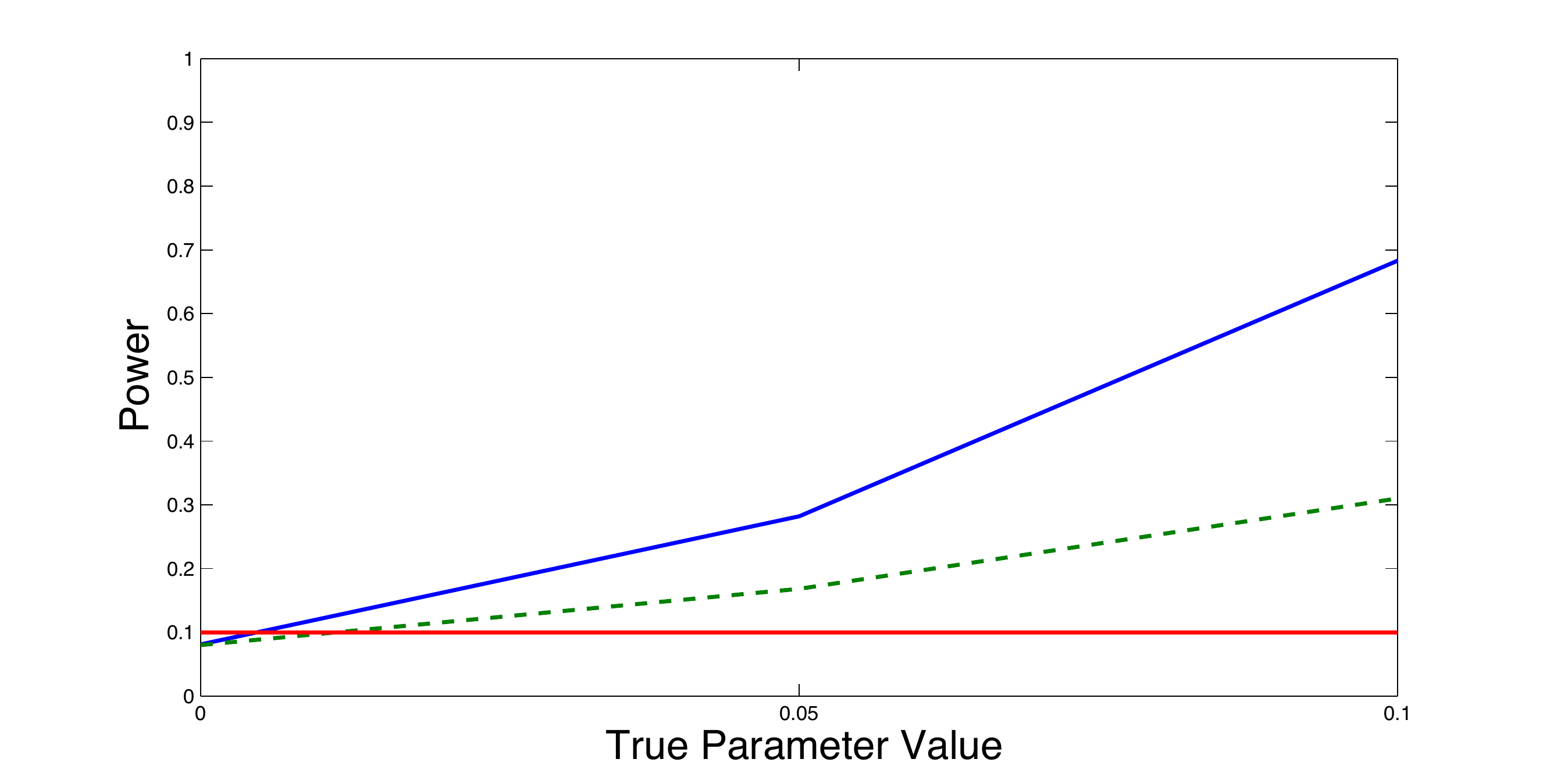} & 
\includegraphics[height=1.5in,width=2in]{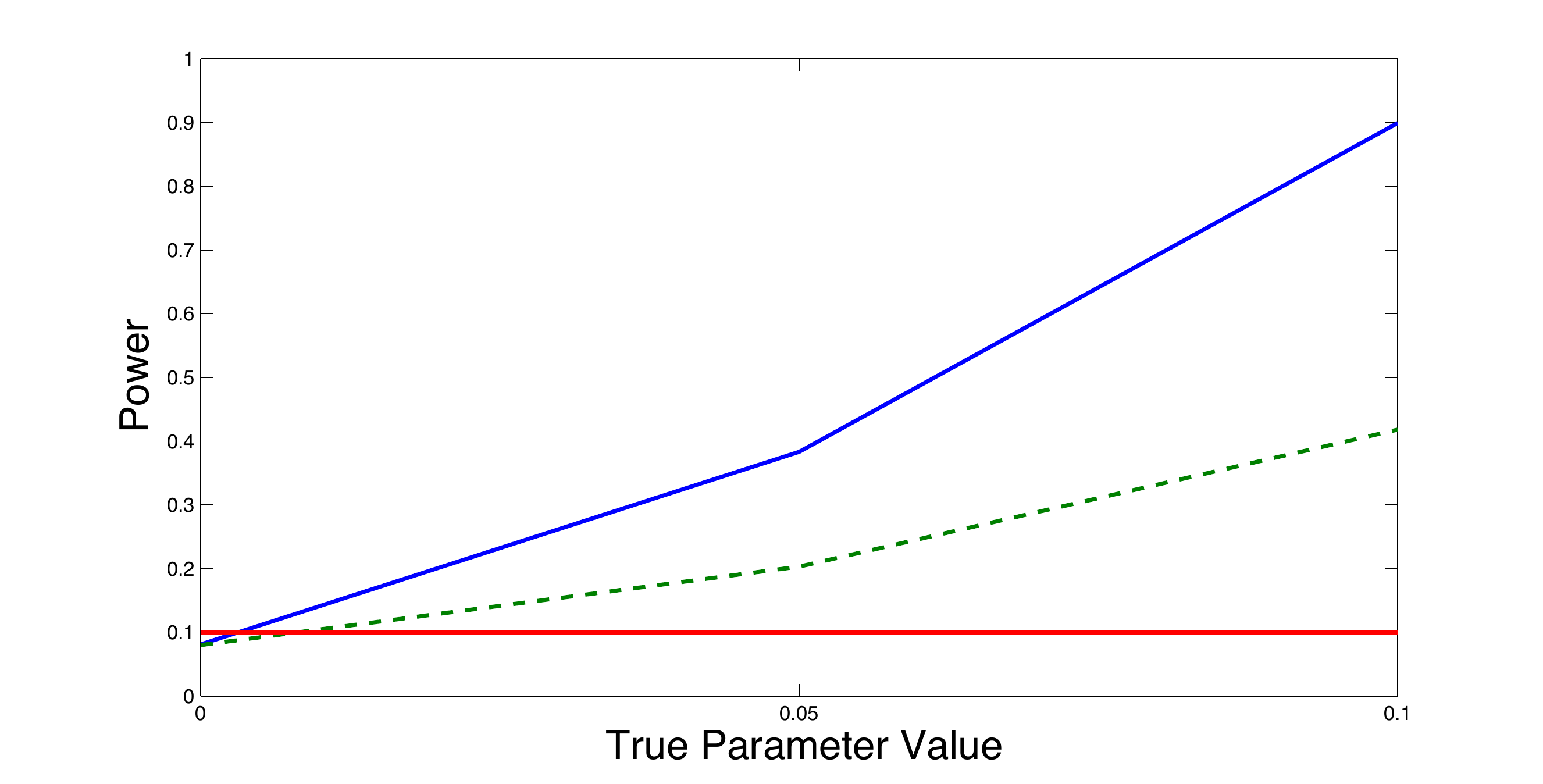} &
\includegraphics[height=1.5in,width=2in]{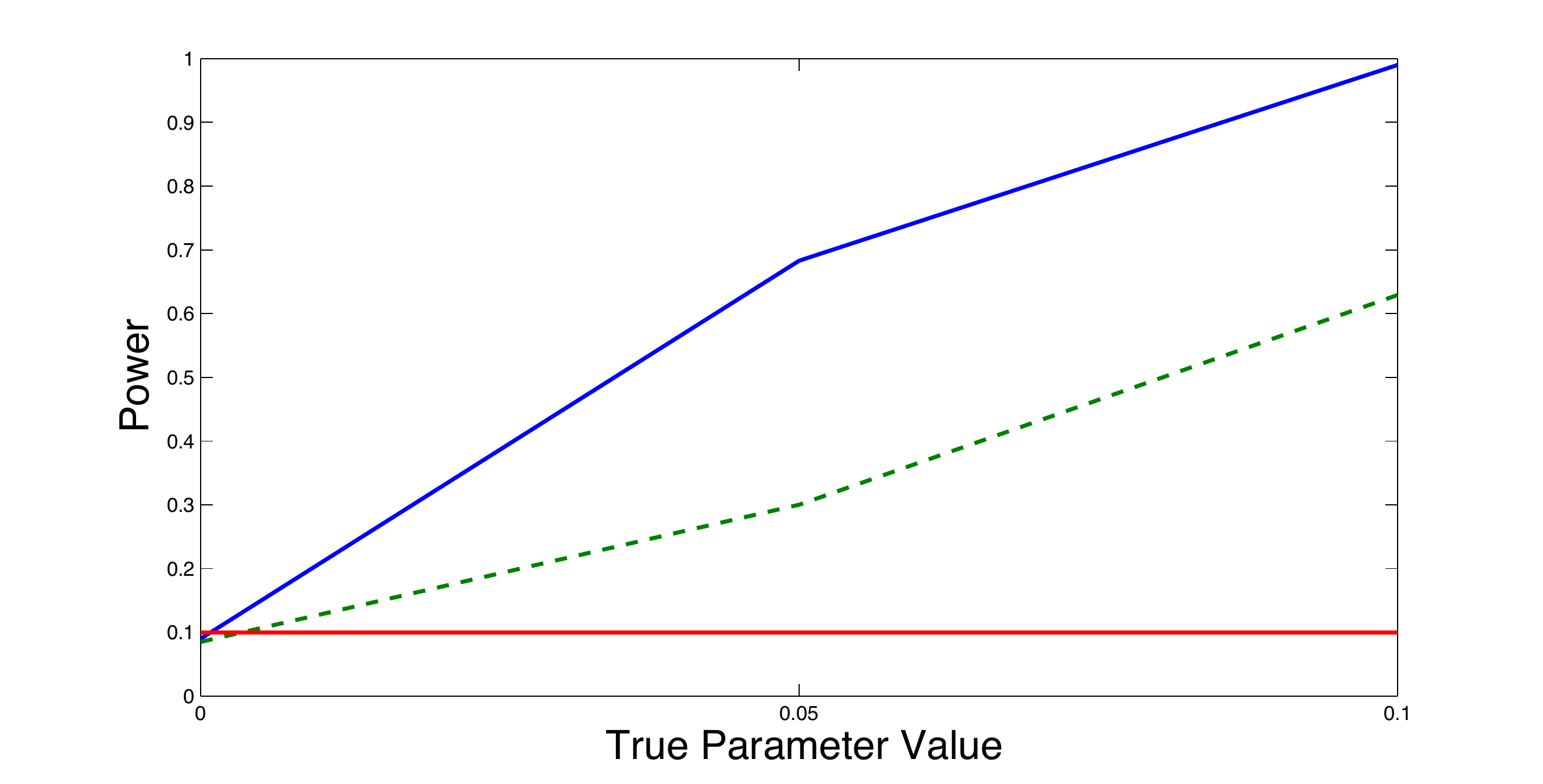} \\
\includegraphics[height=1.5in,width=2in]{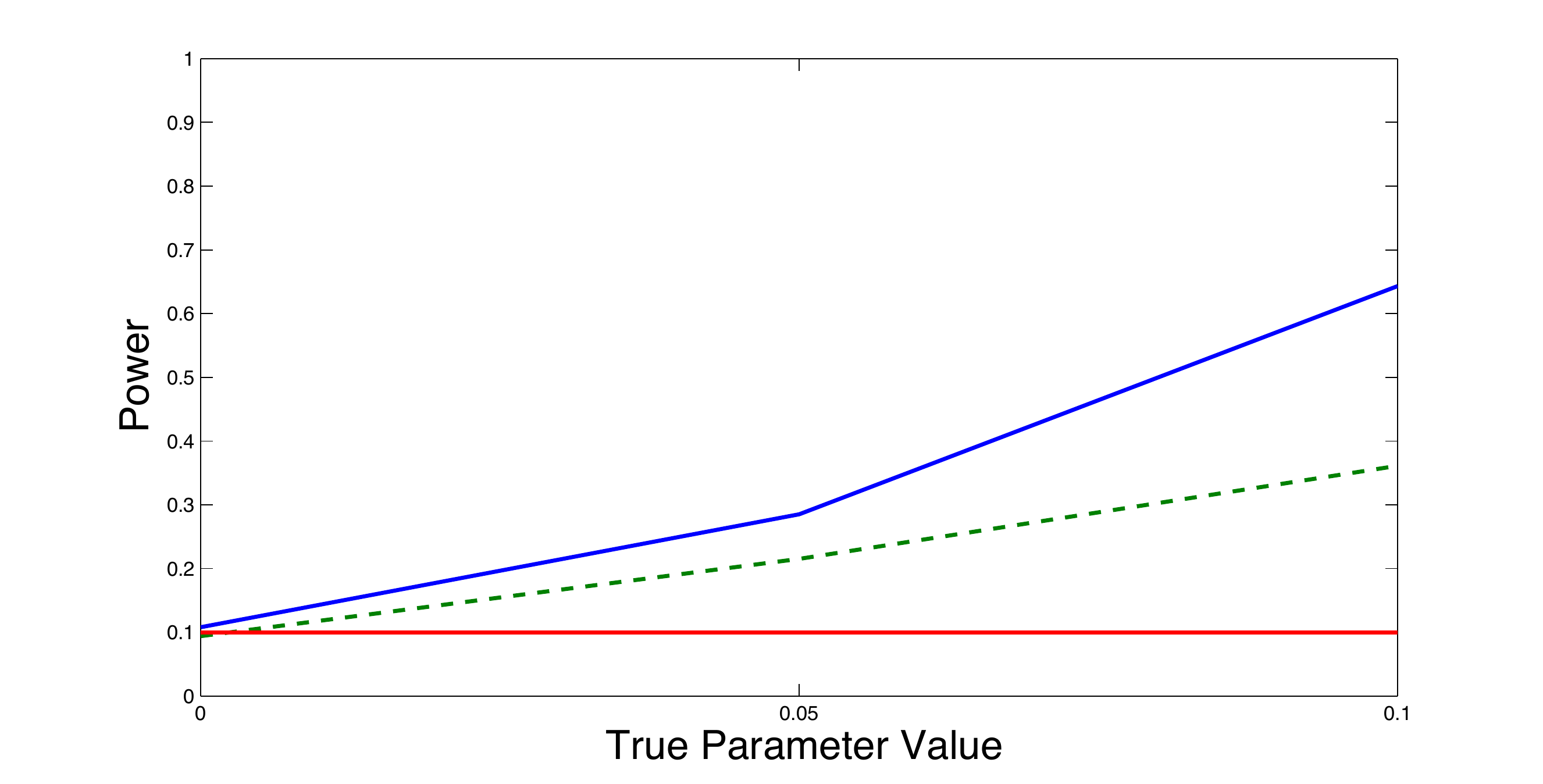} & 
\includegraphics[height=1.5in,width=2in]{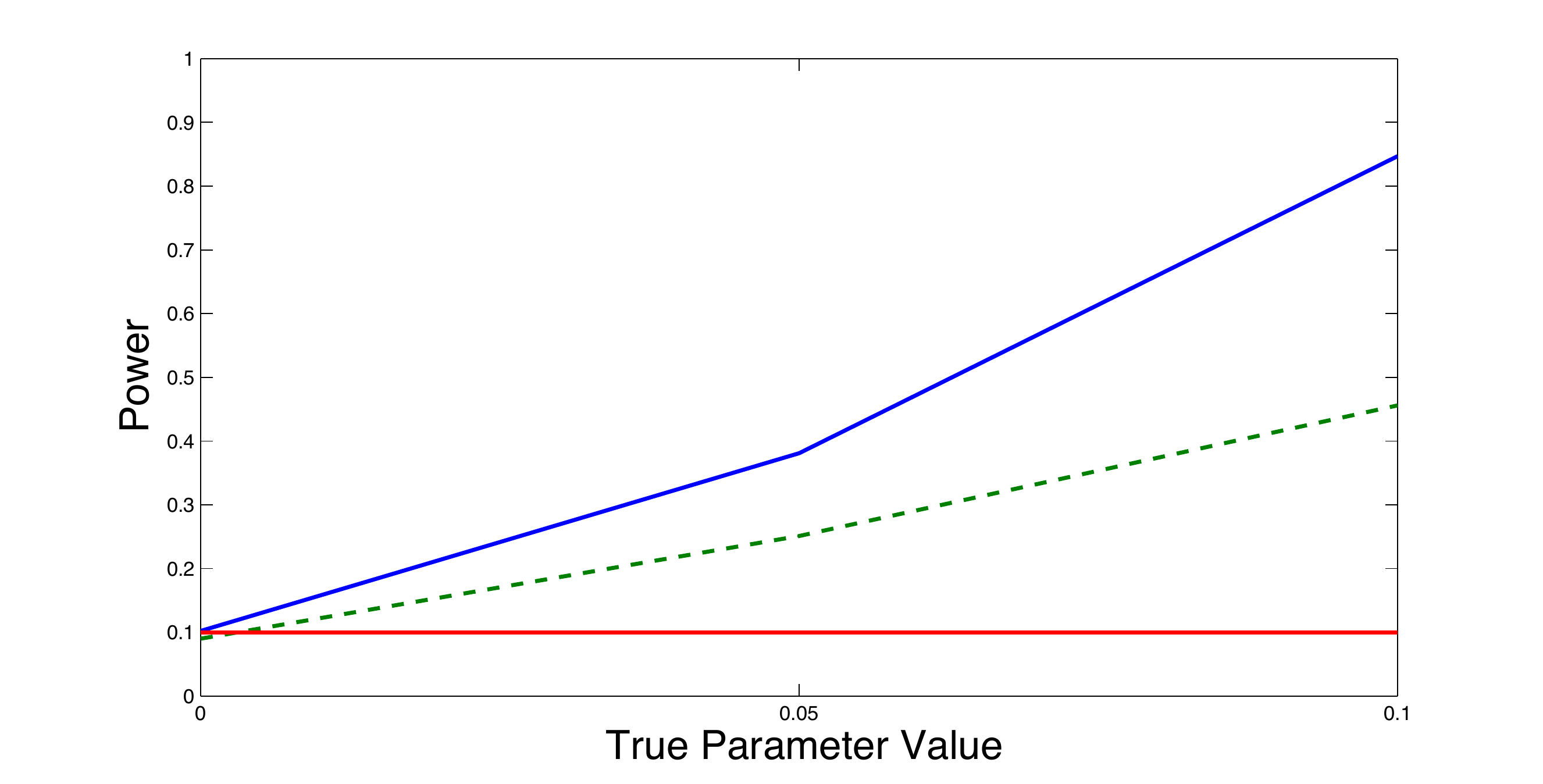} &
\includegraphics[height=1.5in,width=2in]{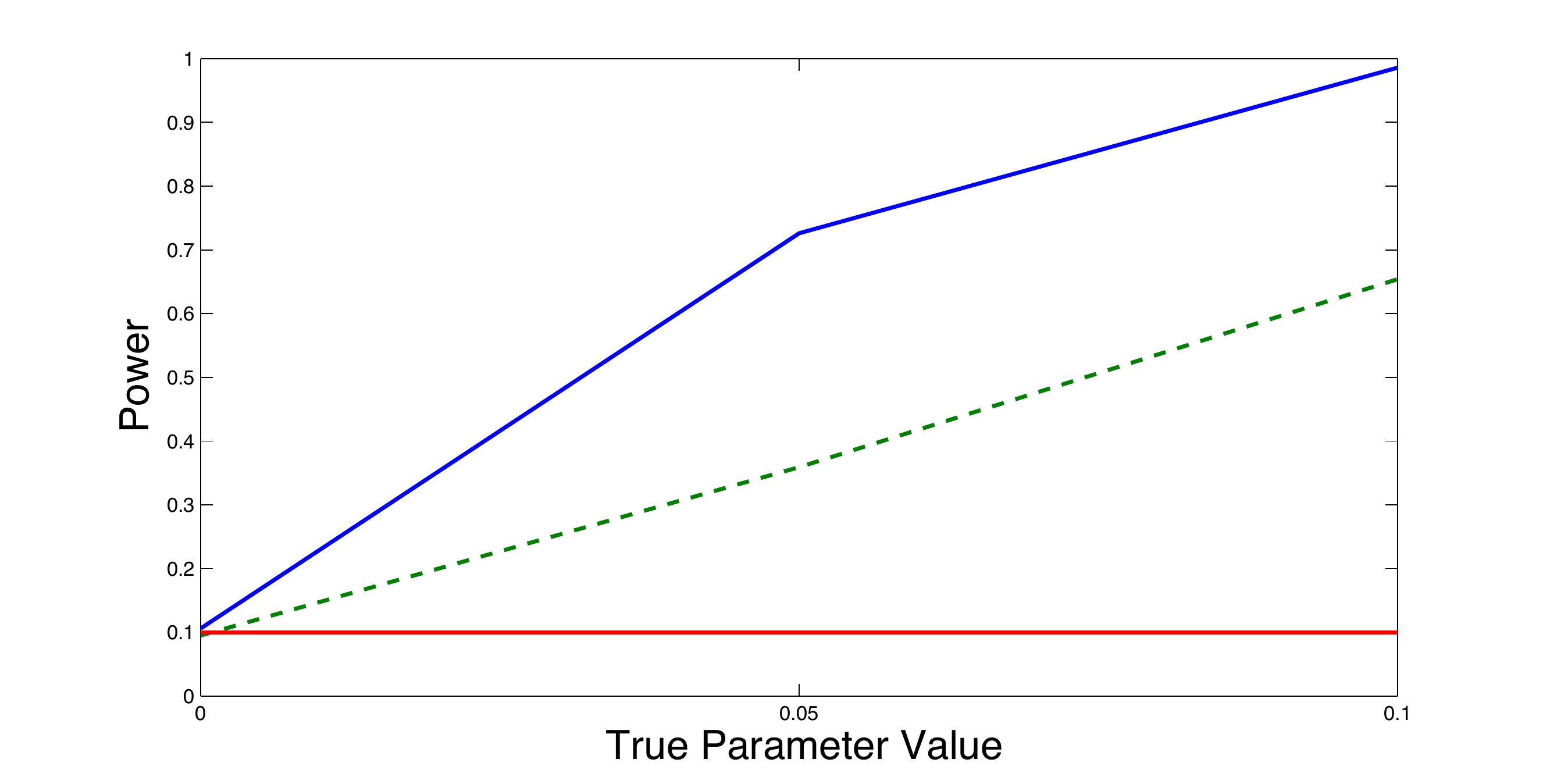} \\
\includegraphics[height=1.5in,width=2in]{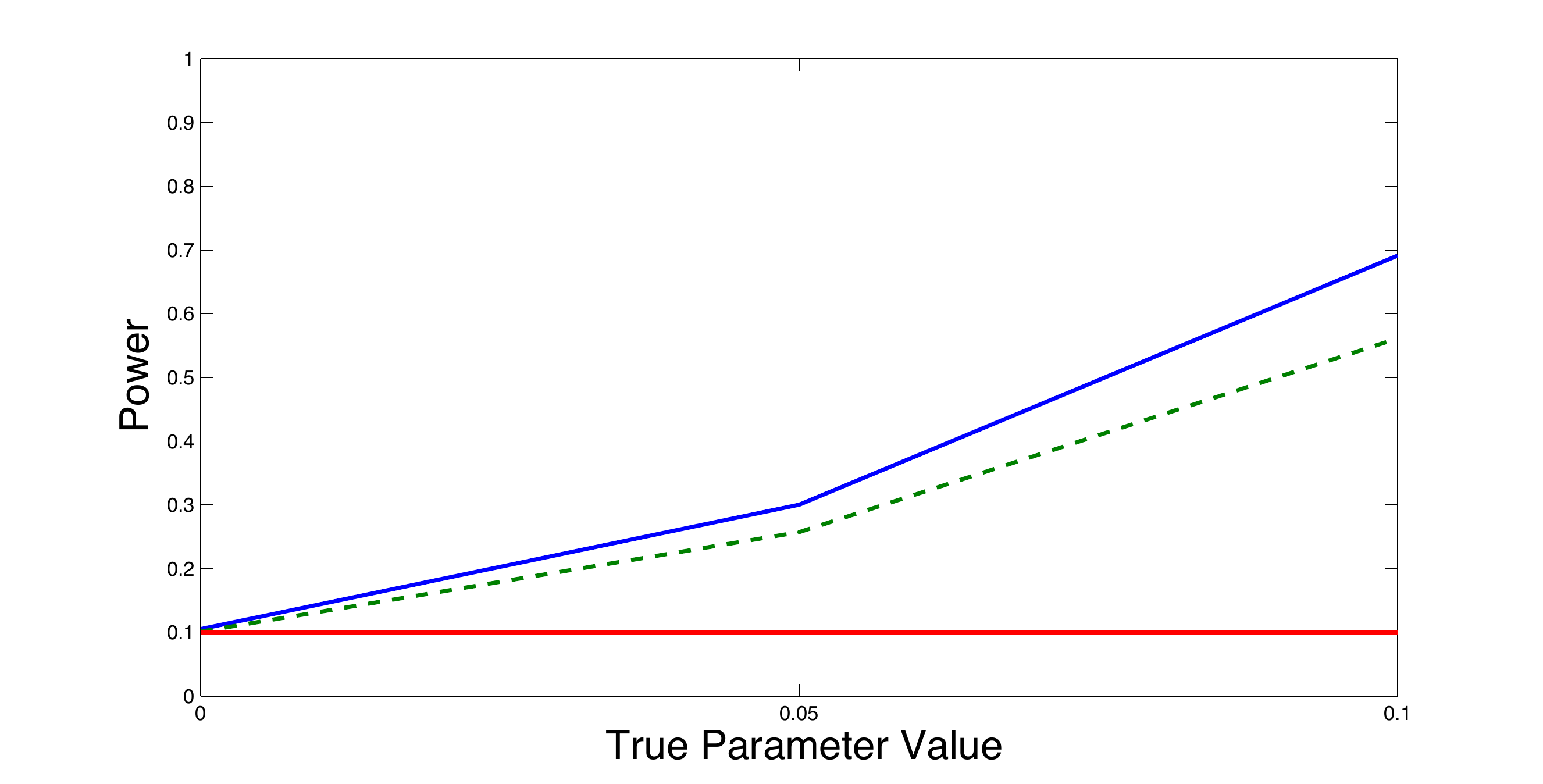} & 
\includegraphics[height=1.5in,width=2in]{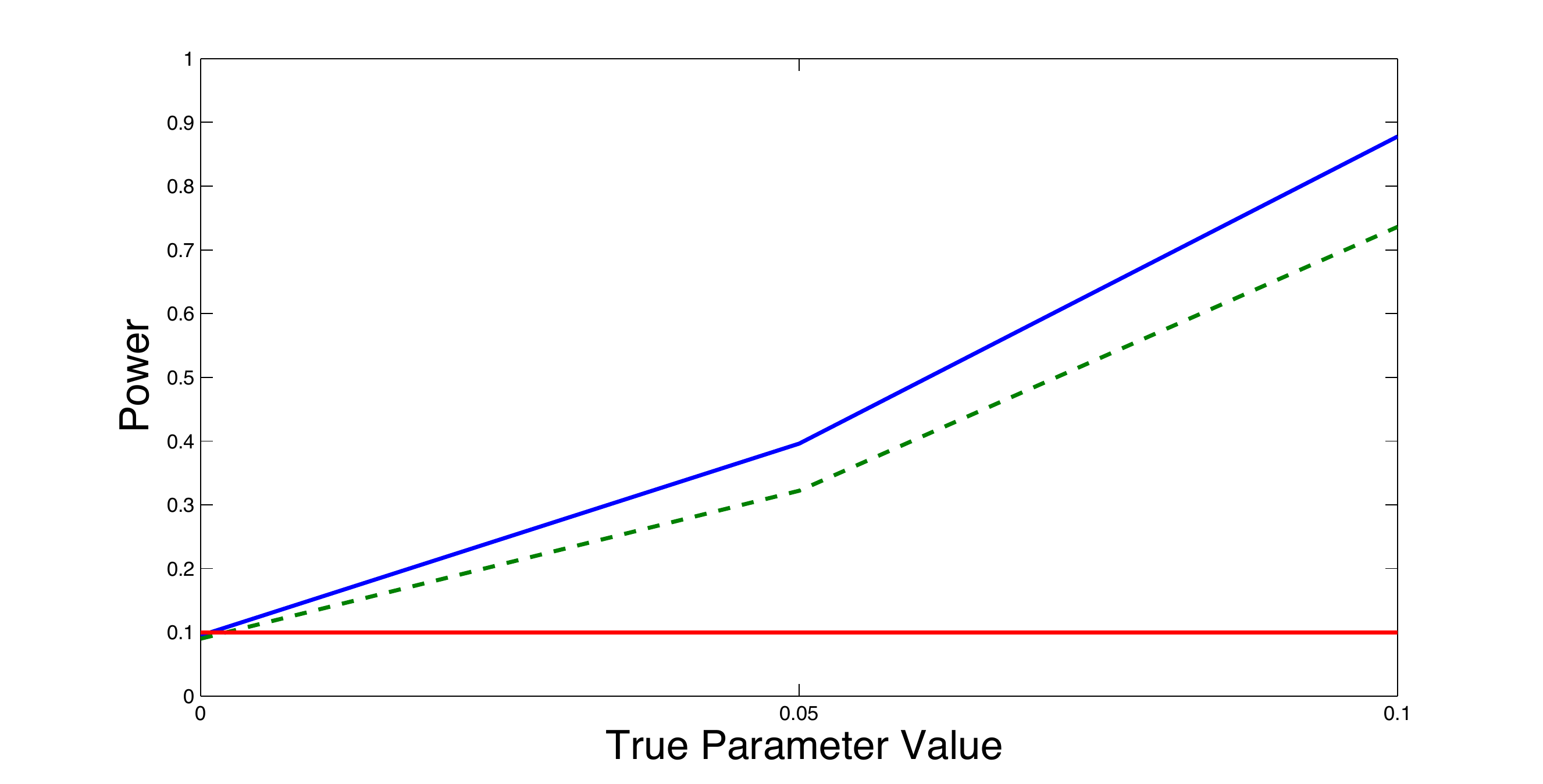} &
\includegraphics[height=1.5in,width=2in]{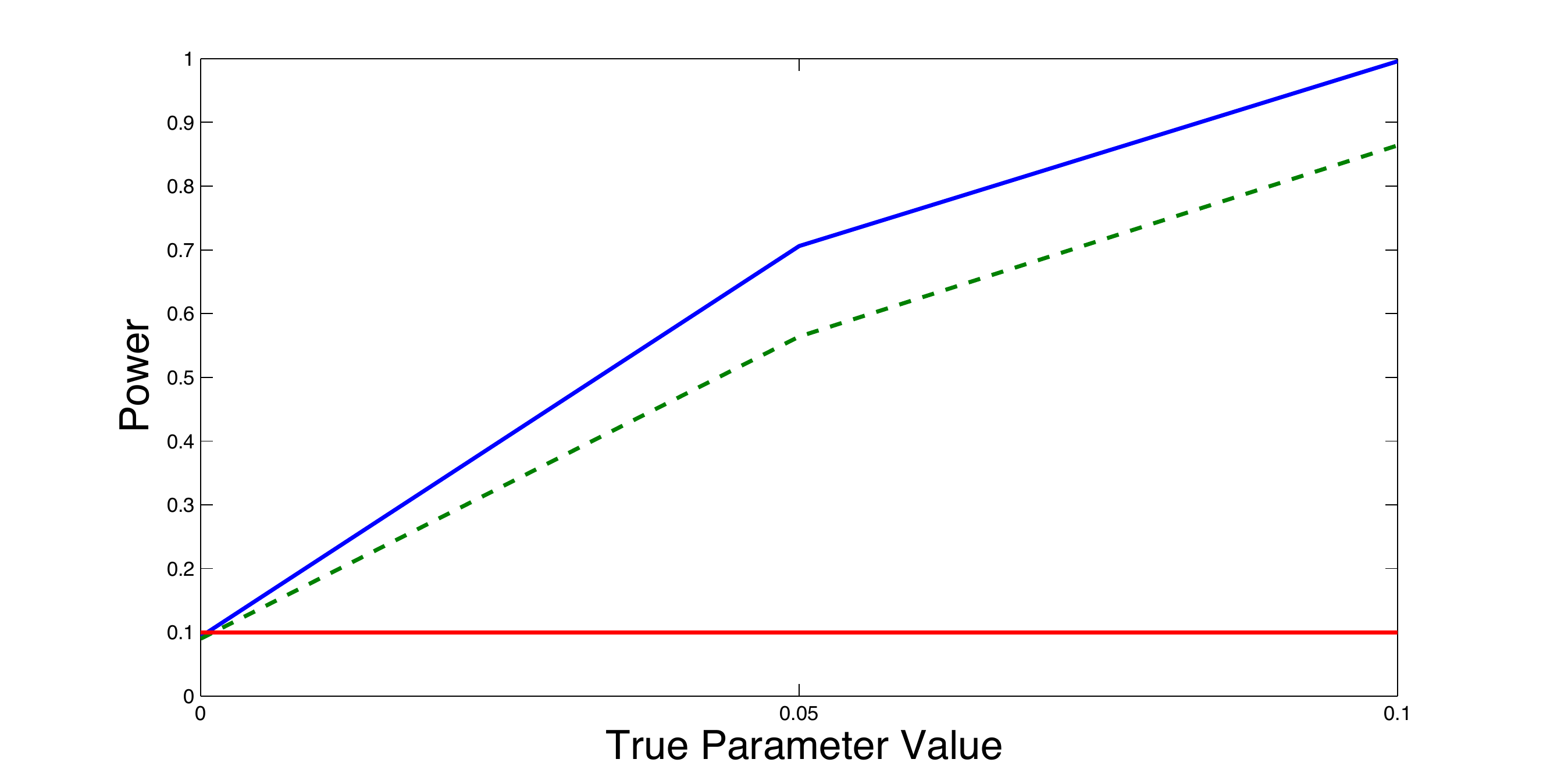} 
\end{array}$
\caption{\scriptsize{
 {\bf Power Curves.} We plot the proportion of rejections of the null hypothesis ${\cal H}_0 :
\beta_{0}=0$, when the true parameter value is
$\beta_{0}\in\{0,0.05,0.1\}$. In the first row, we consider the bias-corrected method proposed in Amihud, Hurvich and Wang (2008). In the second row, we consider the Bonferroni approach for the local-to-unity asymptotic theory introduced in Campbell and Yogo (2006). In the third row, we consider the conventional subsampling, while in the last row we present our robust subsampling. In the first, second and third columns, the degree of persistence is $\rho=0.9$, $\rho=0.95$, and $\rho=0.99$, respectively.
We consider noncontaminated samples (straight line) and contaminated samples (dashed
line).}}\label{poweracy}
\end{figure}

\begin{figure}[!h]
\center $ \begin{array}{ccc}
\includegraphics[height=1.5in,width=2in]{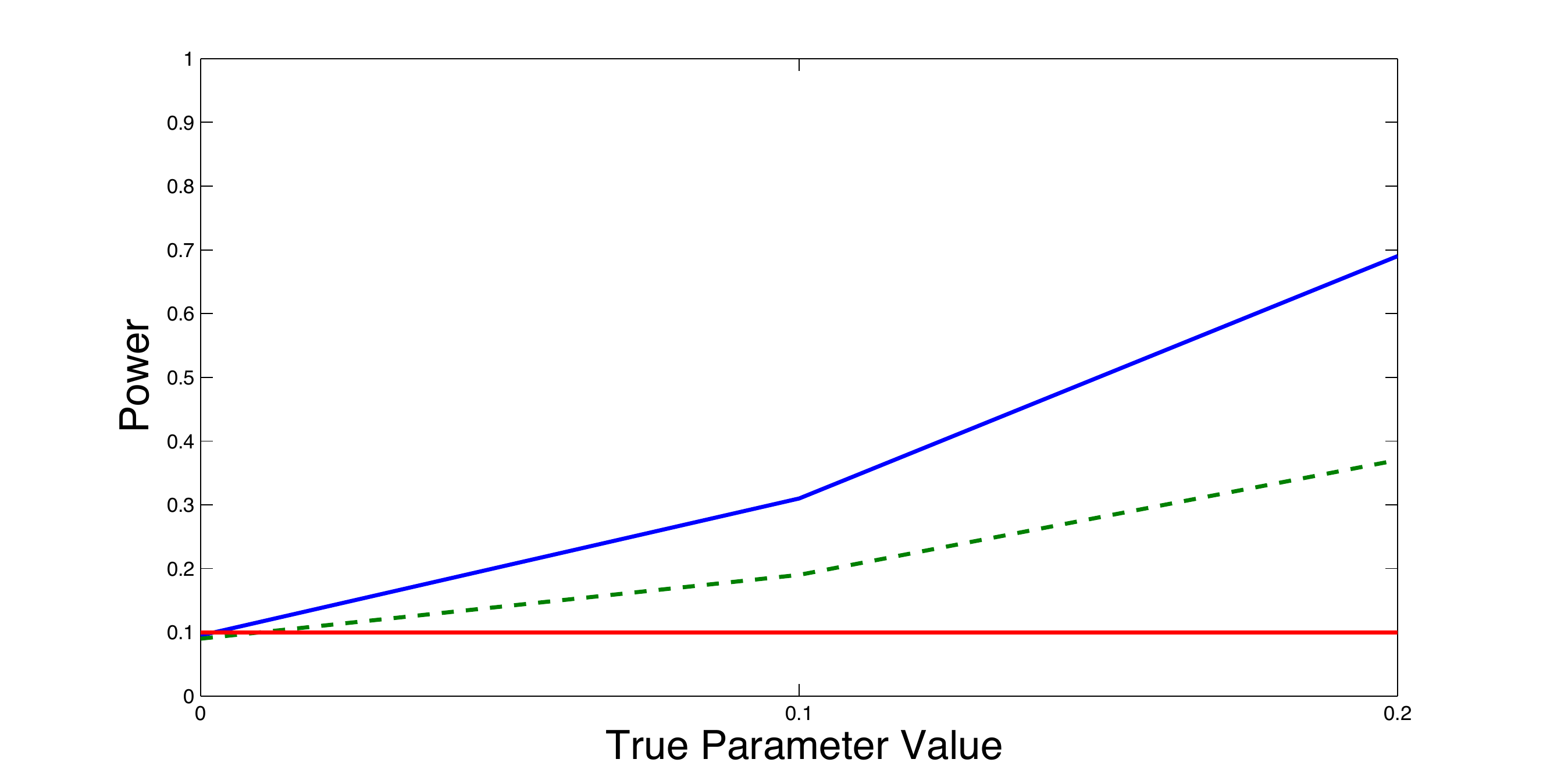} & 
\includegraphics[height=1.5in,width=2in]{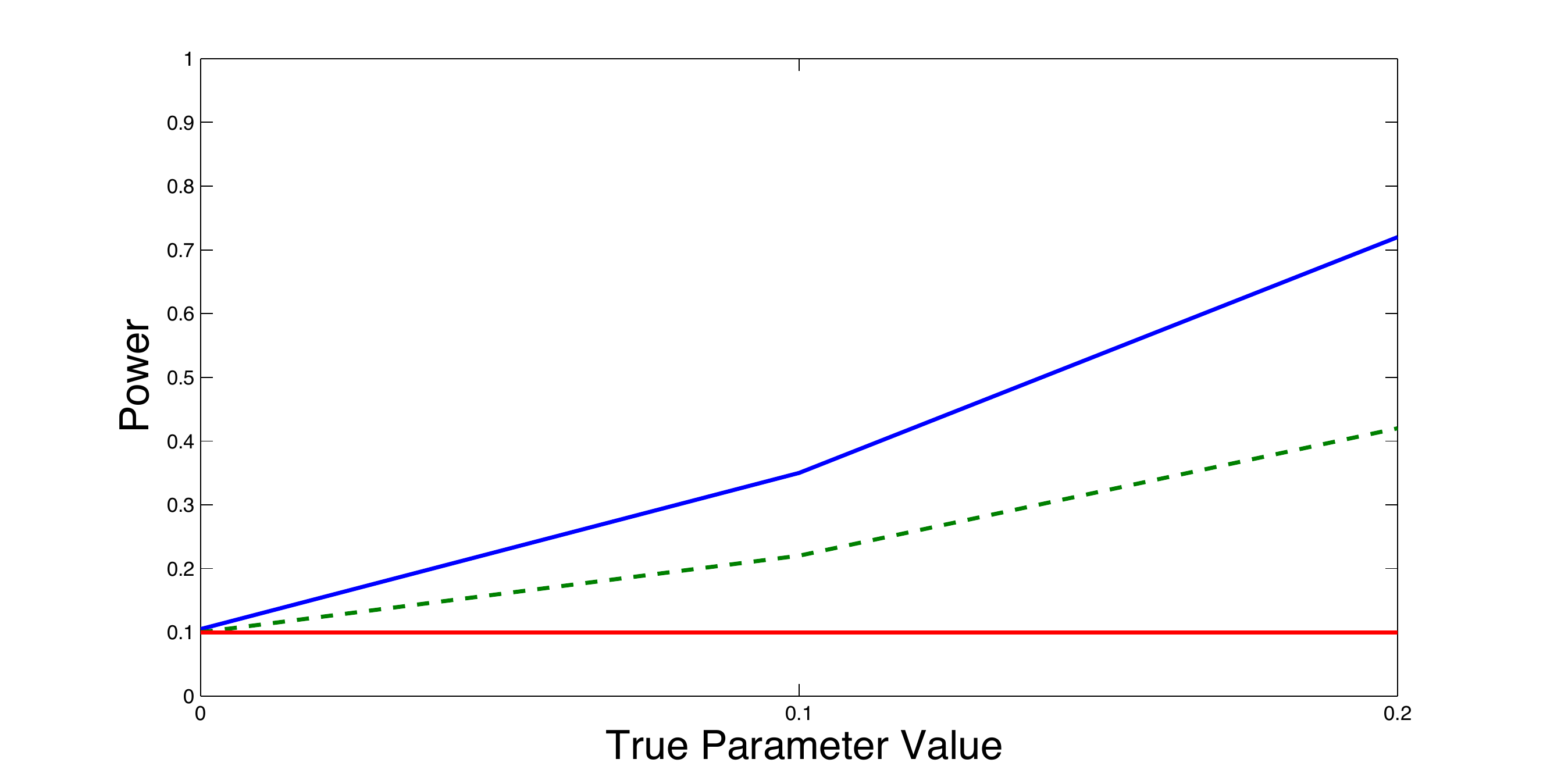} &
\includegraphics[height=1.5in,width=2in]{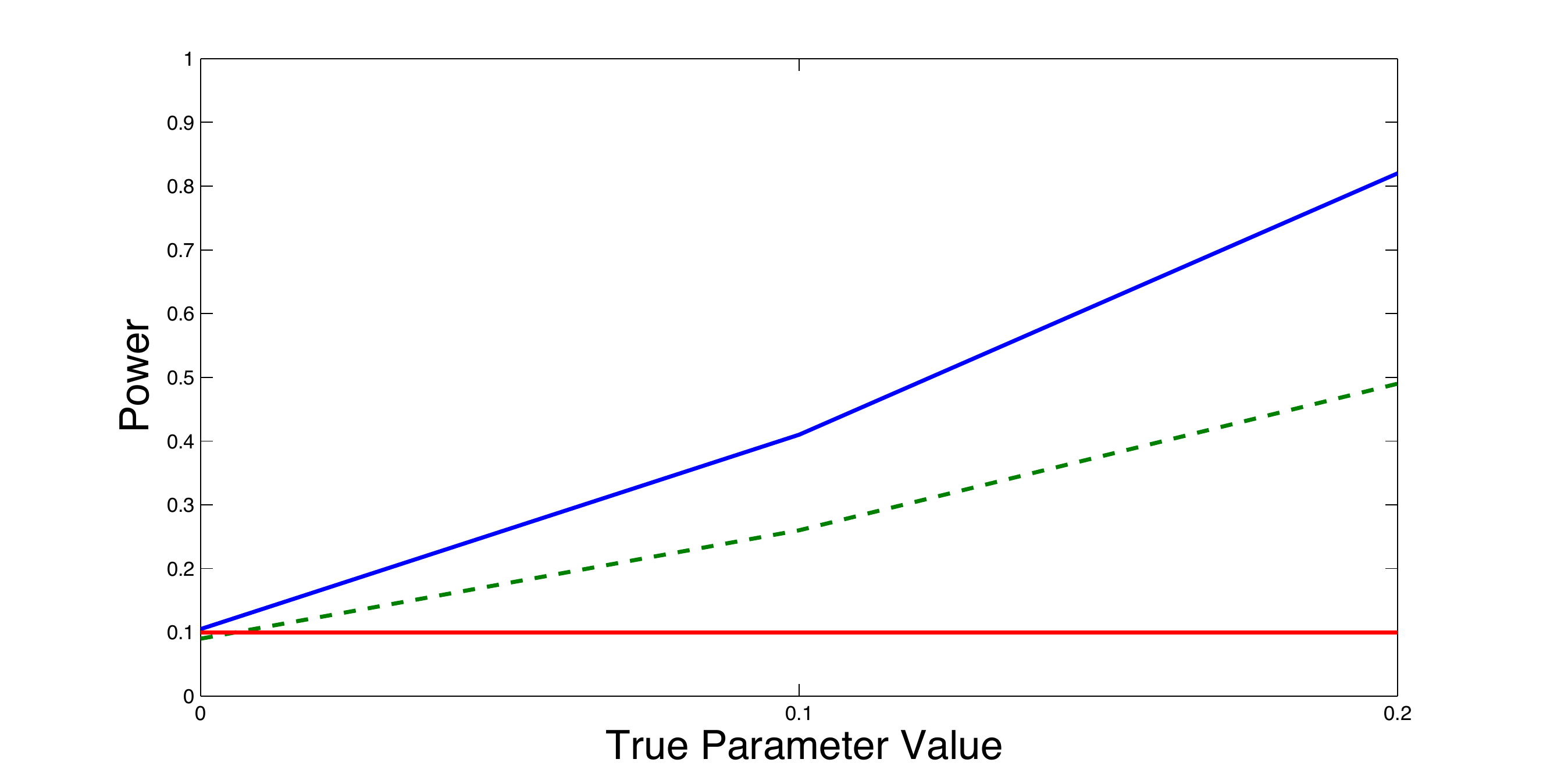} \\
\includegraphics[height=1.5in,width=2in]{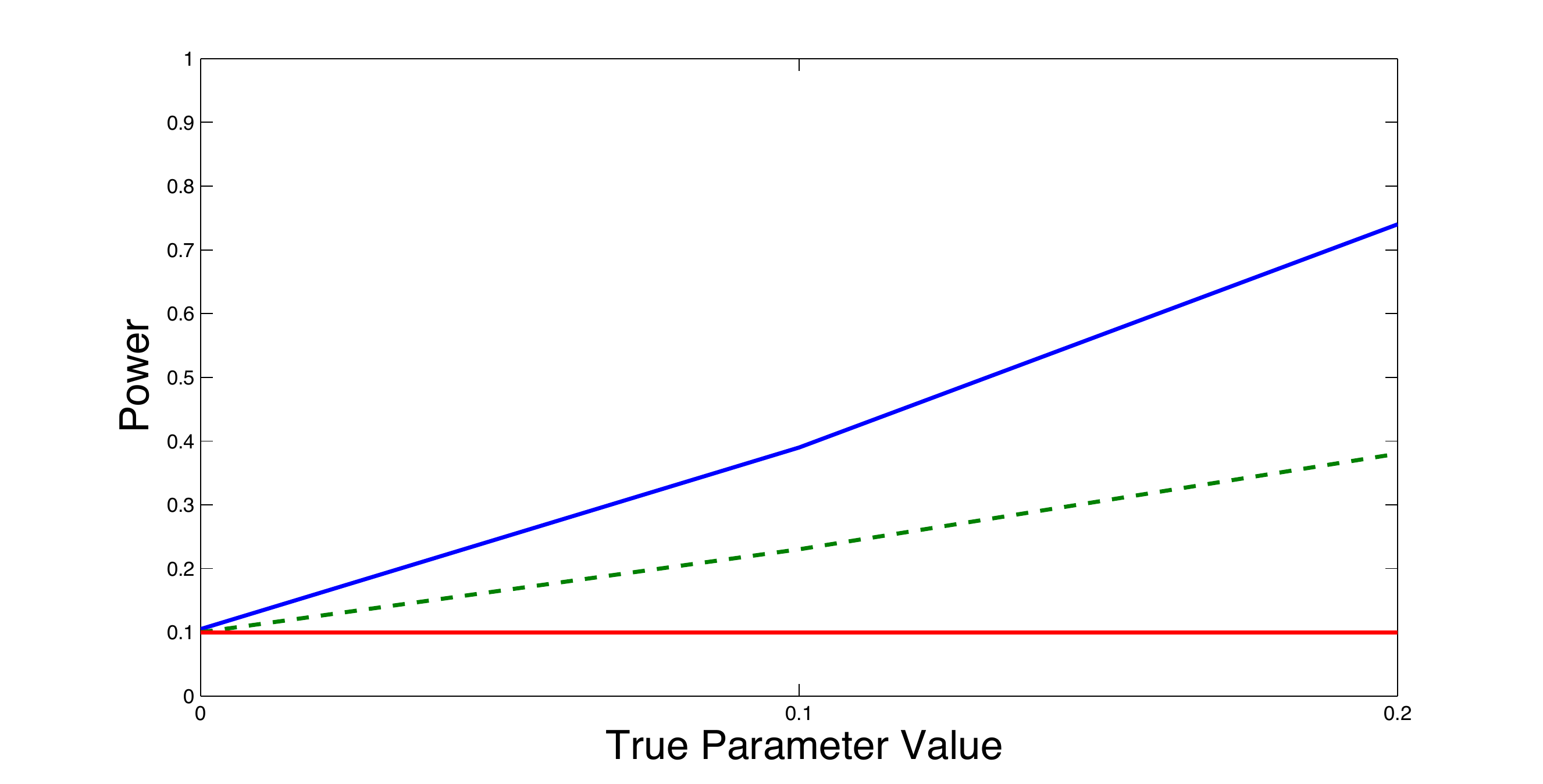} & 
\includegraphics[height=1.5in,width=2in]{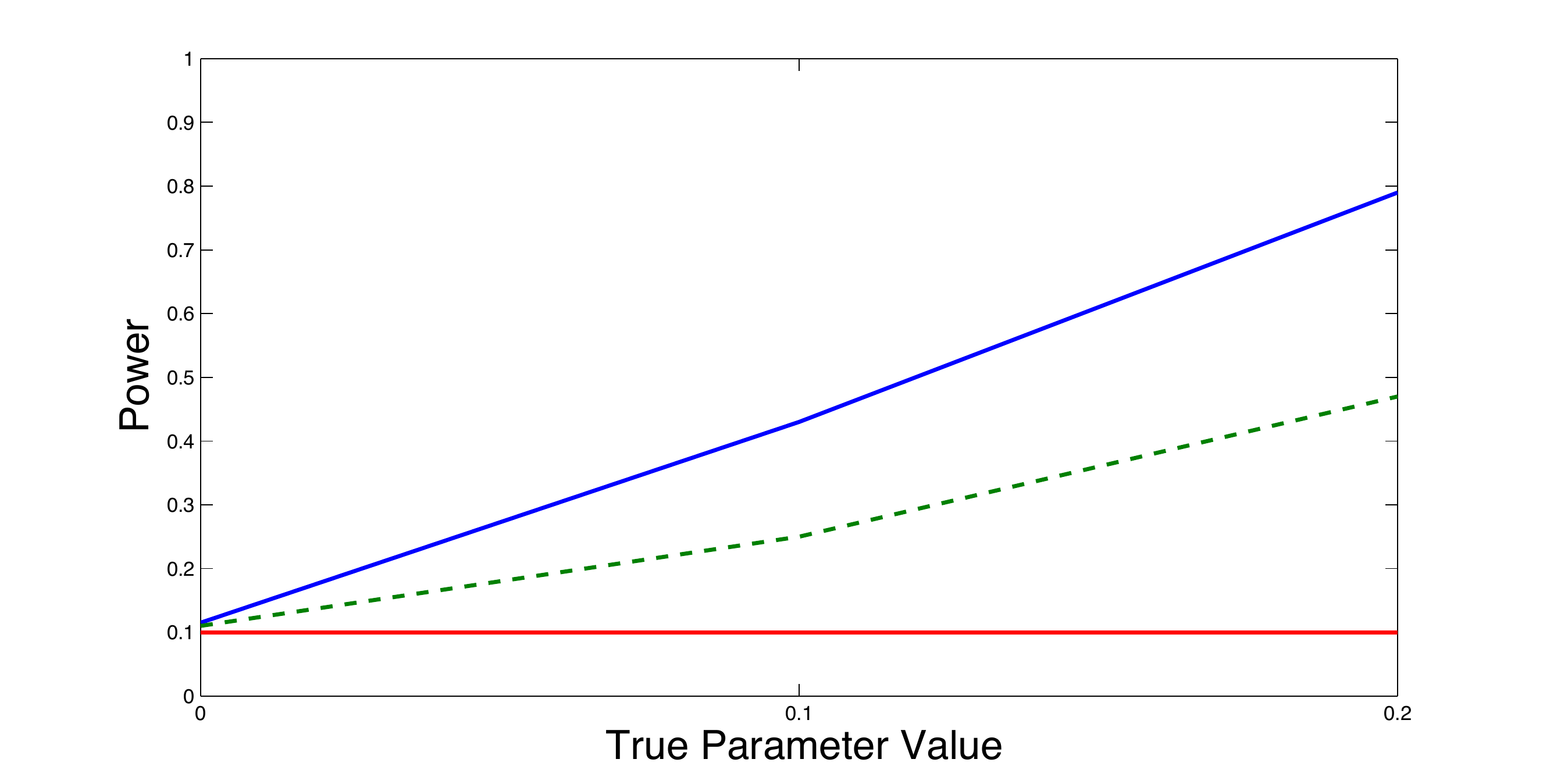} &
\includegraphics[height=1.5in,width=2in]{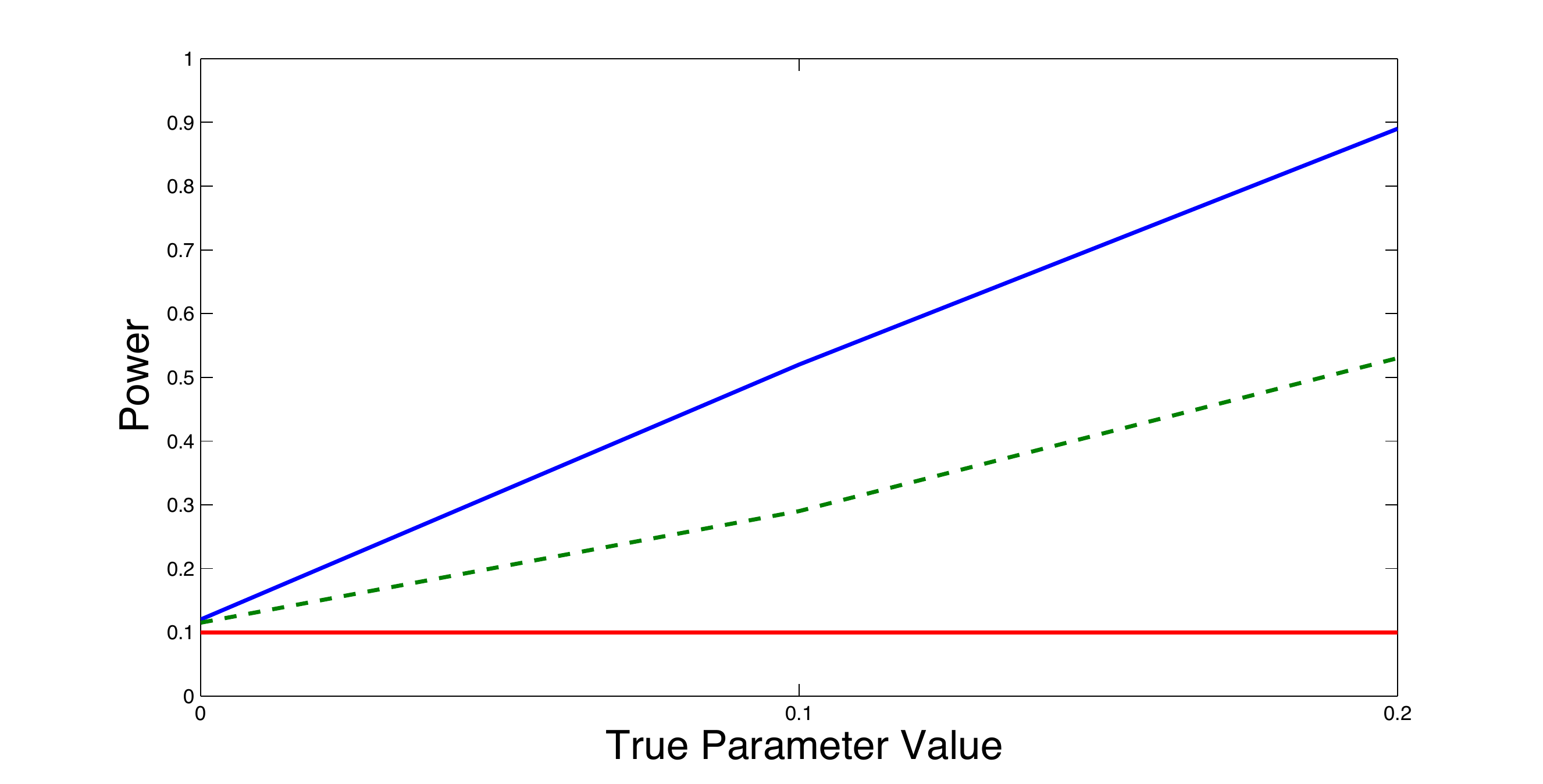} \\
\includegraphics[height=1.5in,width=2in]{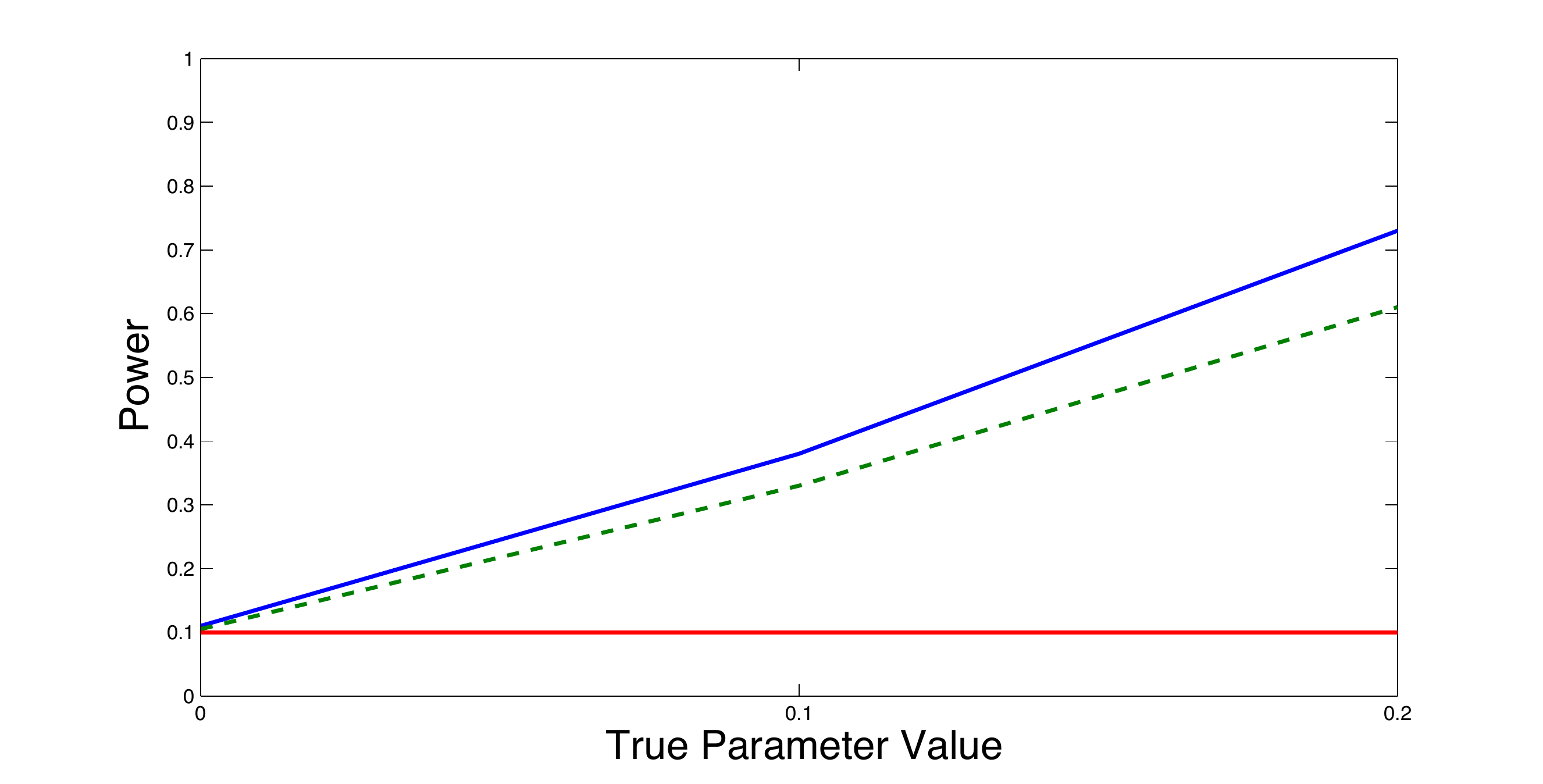} & 
\includegraphics[height=1.5in,width=2in]{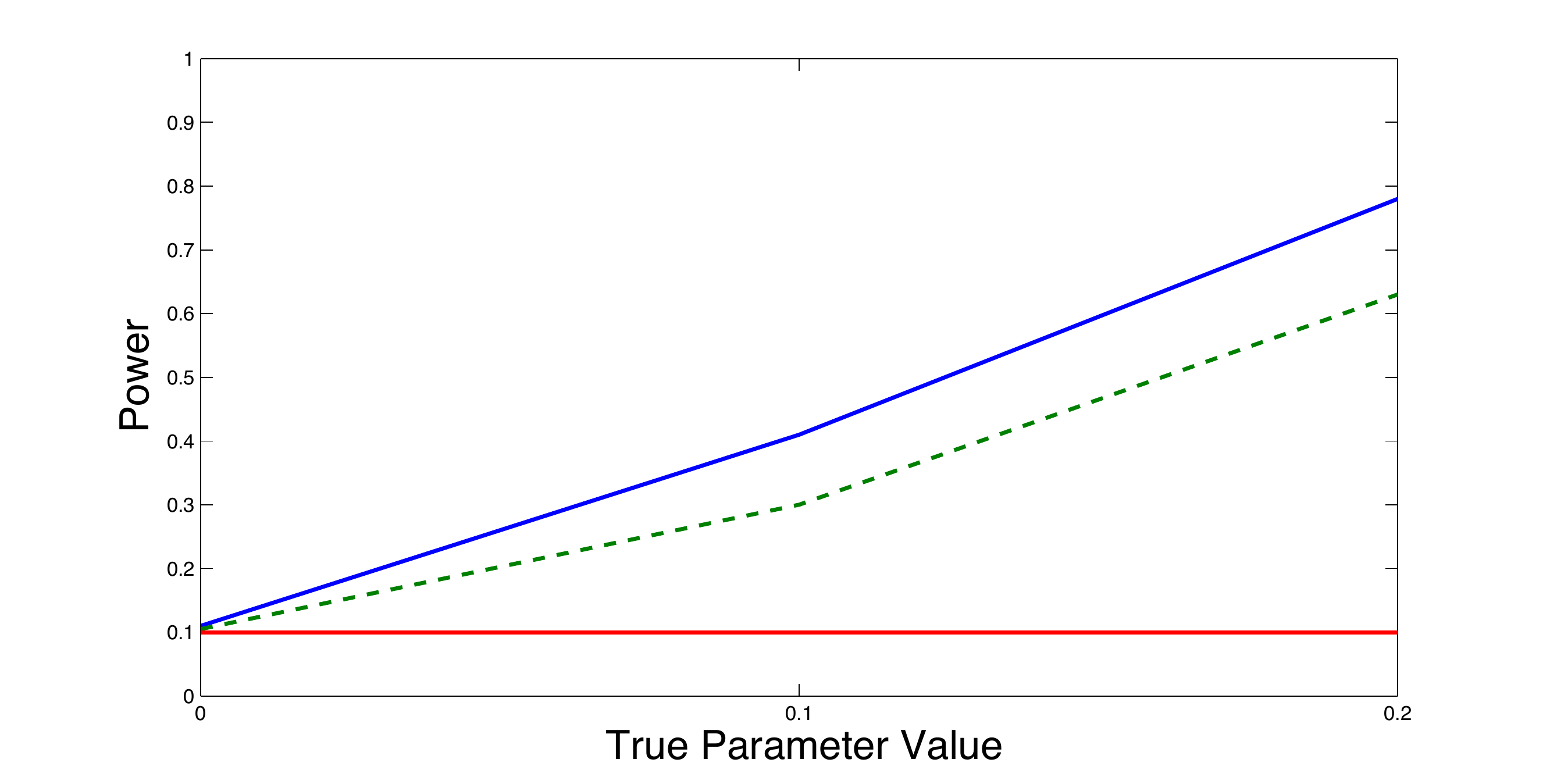} &
\includegraphics[height=1.5in,width=2in]{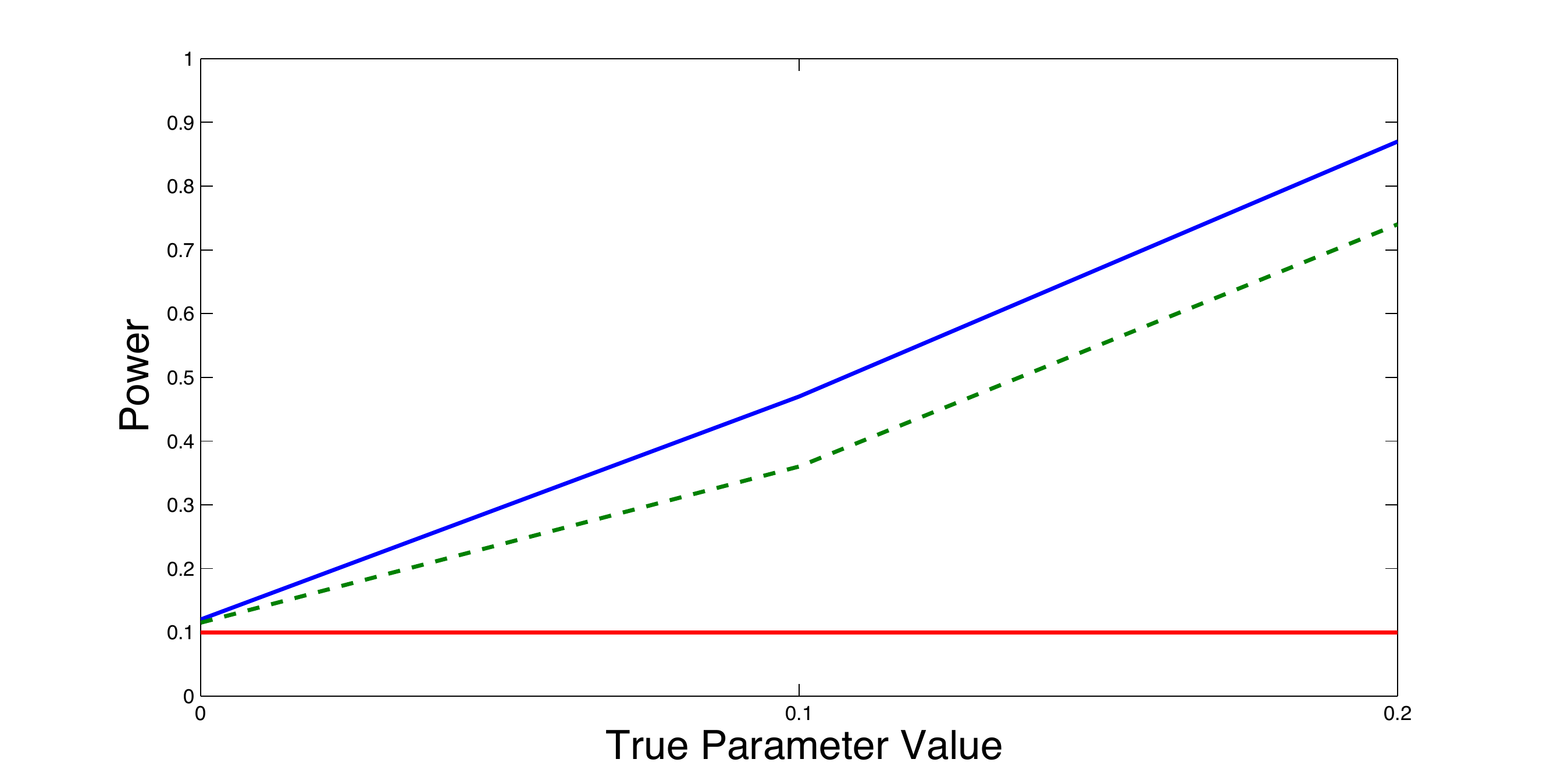} \\
\includegraphics[height=1.5in,width=2in]{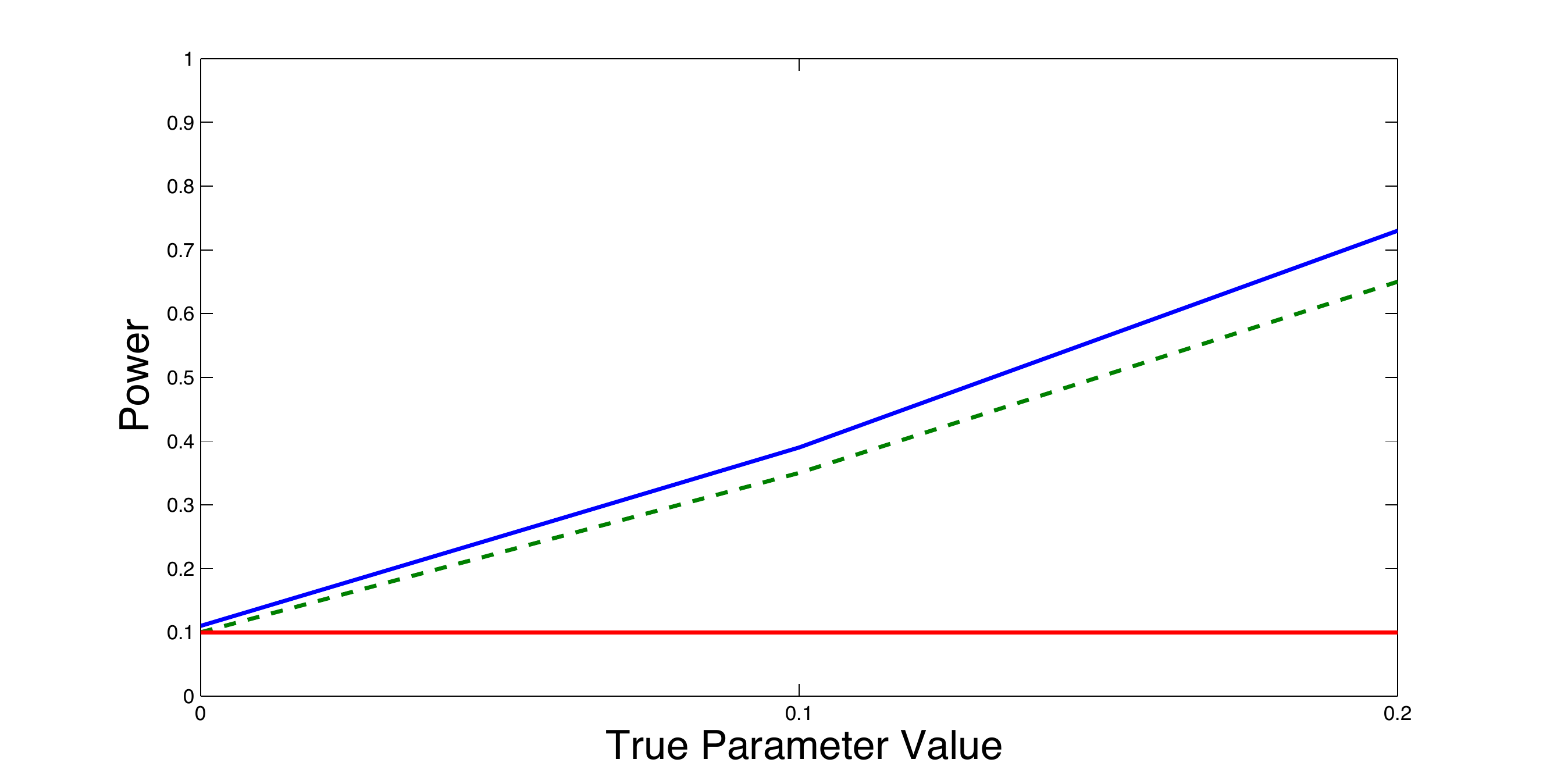} & 
\includegraphics[height=1.5in,width=2in]{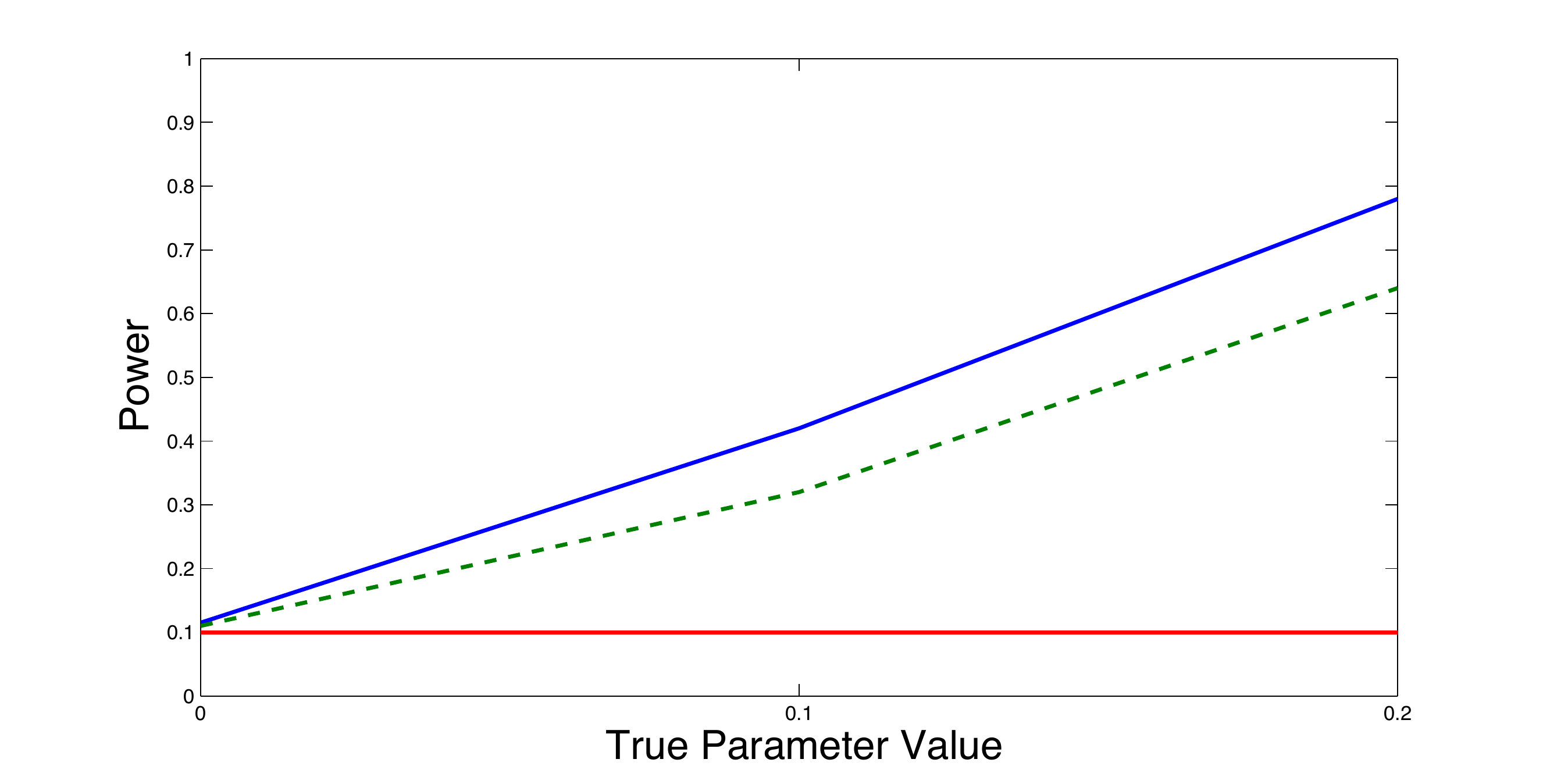} &
\includegraphics[height=1.5in,width=2in]{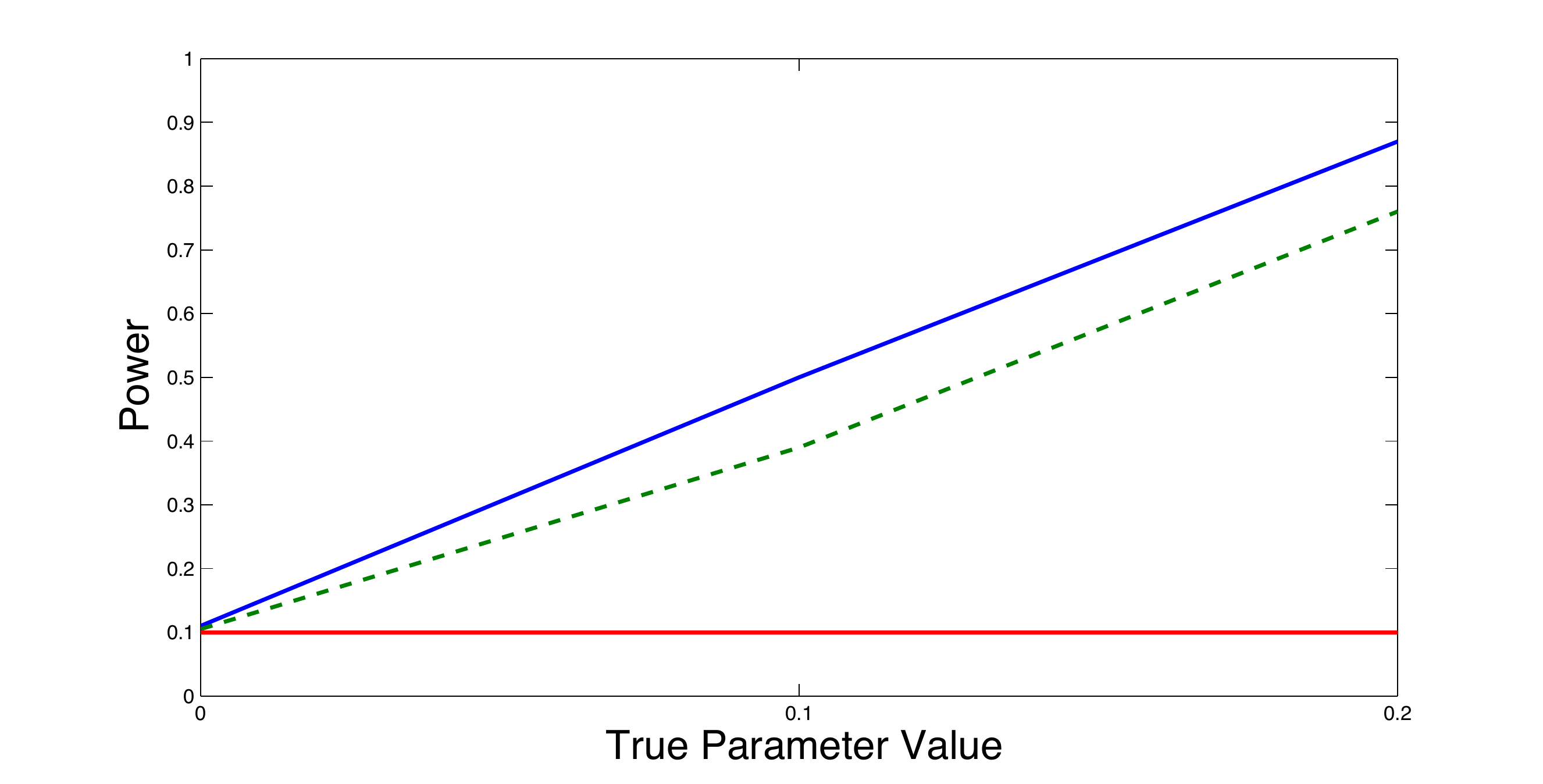} 
\end{array}$
\caption{\scriptsize{
 {\bf Power Curves.} We plot the proportion of rejections of the null hypothesis ${\cal H}_0 :
\beta_{0}=0$, when the true parameter value is
$\beta_{0}\in\{0,0.1,0.2\}$. In the first row, we consider the conventional subsampling. In the second row, we consider the conventional bootstrap. In the third row, we consider our robust subsampling, while in the last row we present our robust bootstrap. In the first, second and third columns, the degree of persistence is $\rho=0.3$, $\rho=0.5$, and $\rho=0.7$, respectively.
We consider noncontaminated samples (straight line) and contaminated samples (dashed
line).}}\label{powerstat}
\end{figure}



\newpage

\begin{figure}[!h]
\center $ \begin{array}{cc}
\includegraphics[height=1.75in,width=2.5in]{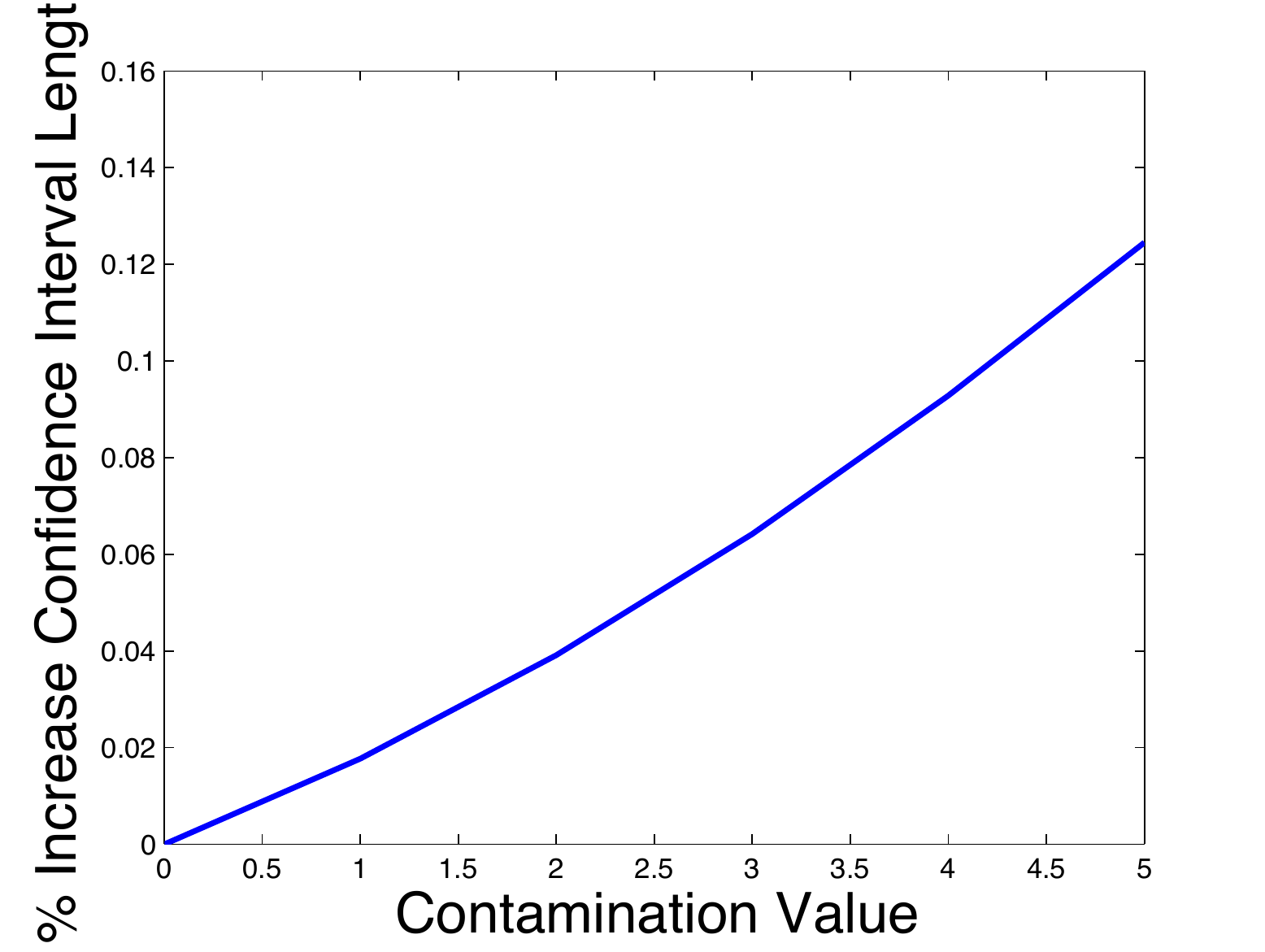} &
\includegraphics[height=1.75in,width=2.5in]{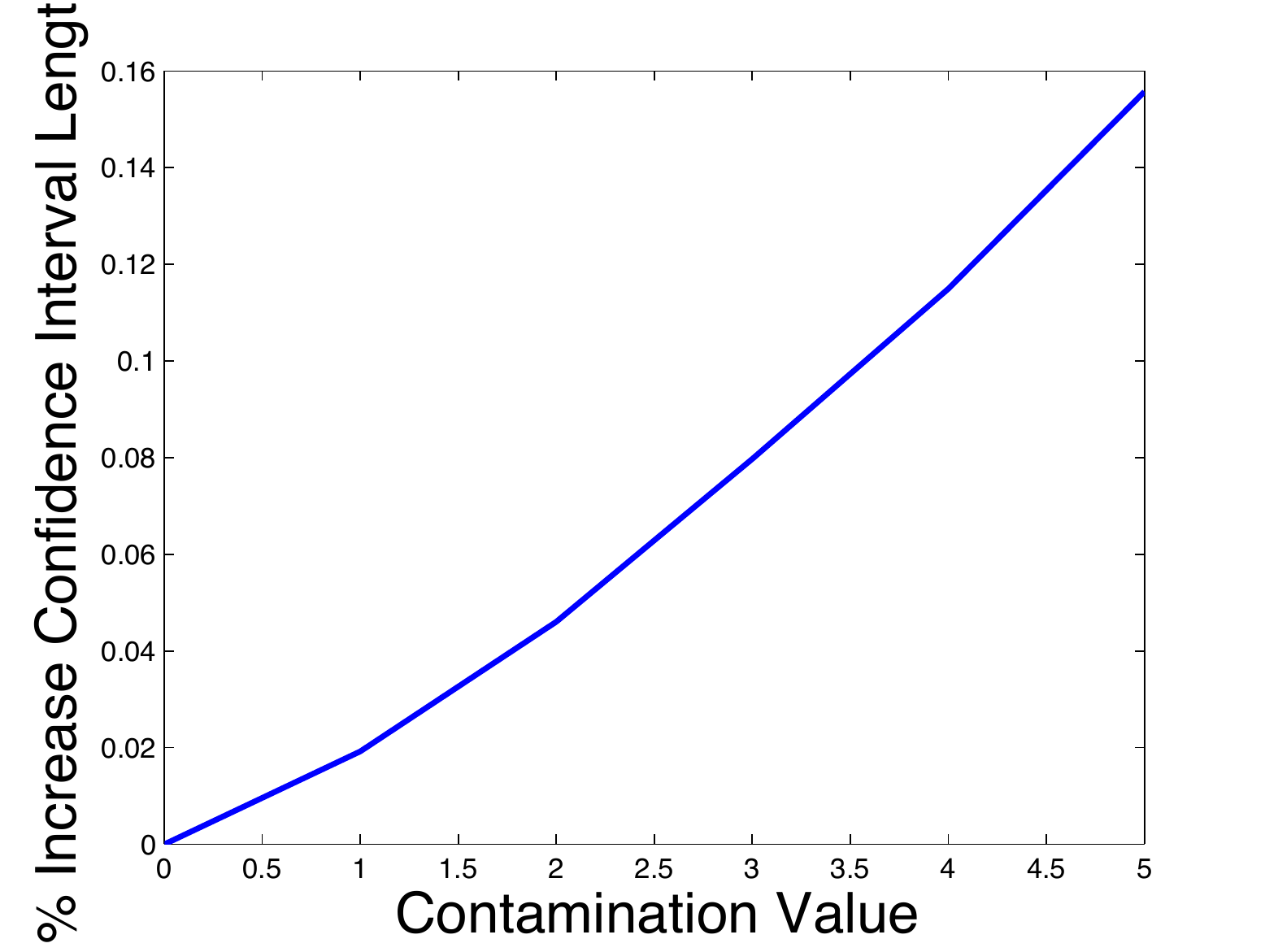}\\
& \\
& \\
\includegraphics[height=1.75in,width=2.5in]{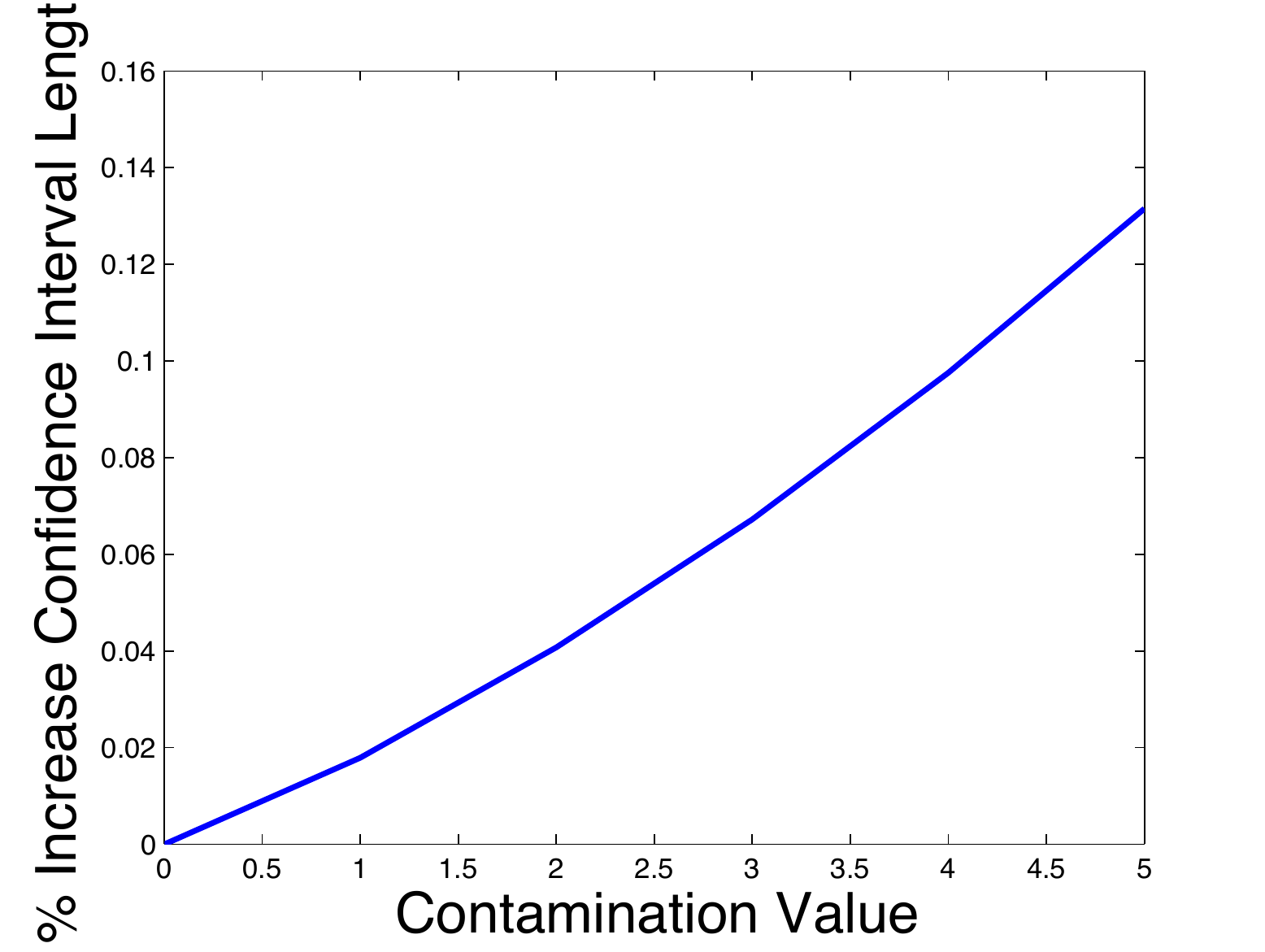} &
\includegraphics[height=1.75in,width=2.5in]{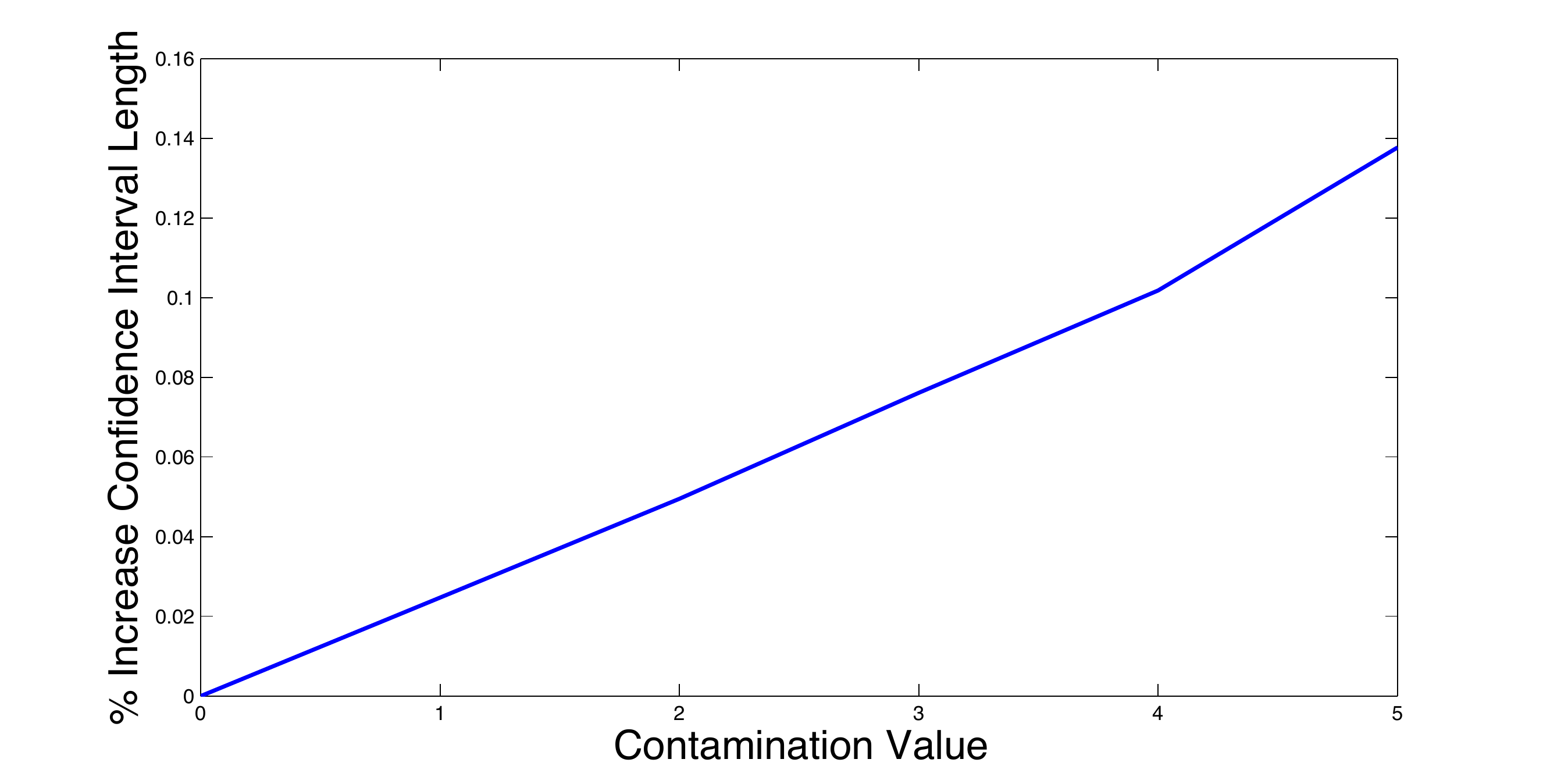}\\
& \\
& \\
\includegraphics[height=1.75in,width=2.5in]{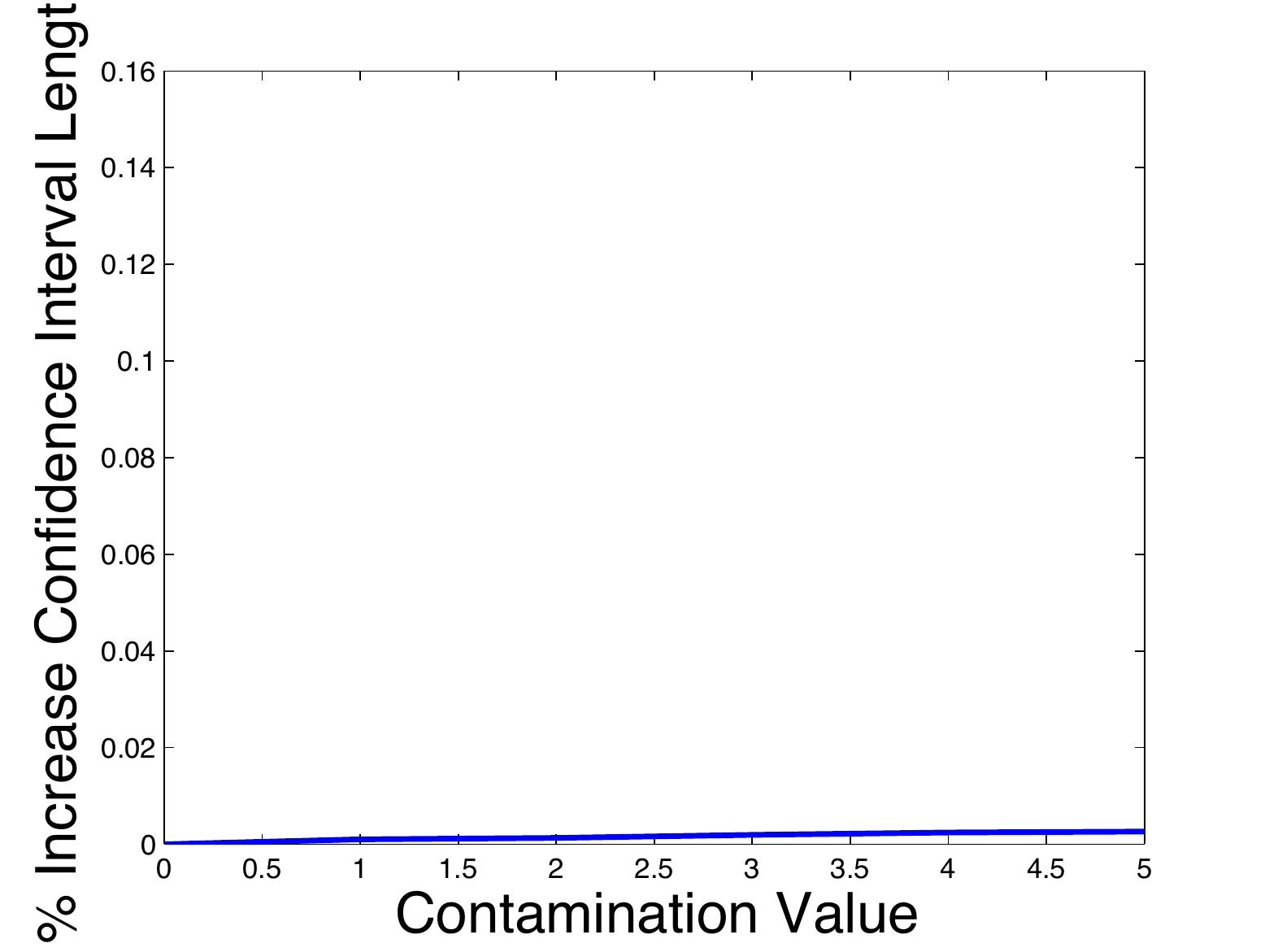} &
\includegraphics[height=1.75in,width=2.5in]{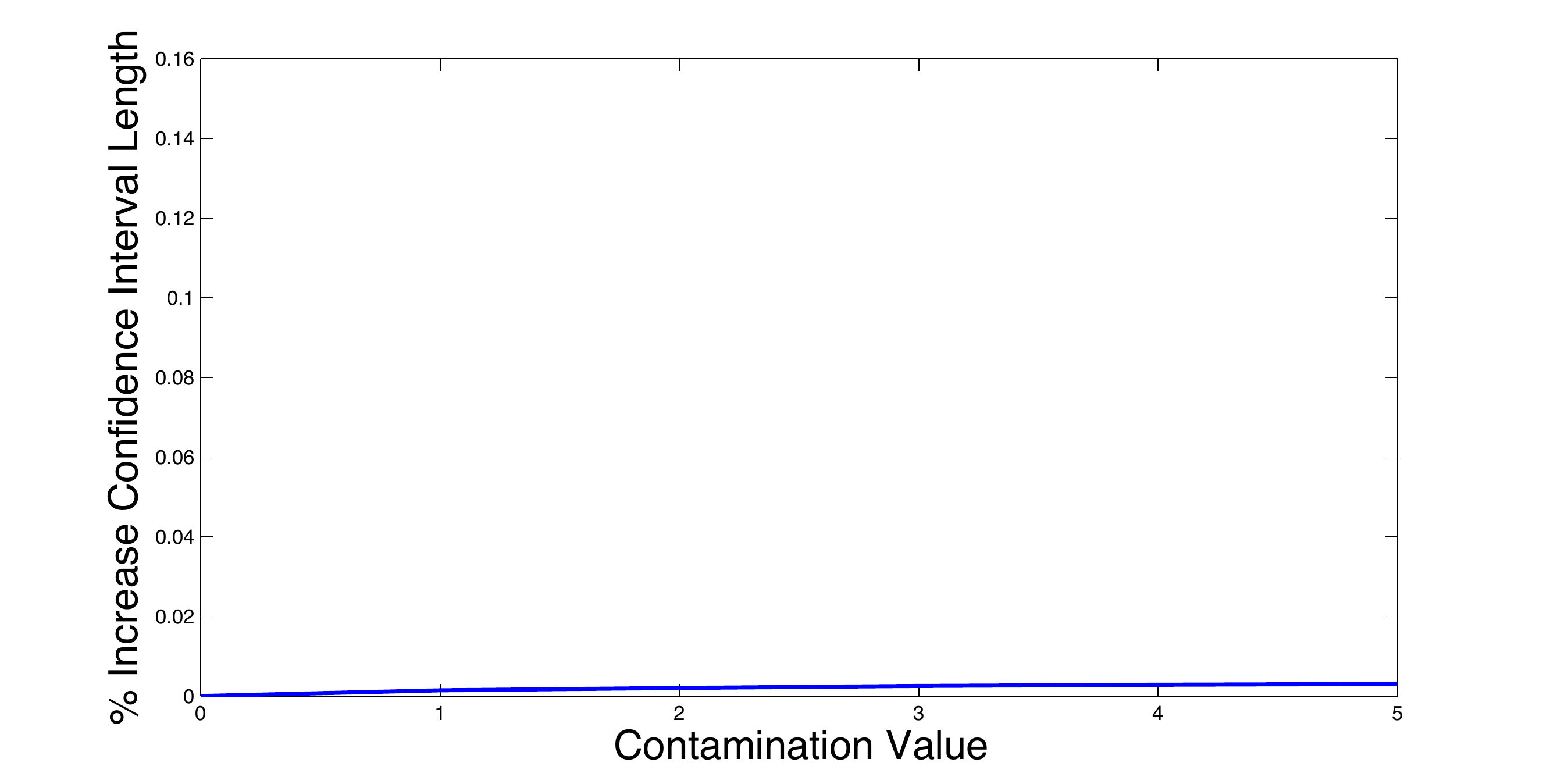}\\
\end{array}$
\caption{\scriptsize{
 {\bf Sensitivity Analysis.} We plot the percentage of  increase of the confidence interval lengths with respect to variation of $y_{max}$, in each Monte Carlo sample, within the interval $[0, 5]$. In the first row, from the left to the right, we consider the bias-corrected method proposed in Amihud, Hurvich and Wang (2008) and the Bonferroni approach for the local-to-unity asymptotic theory introduced in Campbell and Yogo (2006), respectively.
In the second row, from the left to the right, we consider the conventional subsampling and bootstrap, respectively.
Finally, in the last row, from the left to the right, we consider our robust subsampling and bootstrap, respectively.}}\label{sensacy}
\end{figure}

\newpage

\begin{figure}[!h]
\center $ \begin{array}{c}
\includegraphics[height=2in,width=5.5in]{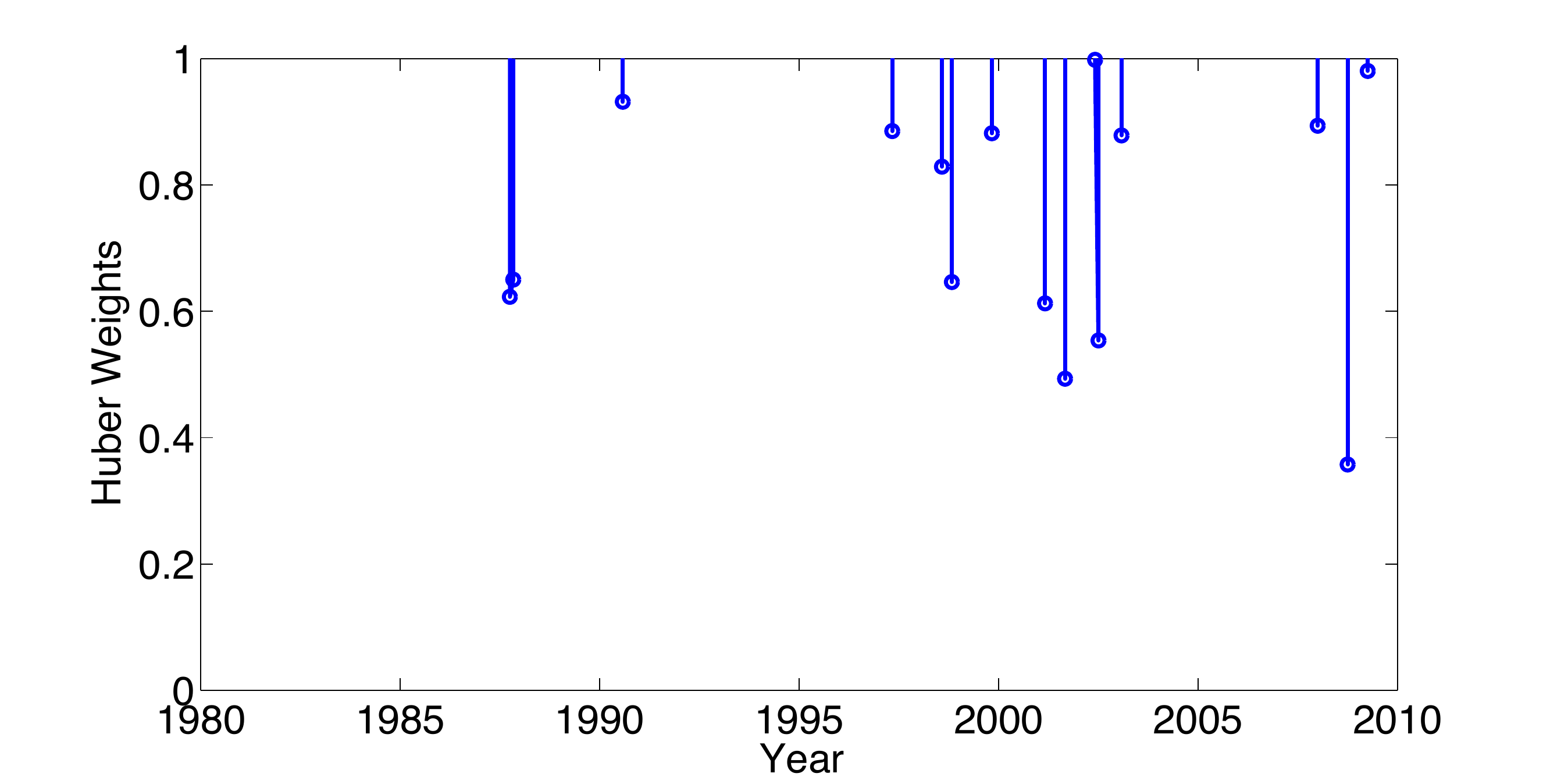}
\end{array}$
\caption{\scriptsize{
 {\bf Huber Weights under the Predictive Regression Model (\ref{pregmodel31}).} We plot the Huber weights for the predictive regression model (\ref{pregmodel31}) in the period 1980-2010.}}\label{huberweights}
\end{figure}

\begin{figure}[!h]
\center $ \begin{array}{c}
\includegraphics[height=2in,width=5.5in]{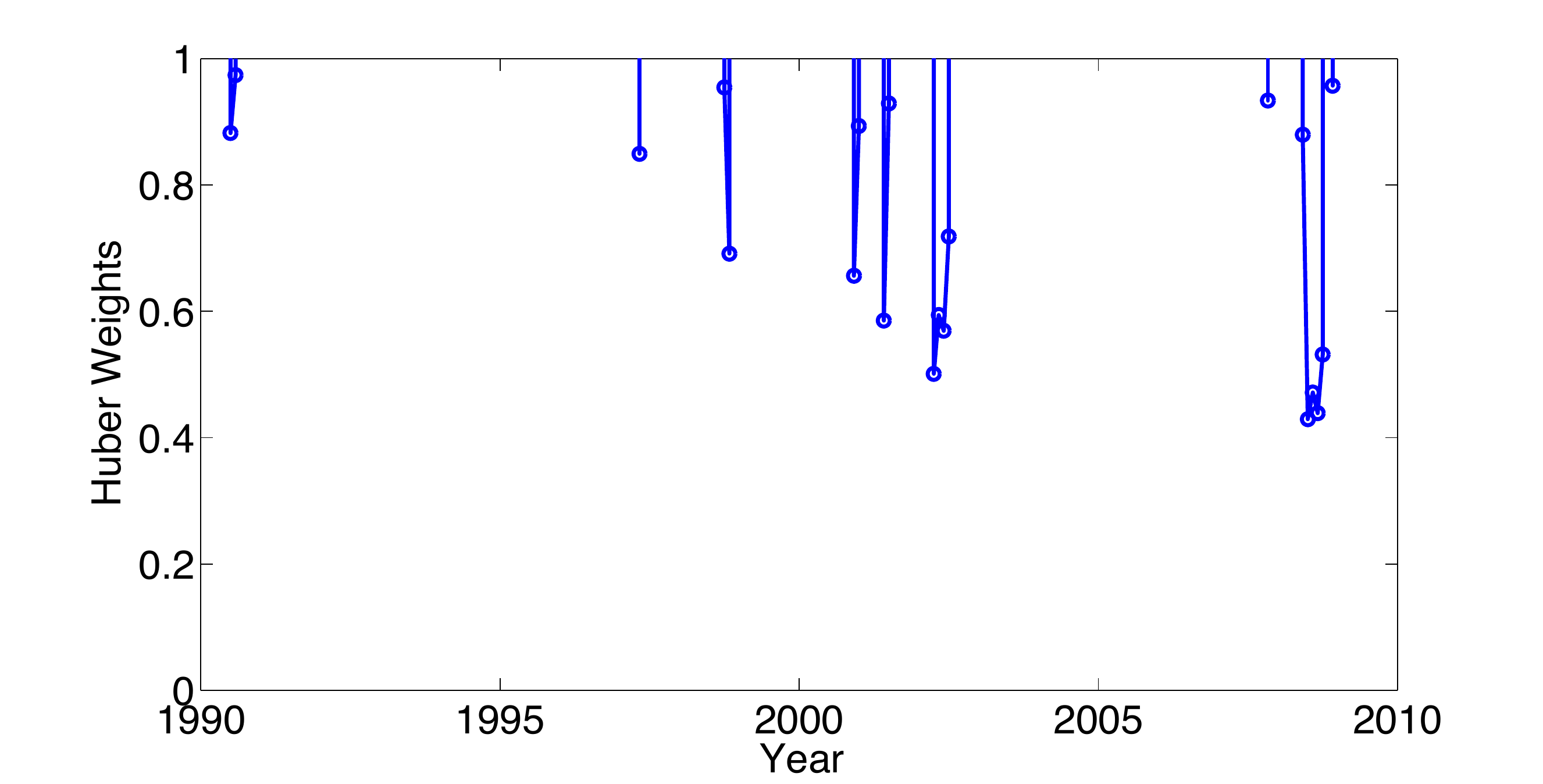}
\end{array}$
\caption{\scriptsize{
 {\bf Huber Weights under the Predictive Regression Model (\ref{pregmodel3f}).} We plot the Huber weights for the predictive regression model (\ref{pregmodel3f}) in the period 1990-2010.}}\label{huberweights1}
\end{figure}

\begin{figure}[!h]
\center $ \begin{array}{c}
\includegraphics[height=2in,width=5.5in]{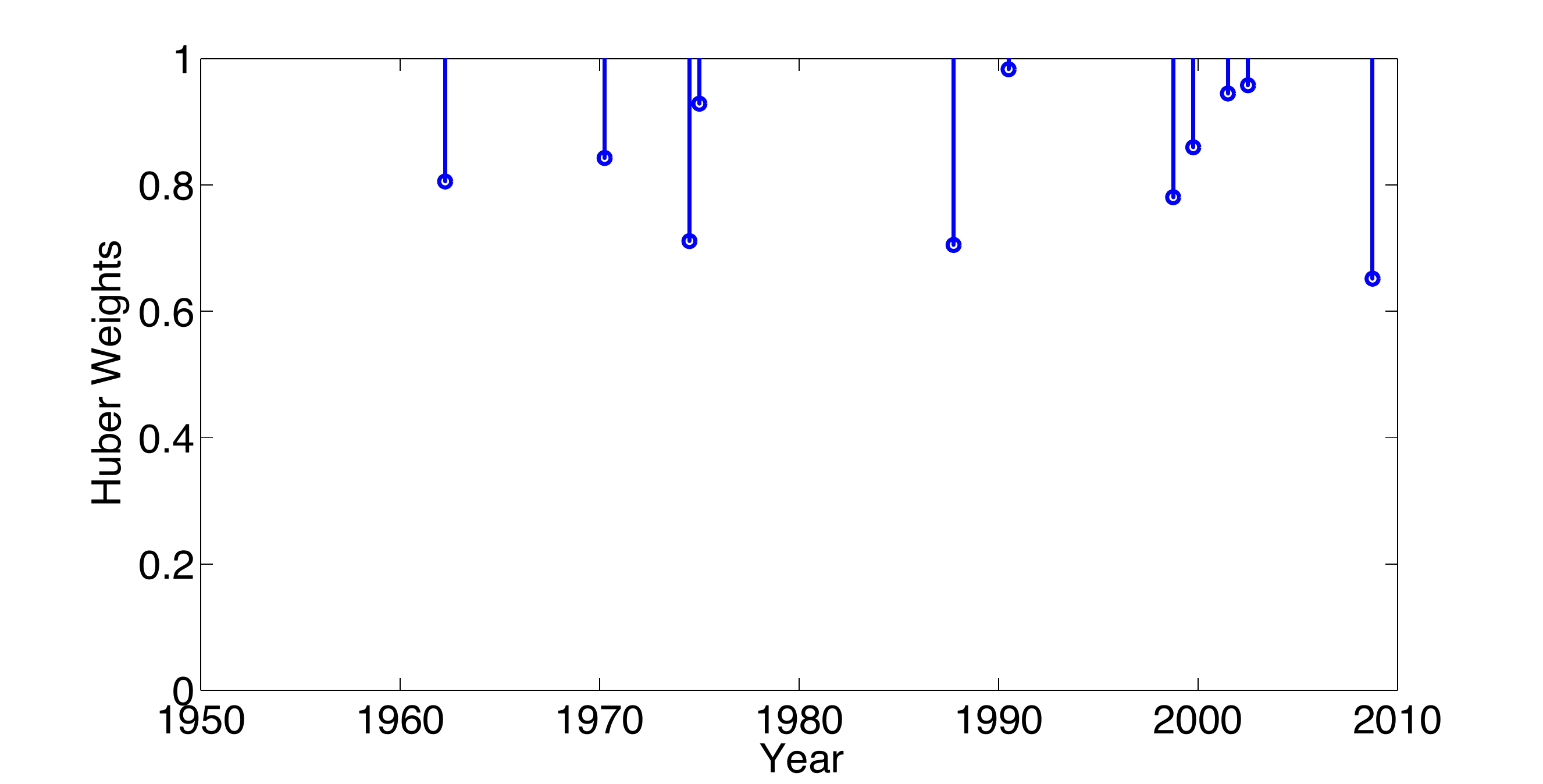}
\end{array}$
\caption{\scriptsize{
 {\bf Huber Weights under the Predictive Regression Model (\ref{pregmodel3}).} We plot the Huber weights for the predictive regression model (\ref{pregmodel3}) in the period 1950-2010.}}\label{huberweights2}
\end{figure}

%
%
%
%
%
%
%
%
%
%

\newpage

\begin{figure}[!h]
\center $ \begin{array}{c}
\includegraphics[height=2.5in,width=3.5in]{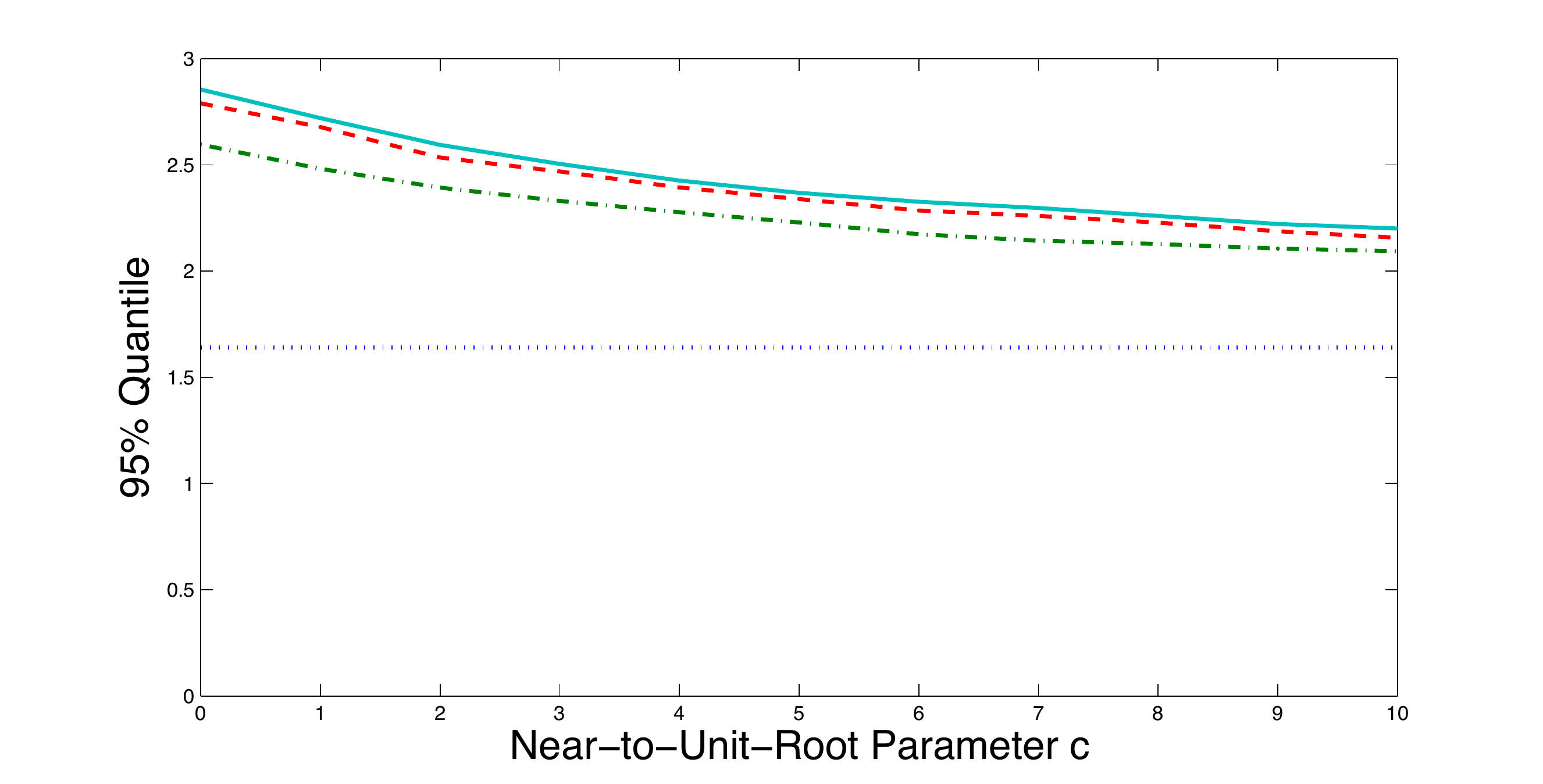} \\
\includegraphics[height=2.5in,width=3.5in]{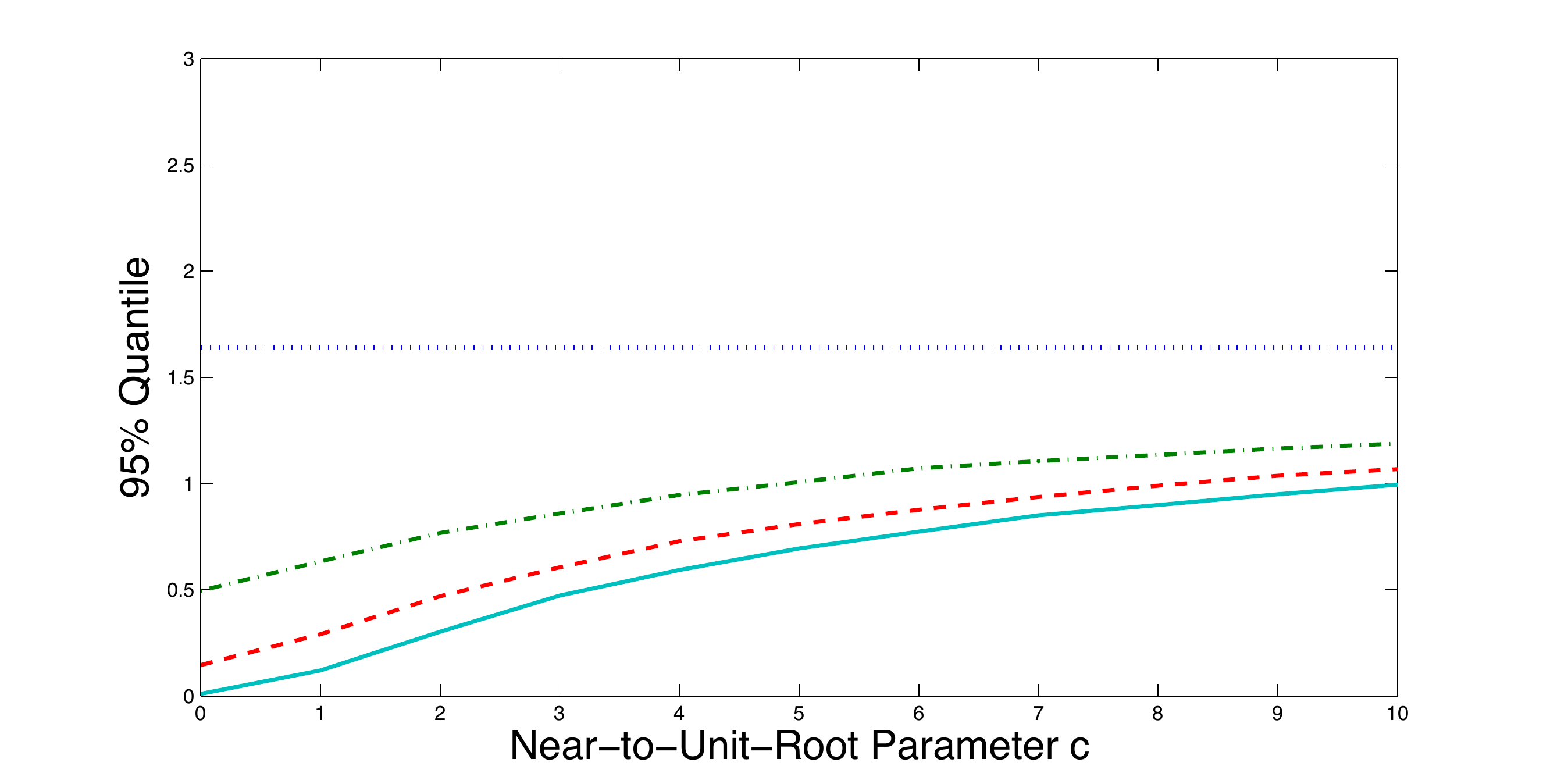} \\
\includegraphics[height=2.5in,width=3.5in]{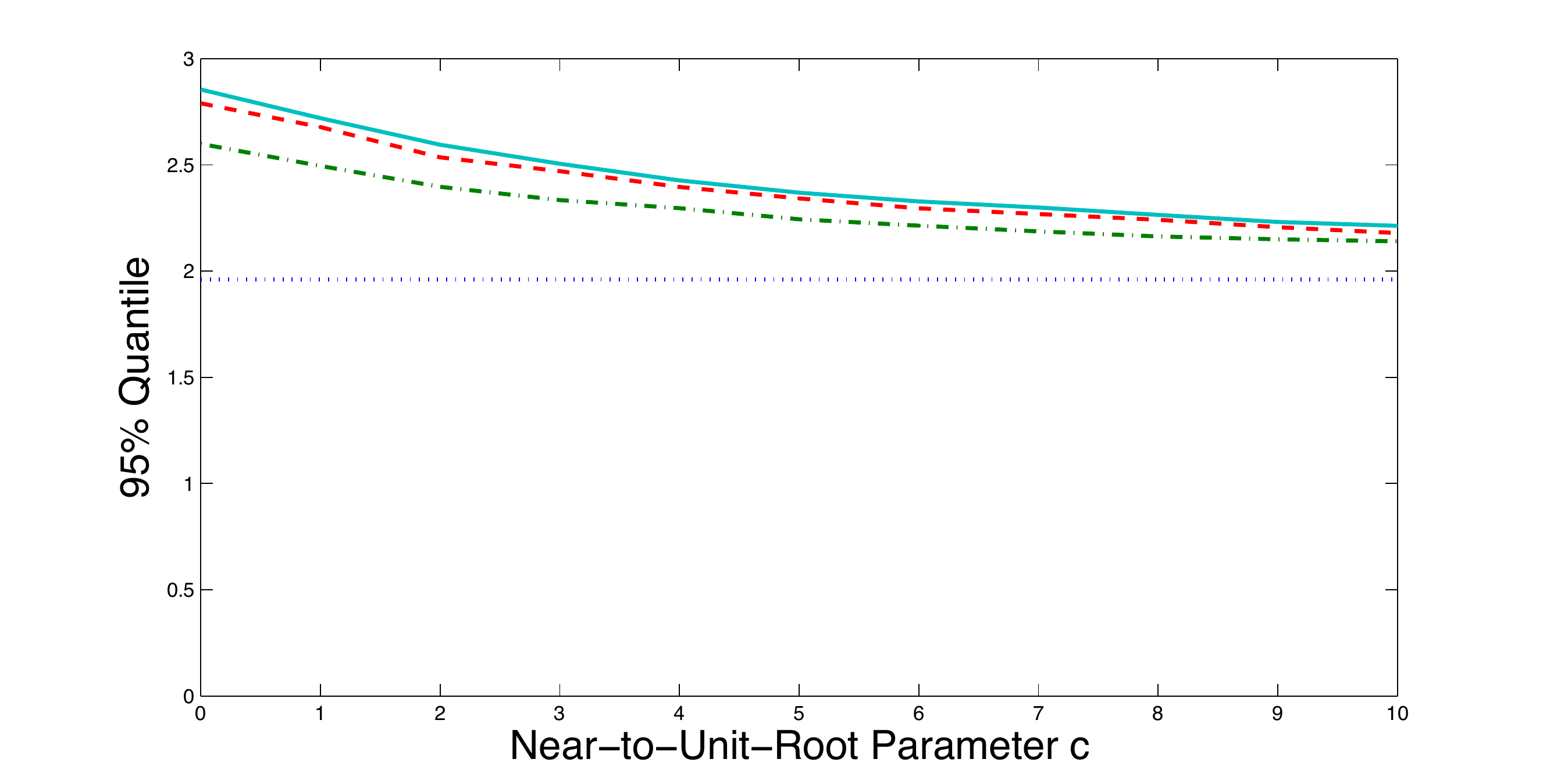} 
\end{array}$
\caption{\scriptsize{
 {\bf 0.95-Quantiles.} Let 
 $T_n^R=\sqrt{n}(\hat{\beta}_n^R-\beta_0)/\hat{\sigma}_n^R$.
 From the top to the bottom, we plot the $0.95$-quantiles of the limit distribution of statistics $T_n^R$ $-T_n^R$, and $\vert T_n^R\vert$ for different values of the degree of persistence $\rho=1-c/n$, with $c\in[0,10]$.
 The covariance parameter of the error terms is $\phi=0$ (dotted line), $\phi=-1$ (dash-dotted line), $\phi=-2$ (dashed line), and $\phi=-5$ (solid line).}}\label{asize}
\end{figure}

\newpage

\begin{table}[!h]
\begin{center}
\begin{tabular}{|c|c|c|}
\hline
$n=120$, $b=0.5$  & $0.9$ & $0.95$ \\
\hline
Subsampling ($m=10$) & $[0.0417;0.0833]$ & $[0.0417;0.0417]$ \\
Subsampling ($m=20$) & $[0.0833;0.0833]$ & $[0.0833;0.0833]$ \\
Subsampling ($m=30$) & $[0.1250;0.1250]$ & $[0.1250;0.1250]$ \\
\hline
Bootstrap ($m=10$) & $[0.0417;0.3750]$ & $[0.0417;0.3333]$ \\
Bootstrap ($m=20$) & $[0.0833;0.3333]$ & $[0.0833;0.3333]$ \\
Bootstrap ($m=30$) & $[0.1250;0.3333]$ & $[0.1250;0.2500]$ \\
\hline
\end{tabular}
\end{center}
\caption{\scriptsize{{\bf Subsampling and Block Bootstrap Lower and Upper
Bounds for the Quantile Breakdown Point.} Breakdown point of the
subsampling and the block bootstrap quantiles. The sample size is $n=120$,
and the block size is $m=10,20,30$. We assume a statistic with breakdown
point $b=0.5$ and confidence levels $t=0.9,0.95$. Lower and upper
bounds for quantile breakdown points are computed using Theorem
\ref{bsubboot}.}}\label{table1}
\end{table}


\begin{table}[!h]
\begin{center}
\begin{tabular}{|c|c|c|}
\hline
$n=120$ & $0.9$ & $0.95$ \\
\hline
Subsampling ($m=10$) & $\phantom{PPp}0.1750\phantom{PPp}$ & $\phantom{PPp}0.1250\phantom{PPp}$ \\
Subsampling ($m=20$) & $0.2500$ & $0.2083$ \\
Subsampling ($m=30$) & $0.3250$ & $0.2833$ \\
\hline
Bootstrap ($m=10$) & $0.5000$ & $0.5000$ \\
Bootstrap ($m=20$) & $0.5000$ & $0.5000$ \\
Bootstrap ($m=30$) & $0.4250$ & $0.3583$ \\
\hline
\end{tabular}
\end{center}
\caption{\scriptsize{{\bf Robust Subsampling and Robust Block Bootstrap for the studentized Statistic $T_n$.} Breakdown point of the
robust subsampling and the robust block bootstrap quantiles for the studentized statistic $T_n$, in the predictive regression model (\ref{pregmodel1})-(\ref{pregmodel2}). The sample size is $n=120$,
and the block size is $m=10,20,30$. The quantile breakdown points are computed using Theorem
\ref{ftheo}.}}\label{table111}
\end{table}

\begin{table}
\begin{center}
\begin{tabular}{|c|c|c|c|c|c|}
\hline
$\phantom{P}$  & $1980-1995$ & $1985-2000$ & $1990-2005$ & $1995-2010$  \\
\hline
Bias-Corrected  & $0.0292^{(\ast\ast)}$ & $0.0167^{(\ast\ast)}$ & $0.0191^{(\ast\ast)}$ & $0.0156$ \\
\hline
Bonferroni  & $0.0236^{(\ast\ast)}$ & $0.0134^{(\ast\ast)}$ & $0.0117^{(\ast)}$ & $0.0112$ \\
\hline
Subsampling  & $0.0430^{(\ast\ast)}$ & $0.0175$ & $0.0306^{(\ast\ast)}$ & $0.0355^{(\ast\ast)}$ \\
\hline
R.Subsampling  & $0.0405^{(\ast\ast)}$ & $0.0174^{(\ast\ast)}$ & $0.0245^{(\ast\ast)}$ & $0.0378^{(\ast\ast)}$ \\
\hline
\end{tabular}
\end{center}
\caption{\scriptsize{{\bf Point Estimates of Parameter $\beta$.} We report the point estimates of the parameter $\beta$ in the predictive regression model (\ref{pregmodel31}) for the subperiods, 1980-1995, 1985-2000, 1990-2005 and 1995-2010, all consisting of $180$ observations. In the second and third line we consider the bias-corrected method and the Bonferroni approach, respectively. In the fourth and fifth line we consider the conventional subsampling and our robust subsampling
respectively. $(\ast)$ and $(\ast\ast)$ mean rejection at $10\%$ and $5\%$ significance level, respectively.}}\label{tablef1}
\end{table}

\begin{table}
\begin{center}
\begin{tabular}{|c|c|c|c|c|c|}
\hline
$\phantom{P}$  & $1991-2006$ & $1992-2007$ & $1993-2008$ & $1994-2009$ & $1995-2010$ \\
\hline
Subsampling   & $0.0369^{(\ast\ast)}$ & $0.0402^{(\ast\ast)}$ & $0.0454^{(\ast\ast)}$ & $0.0368^{(\ast)}$ & $0.0415^{(\ast\ast)}$ \\
\hline
R.Subsampling  & $0.0368^{(\ast\ast)}$ & $0.0402^{(\ast\ast)}$ & $0.0437^{(\ast\ast)}$ & $0.0366^{(\ast\ast)}$ & $0.0412^{(\ast\ast)}$ \\
\hline
\end{tabular}

$\phantom{P}$

\begin{tabular}{|c|c|c|c|c|c|}
\hline
$\phantom{P}$  & $1991-2006$ & $1992-2007$ & $1993-2008$ & $1994-2009$ & $1995-2010$ \\
\hline
Subsampling  & $0.4700^{(\ast\ast)}$ & $0.4648^{(\ast\ast)}$ & $0.4968^{(\ast\ast)}$ & $0.3859^{(\ast\ast)}$ & $0.3993^{(\ast\ast)}$ \\
\hline
R.Subsampling  & $0.4821^{(\ast\ast)}$ & $0.4771^{(\ast\ast)}$ & $0.5276^{(\ast\ast)}$ & $0.3932^{(\ast\ast)}$ & $0.4083^{(\ast\ast)}$ \\
\hline
\end{tabular}
\end{center}
\caption{\scriptsize{{\bf Point Estimates of Parameters $\beta_1$ and $\beta_2$.} We report the point estimates of parameters $\beta_1$ (fist table) and $\beta_2$ (second table) in the predictive regression model (\ref{pregmodel3f}) for the subperiods
1991-2006, 1992-2007, 1993-2008, 1994-2009 and 1995-2010, all consisting of $180$ observations. In the second and third line we consider the conventional subsampling and our robust subsampling, respectively. 
$(\ast)$ and $(\ast\ast)$ mean rejection at $10\%$ and $5\%$ significance level, respectively.}}\label{tablef23}
\end{table}

\newpage

\begin{table}
\begin{center}
\begin{tabular}{|c|c|c|c|c|}
\hline
$\phantom{P}$  & $1950-1995$ & $1955-2000$ & $1960-2005$ & $1965-2010$  \\
\hline
Subsampling  & $0.0724^{(\ast\ast)}$ & $0.0301$ & $0.0467^{(\ast\ast)}$  & $0.0480^{(\ast)}$ \\
\hline
R.Subsampling  & $0.0721^{(\ast\ast)}$ & $0.0305^{(\ast\ast)}$ & $0.0474^{(\ast\ast)}$ & $0.0488^{(\ast\ast)}$  \\
\hline
\end{tabular}

$\phantom{P}$

%
\begin{tabular}{|c|c|c|c|c|}
\hline
$\phantom{P}$  & $1950-1995$ & $1955-2000$ & $1960-2005$ & $1965-2010$  \\
\hline
Subsampling  & $-0.1509$ & $-0.2926^{(\ast\ast)}$ & $-0.2718^{(\ast\ast)}$ & $-0.2187$  \\
\hline
R.Subsampling  & $-0.1532^{(\ast\ast)}$ & $-0.2920^{(\ast\ast)}$ & $-0.2701^{(\ast\ast)}$ & $-0.2173^{(\ast\ast)}$  \\
\hline
\end{tabular}
\end{center}
\caption{\scriptsize{{\bf Point Estimates of Parameters $\beta_1$ and $\beta_2$.} We report the point estimates of parameters $\beta_1$ (first table) and $\beta_2$ (second table) in the predictive regression model (\ref{pregmodel3}) for the subperiods
1950-1995, 1955-2000, 1960-2005 and 1965-2010, all consisting of $180$ observations.In the second and third line we consider the conventional subsampling and our robust subsampling, respectively. 
$(\ast)$ and $(\ast\ast)$ mean rejection at $10\%$ and $5\%$ significance level, respectively.}}\label{tablef45}
\end{table}

\begin{table}
\begin{center}
\begin{tabular}{|l|c|c|c|}
\hline
$\phantom{P}$  & $R^2_{OS}$ & $R^2_{OS,OLS}$ & $R^2_{OS,ROB}$  \\
\hline
Shiller  & $\hphantom{-}0.0051$ & $\hphantom{-}0.0351$ & $\hphantom{-}0.0404$   \\
\hline
Bollerslev et al.  & $\hphantom{-}0.0140$ & $\hphantom{-}0.0437$ & $\hphantom{-}0.0570$   \\
\hline
Santos and Veronesi  & $\hphantom{-}0.0113$ & $-0.0389$ & $-0.0273$   \\
\hline
\end{tabular}
\end{center}
\caption{\scriptsize{{\bf Out-of-Sample $R^2$ Statistics.} We report the out-of-sample $R^2$ statistics for the single predictor model introduced in Section \ref{spm1} (Shiller), and the two-predictor models analyzed in Sections \ref{dyvrp} and \ref{dyli} (Bollerslev et al., and Santos and Veronesi), respectively.}}\label{tableos}
\end{table}

\end{document}